    \newcommand{\href}[2]{#2}
\newtheorem{theorem}{Theorem}[section]
\newtheorem{corollary}[theorem]{Corollary}
\newtheorem{proposition}[theorem]{Proposition}
\newtheorem{definition}[theorem]{Definition}
\newtheorem{remark}[theorem]{Remark}
\numberwithin{equation}{section}  
  \newcounter{mnote}
  \let\oldmarginpar\marginpar
    \renewcommand\marginpar[1]{\-\oldmarginpar[\raggedleft\footnotesize #1]%
    {\raggedright\footnotesize #1}}
\definecolor{myblue}{rgb}{0.2,0.2,0.7}
\definecolor{mygreen}{rgb}{0,0.6,0}
\definecolor{mycyan}{rgb}{0,0.6,0.6}
\definecolor{myred}{rgb}{0.9,0.2,0.2}
\definecolor{mymagenta}{rgb}{0.9,0.2,0.9}
\definecolor{mywhite}{rgb}{1.0,1.0,1.0}
\definecolor{myblack}{rgb}{0.0,0.0,0.0}
\newcommand{\beq}{\begin{equation}}
\newcommand{\eeq}{\end{equation}}
\newcommand{\beqa}{\begin{eqnarray}}
\newcommand{\eeqa}{\end{eqnarray}}
\newcommand{\g}{\mbox{\textsf{g}}}    
\newcommand{\cH}{{\mathcal H}}
\newcommand{\cL}{{\mathcal L}}
\newcommand{\cM}{{\mathcal M}}
\newcommand{\cT}{{\mathcal T}}
\newcommand{\bg}{{\bf g}}
\newcommand{\bj}{{\bf j}}
\newcommand{\bv}{{\bf v}}
\newcommand{\bw}{{\bf w}}
\def\ee{\epsilon}
\def\la{\lambda}
\def\La{\Lambda}
\newcommand{\ol}{\overline}
\begin{document}

\title{Non-uniqueness of solutions to the conformal formulation}

\author[M. Holst]{Michael Holst}
\email{mholst@math.ucsd.edu}

\author[C. Meier]{Caleb Meier}
\email{meiercaleb@gmail.com}

\address{Department of Mathematics\\
         University of California San Diego\\ 
         La Jolla CA 92093}

\thanks{MH was supported in part by NSF Awards~0715146 and 0915220,
and by DOD/DTRA Award HDTRA-09-1-0036.}
\thanks{CM was supported in part by NSF Award~0715146.}

\date{\today}

\keywords{Nonlinear elliptic equations,
Einstein constraint equations,
Liapunov-Schmidt method,
bifurcation theory,
Implicit Function Theorem
}

\begin{abstract}
It is well-known that solutions to the conformal formulation of the 
Einstein constraint equations are unique in the cases of
constant mean curvature (CMC) and near constant mean curvature (near-CMC).
However, the new far-from-constant mean curvature (far-from-CMC) 
existence results due to Holst, Nagy, and Tsogtgerel in 2008, 
to Maxwell in 2009, and to Dahl, Gicquaud and Humbert in 2010, 
are based on degree theory rather than on the (uniqueness-providing)
contraction arguments that had been used for all non-CMC existence 
results prior to 2008.
In fact, Maxwell demonstrated in 2011 that solutions are 
non-unique in the far-from-CMC case for certain types of low-regularity
mean curvature.
In this article, we investigate uniqueness properties of solutions to the 
Einstein constraint equations on closed manifolds using tools from 
bifurcation theory.
For positive, constant scalar curvature and constant mean curvature, 
we first demonstrate existence of a critical energy density for the 
Hamiltonian constraint with unscaled matter sources.
We then show that for this choice of energy density, the linearization of the 
elliptic system develops a one-dimensional kernel in both the CMC and 
non-CMC (near and far) cases.
Using Liapunov-Schmidt reduction and standard tools from nonlinear analysis,
we demonstrate that solutions to the conformal formulation with unscaled 
data are non-unique by determining an explicit solution curve, and by
analyzing its behavior in the neighborhood of a particular solution.  
\end{abstract}

\maketitle


\vspace*{-1.2cm}
{\scriptsize
\tableofcontents
}

\section{Introduction}

In this paper we demonstrate that solutions to the
Einstein constraint equations on a 3-dimensional closed manifold $(\cM,\hat{g}_{ab})$ with no 
conformal killing field are non-unique.
More specifically, we show that solutions to the conformal
formulation of the constraint equations with an unscaled matter source on $(\cM,\hat{g})$ exhibit non-uniqueness in the case that the scalar curvature is positive and constant.
Letting $\hat{k}_{ab}$ be a $(0,2)$ tensor and $\hat{R}$ and $\hat{D}$
be the scalar curvature and connection associated with $\hat{g}_{ab}$, the constraint equations take
the form
\begin{align}
&\hat{R}+\hat{k}^2-\hat{k}^{ab}\hat{k}_{ab} = 2\kappa\hat{\rho},\label{eq1:3july12} \\
&\hat{D}^a\hat{k} +\hat{D}_b\hat{k}^{ab} +\kappa \hat{j}^a = 0. \label{eq2:3july12}
\end{align}
Equation \eqref{eq1:3july12} is known as the {\bf Hamiltonian Constraint}
and \eqref{eq2:3july12} is known as the {\bf momentum constraint}.

Equations \eqref{eq1:3july12} and \eqref{eq2:3july12} form a system of coupled elliptic partial differential equations.
When one attempts
to solve the constraint equations they are faced with the problem of having twelve pieces of initial data and only four constraints.
One solution to this problem is to attempt to parametrize solutions to \eqref{eq1:3july12} and \eqref{eq2:3july12} by formulating the constraints so that eight pieces of initial data are freely specifiable while four are determined by \eqref{eq1:3july12}-\eqref{eq2:3july12}.
The conformal transverse traceless (CTT) decomposition and the conformal thin sandwich method (CTS method) are standard ways of doing this.
The extended conformal thin sandwich method (XCTS method ) is popular among numerical relativists and
reformulates \eqref{eq1:3july12} and \eqref{eq2:3july12} as a coupled system of 5 elliptic equations.
In the CTT method one decomposes $\hat{k}_{ab}$ into its trace or mean curvature and trace free part and then scales this trace free tensor, the metric $\hat{g}_{ab}$ and the source terms $\hat{\rho}$ and $\hat{j}$ by judicious choices of some power of a positive, smooth function $\phi$.
The choice of scaling power for each term is typically made to simplify the analysis of the resulting system.
In particular, one chooses powers to eliminate terms involving $(D_a\phi)/ \phi$ and so that the system decouples when the mean curvature is constant.

It is well-known that solutions to the CTT formulation of the constraint equations with scaled data sources are unique in the event that the mean curvature is constant (known as the ``CMC case''), 
or near constant (the ``near-CMC case'');
cf.~\cite{JI95,IM96,ACI08,HNT07a,HNT07b}.
Prior to 2008, all non-CMC existence results were only possibly for the near-CMC case, and were established using contraction arguments, which provided uniqueness for free once existence was established.
However, beginning in 2008 with the first true ``far-from-CMC''
(the non-CMC case without near-CMC restrictions) existence result 
in~\cite{HNT07a}, all far-from-CMC results to 
date~\cite{HNT07a,HNT07b,M09} are based on a variation of the 
Schauder Fixed-Point Theorem.
This also includes the more recent work~\cite{DGH10a}, 
which uses the Schauder framework from~\cite{HNT07a,HNT07b,M09} 
as part of a pseudo-variational argument.
As a result, little is known about uniqueness of far-from-CMC solutions.
In fact, in 2011 Maxwell demonstrated in~\cite{MaD11} that solutions of the 
CTT formulation of the constraint equations are non-unique in the far-from-CMC
case for certain families of low regularity mean curvatures.
However, as noted by Maxwell in \cite{MaD11}, the discontinuous mean 
curvature functions considered by Maxwell in \cite{MaD11} fall outside of the
best existing non-CMC rough solution theory established in~\cite{HNT07b}.

In \cite{PY05}, Pfeiffer and York provided numerical evidence for non-uniqueness of the XCTS method on an asymptotically Euclidean manifold.
In \cite{BOP07}, Baumgarte, O'Murchadha, and Pfeiffer conjectured that the non-uniqueness demonstrated by Pfeiffer and York was related to the fact
that certain terms in the momentum constraint related to the lapse function have the ``wrong sign", which prevents an application
of the maximum principle.
To support their claim, the authors of \cite{BOP07} analyzed a simplified system corresponding to a spherically symmetric constant density star and explicitly constructed two branches of solutions.
In their analysis they proved that solutions to the
Hamiltonian constraint \eqref{eq1:3july12} with an unscaled matter source are non-unique.
Then in \cite{DW07}, Walsh generalized the work in~\cite{BOP07} by applying a Liapunov-Schmidt reduction to both the Hamiltonian constraint with an unscaled matter source and
to the XCTS system on an asymptotically Euclidean manifold.
However, Walsh relied on the assumption of the existence of a critical density for which the linearization of these two systems developed a one-dimensional kernel.
Here we extend the work of Walsh by applying a Liapunov-Schmidt reduction to the CTT formulation of the constraint equations on a closed manifold.
We explicitly construct a critical, constant density in the event that the scalar curvature is positive and constant and the transverse traceless tensor has constant magnitude.
For this particular density, we then show that solutions to the CTT formulation with an unscaled density are non-unique.

As in \cite{BOP07,DW07}, we consider a less standard conformal formulation of the constraints by allowing unscaled matter sources $\rho$ and $\bj$.
However, as opposed to considering the CTS and XCTS formulations
as in \cite{PY05,BOP07,DW07}, we consider the CTT formulation.
By decomposing our initial data
\begin{align}
\hat{k}_{ab} = \hat{l}_{ab}+\frac13\hat{g}_{ab}\hat{\tau},
\end{align}
where $\hat{\tau}  = \hat{k}_{ab}\hat{g}^{ab}$ is the trace and $\hat{l}_{ab}$ is the traceless part,
making the following conformal rescaling
\begin{align}
\hat{g}_{ab} = \phi^4g_{ab}, \quad \hat{l}_{ab} = \phi^{-10}l^{ab}, \quad \hat{\tau} = \tau, 
\end{align}  
and then decomposing
\begin{align}
l_{ab} = (\sigma_{ab}+(\cL \bw)_{ab}),
\end{align}
where $D_a\sigma^{ab} = 0$ and 
$$
(\cL \bw)^{ab} = D^aw^b + D^bw^a-\frac23(D_cw^c)g^{ab}
$$
is the {\bf conformal Killing operator},
we obtain the following unscaled conformal reformulation of \eqref{eq1:3july12} and \eqref{eq2:3july12} that
we will analyze
\begin{align}\label{eq3:3july12}
-\Delta \phi + &\frac18R\phi +\frac{1}{12} \tau^2\phi^5 -  \frac18(\sigma_{ab}+(\cL \bw)_{ab})(\sigma^{ab}+(\cL\bw)^{ab})\phi^{-7}-2\pi\rho\phi^5=0,\\
&-D_b(\cL\bw)^{ab}+\frac23D^a \tau \phi^6 + \kappa j^a\phi^{10}=0.\nonumber
\end{align}

Our non-uniqueness results for \eqref{eq3:3july12} are of interest for a number or reasons.
Most immediately, our analysis shows that the formulation \eqref{eq3:3july12} is unfavorable due to the non-uniqueness of solutions.
Therefore, for a given system, if the CTT formulation with a scaled matter source leads to a set of constraints that is suitable for analysis, which it usually does, then one should use the scaled formulation.
However, it is not always the case that the conformal formulation with scaled matter sources is the ideal formulation for a given source.
In the case of the Einstein-scalar field system, the conformal formulation that is most amenable to analysis takes on a form very similar to the system \eqref{eq3:3july12} \cite{CIP07}.
In addition, it is the hope of the authors that these results will provide additional insight into the non-uniqueness phenomena associated with the CTT formulation  in the far-from-CMC case \cite{MaD11} and with the non-uniqueness phenomena analyzed by Pfeiffer and York~\cite{PY05}, by Walsh~\cite{DW07} and by Baumgarte, O'Murchadha, and Pfeiffer~\cite{BOP07}.
In particular, the analysis conducted in this article clearly demonstrates the effect that terms with the ``wrong sign", as discussed in~\cite{BOP07}, have on the non-uniqueness of the conformal formulations of the constraints.
In the case of \eqref{eq3:3july12}, the negative sign in front of the term $2\pi\rho\phi^5$ is undesirable given that it prevents the semilinear portion of the Hamiltonian constraint from being monotone and the corresponding energy from being convex.
By a maximum principle argument, we will see in section~\ref{sec3:14july12} that it is this term that directly contributes to the non-uniqueness properties of \eqref{eq3:3july12}.

The rest of this paper is organized as follows.
In section~\ref{sec1:14july12} we introduce the function spaces that we will use and some basic concepts from functional analysis.
Then we discuss the Liapunov-Schmidt reduction that we use to prove non-uniqueness.
The statements of the main results of this paper can be found in section~\ref{sec2:14july12}.
The remainder of this paper is then devoted to proving these results.
The foundation for our argument is developed in sections~\ref{sec3:14july12} and \ref{sec1:29jun12}.
In section~\ref{sec3:14july12} we demonstrate the existence of a critical, constant density $\rho_c$ such that 
if $g_{ab}$ has positive, constant scalar curvature, $|\sigma|$ is constant and $ \bj^a=0$, the Hamiltonian constraint in \eqref{eq3:3july12} will have a positive solution if $\rho \le \rho_c$ and will have no positive solution if $\rho > \rho_c$.
Then in section~\ref{sec1:29jun12} we use the properties of $\rho_c$ to show that there exists a function $\phi_c$ at which
the linearizations of the uncoupled Hamiltonian operator (CMC case) and coupled system (non-CMC case) have one-dimensional kernels.
The existence of a one-dimensional kernel then allows us to apply the Liapunov-Schmidt reduction in section~\ref{sec4:14july12} in the CMC case and in section~\ref{sec5:14july12} in the non-CMC case.
In particular, in section~\ref{sec4:14july12} we determine an explicit solution curve for \eqref{eq3:3july12} that goes through the point $(\phi_c,0)$ in the CMC case.
An analysis of this curve then implies the non-uniqueness of solutions to \eqref{eq3:3july12} when the mean curvature is constant.
Similarly, in section~\ref{sec5:14july12} we also determine an explicit solution curve for the full, uncoupled system \eqref{eq3:3july12} through
a point of the form $((\phi_c,{\bf 0}),0)$.
Again, an analysis of this curve reveals non-uniqueness in the event that the mean curvature is non-constant.

\section{Preliminary Material}\label{sec1:14july12}

In this section we give a brief definition of the function spaces, norms and notation that we will use in this
article and then discuss some basic concepts from functional analysis and bifurcation theory that will be
necessary going forward.

\subsection{Banach Spaces, Hilbert Spaces and Direct Sums}

We introduce the fundamental properties of the function spaces with which we will
be working.
We will primarily be working with Banach spaces, however at times we
will need to consider these spaces as subspaces of a Hilbert space.
For convenience, we present the basic definitions of these general spaces and define the direct sum of two vector spaces, which will be
necessary in our non-uniqueness analysis.

The basic space that we will be working with is a Banach space, where
a {\bf Banach space} $X$ is a complete, normed vector space.
If the norm
$\|\cdot\|$ on $X$ is induced by an inner product, we say that $X$ is a {\bf Hilbert Space}.
 One can form new Banach spaces and Hilbert spaces from preexisting spaces by considering the direct
sum.
\begin{definition}\label{def2:24july12}
Suppose that $X_1$ and $X_2$ are Banach spaces with norms $\|\cdot\|_{X_1}$ and $\|\cdot\|_{X_2}$.
Then
the direct sum $X_1\oplus X_2$ is the vector space of ordered pairs $(x,y)$ where $x\in X_1$, $y\in X_2$ and addition and scalar multiplication
are carried out componentwise.
\end{definition}
\noindent
We have the following proposition: 
\begin{proposition}\label{prop1:20aug12}
The vector space $X_1\oplus X_2$ is a Banach space when given the norm 
\begin{align}
\|(x,y)\|_{X_1\oplus X_2} =\left(\|x\|^2_{X_1}+\|y\|^2_{X_2}\right)^{\frac12}.
\end{align}
\end{proposition}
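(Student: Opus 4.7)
The plan is to verify the two defining properties of a Banach space in turn: (i) that the proposed expression is actually a norm on the vector space $X_1\oplus X_2$, and (ii) that $X_1\oplus X_2$ is complete with respect to it. Both steps are standard, and I do not anticipate any genuine obstacle; the only point requiring any care is the triangle inequality.

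First I would check that $\|\cdot\|_{X_1\oplus X_2}$ is a norm. Non-negativity is immediate from the formula, and the definiteness property $\|(x,y)\|_{X_1\oplus X_2}=0 \iff (x,y)=(0,0)$ follows from the fact that $\|\cdot\|_{X_1}$ and $\|\cdot\|_{X_2}$ are themselves norms. Absolute homogeneity follows by pulling $|\alpha|$ out of each summand inside the square root. For the triangle inequality I would apply the Minkowski inequality on $\mathbb{R}^2$ with the Euclidean norm to the vectors $(\|x_1\|_{X_1},\|y_1\|_{X_2})$ and $(\|x_2\|_{X_1},\|y_2\|_{X_2})$, after first using the triangle inequalities in $X_1$ and $X_2$ to dominate $\|x_1+x_2\|_{X_1}\leqs\|x_1\|_{X_1}+\|x_2\|_{X_1}$ and similarly for the $X_2$-component.

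Next I would establish completeness. Given a Cauchy sequence $\{(x_n,y_n)\}$ in $X_1\oplus X_2$, the elementary estimates
\begin{equation*}
\|x_n-x_m\|_{X_1}\leqs \|(x_n,y_n)-(x_m,y_m)\|_{X_1\oplus X_2}, \qquad \|y_n-y_m\|_{X_2}\leqs \|(x_n,y_n)-(x_m,y_m)\|_{X_1\oplus X_2},
\end{equation*}
which follow by dropping one of the two non-negative terms under the square root, show that $\{x_n\}$ is Cauchy in $X_1$ and $\{y_n\}$ is Cauchy in $X_2$. By the assumed completeness of $X_1$ and $X_2$, there exist $x\in X_1$ and $y\in X_2$ with $x_n\to x$ and $y_n\to y$ in the respective norms. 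Then
\begin{equation*}
\|(x_n,y_n)-(x,y)\|_{X_1\oplus X_2}=\bigl(\|x_n-x\|_{X_1}^2+\|y_n-y\|_{X_2}^2\bigr)^{1/2}\to 0,
\end{equation*}
so $(x_n,y_n)\to(x,y)$ in $X_1\oplus X_2$, establishing completeness and hence the Banach space property.

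The hard part will essentially be only bookkeeping: nothing here requires more than the Minkowski inequality on $\mathbb{R}^2$ and the definition of completeness in $X_1$ and $X_2$. The proposition is a standard warm-up fact, included because the direct sum construction plays a structural role later when decomposing the domain of the linearized operator into a one-dimensional kernel and its complement for the Liapunov--Schmidt reduction.
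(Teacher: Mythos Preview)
Your proof is correct and follows exactly the approach the paper indicates; the paper's own proof is a one-line remark that the result follows from $\|\cdot\|_{X_1}$ and $\|\cdot\|_{X_2}$ being norms and $X_1$, $X_2$ being complete, and you have simply unpacked those two observations in full detail.
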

\begin{proof}
This follows from the fact that $\|\cdot\|_{X_1}$ and $\|\cdot\|_{X_2}$ are norms and
the spaces $X_1$ and $X_2$ are complete with respect to these norms.
\end{proof}
\noindent
We have a similar proposition for Hilbert spaces.
\begin{proposition}\label{def1:24july12}
Suppose that $\cH_1$ and $\cH_2$ are Hilbert spaces with inner products $\langle \cdot,\cdot \rangle_{\cH_1} $ and $\langle \cdot,\cdot \rangle_{\cH_2} $.
Then the direct sum $H_1\oplus H_2$ is a Hilbert space with inner product
\begin{align}
\langle (w,x),(y,z) \rangle_{\cH_1\oplus \cH_2}  =\langle w,y \rangle_{\cH_1} + \langle x,z \rangle_{\cH_2}.
\end{align}
\end{proposition}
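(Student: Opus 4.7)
The plan is to reduce the statement to the already-established Banach space result in Proposition \ref{prop1:20aug12}, with only the verification of the inner product axioms and the compatibility of the induced norm left as substantive work. Since the claim has two parts — that the proposed map $\langle \cdot,\cdot \rangle_{\cH_1 \oplus \cH_2}$ is an inner product, and that $\cH_1 \oplus \cH_2$ is complete in the induced norm — I would handle them in that order.

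First I would verify that the candidate pairing is in fact an inner product on $\cH_1 \oplus \cH_2$. Linearity in the first argument and conjugate symmetry follow immediately by splitting into the two summands and invoking the corresponding property of $\langle \cdot,\cdot\rangle_{\cH_1}$ and $\langle \cdot,\cdot\rangle_{\cH_2}$ componentwise. For positive definiteness, I would compute
\begin{equation}
\langle (w,x),(w,x) \rangle_{\cH_1 \oplus \cH_2} = \|w\|^2_{\cH_1} + \|x\|^2_{\cH_2} \geqs 0,
\end{equation}
which vanishes iff both summands vanish, i.e.\ iff $w = 0$ and $x = 0$, i.e.\ iff $(w,x) = (0,0)$.

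Second, I would observe that the norm induced by this inner product is exactly
\begin{equation}
\|(w,x)\|_{\cH_1 \oplus \cH_2} = \left(\|w\|^2_{\cH_1} + \|x\|^2_{\cH_2}\right)^{\frac12},
\end{equation}
which coincides with the norm on $\cH_1 \oplus \cH_2$ treated as a direct sum of Banach spaces in Definition \ref{def2:24july12} and Proposition \ref{prop1:20aug12}. Since every Hilbert space is in particular a Banach space, Proposition \ref{prop1:20aug12} applies and yields completeness of $\cH_1 \oplus \cH_2$ in this norm. Combining completeness with the inner product structure established in the first step gives the conclusion.

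There is essentially no obstacle here; the only step that requires any care is the verification that the inner product axioms transfer componentwise, and even this is purely formal. The entire argument is a bookkeeping exercise that exploits the componentwise definition of addition and scalar multiplication, together with the fact that the squared norm decomposes as a sum over the two factors so that completeness is inherited from Proposition \ref{prop1:20aug12}.
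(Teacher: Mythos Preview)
Your proof is correct and follows essentially the same approach as the paper: verify the inner product axioms componentwise, then note that the induced norm coincides with the direct-sum norm of Proposition~\ref{prop1:20aug12} so that completeness is inherited. The only difference is that you spell out the positive-definiteness computation explicitly, whereas the paper leaves all the axiom checks as a one-line remark.
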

\begin{proof}
That $\langle \cdot,\cdot \rangle_{\cH_1\oplus \cH_2}$ is an inner product follows from the fact
that $\langle \cdot,\cdot\rangle_{\cH_1} $ and $\langle \cdot,\cdot \rangle_{\cH_2}$ are inner products.
The expression 
$$\|(u,v),(u,v)\|_{\cH_1\oplus \cH_2} = \sqrt{\langle (u,v),(u,v) \rangle_{\cH_1\oplus \cH_2} },$$
is a norm on $\cH_1\oplus \cH_2$ that coincides with the norm in Proposition~\ref{prop1:20aug12}
in the event that the norms on $X_1$ and $X_2$ are induced by inner products.
\end{proof}
\noindent
See \cite{EZ86} for a more complete discussion about the direct sums of Banach spaces.

\subsection{Function Spaces}

Let $E$ denote a given vector bundle over
$\cM$.
In this paper
we will consider the Sobolev spaces $W^{k,p}(E)$, the space of $k$-differentiable sections $C^k(E)$,
and the H\"{o}lder spaces $C^{k,\alpha}(E)$ where $k \in \mathbb{N},~ p \ge 1, ~\alpha \in (0,1)$ and $E$ will either be the
vector bundle $\cM \times \mathbb{R}$ of scalar-valued functions or $\cT^r_s \cM$,
the space of $(r,s)$ tensors.
Note that all of these spaces with the following norm definitions are Banach spaces
and the space $W^{k,2}(E)$ is a Hilbert space for $k \in \mathbb{N}$.

Fix a smooth background metric $g_{ab}$ and let $v^{a_1,\cdots,a_r}_{b_1,\cdots,b_s}$ be a tensor of type $r+s$.
Then at a given point $x\in \cM$, we define its magnitude to be
\begin{align}\label{eq1:8july12}
|v| = (v^{a_1,\cdots,b_s}v_{a_1,\cdots,b_s})^{\frac12},
\end{align}
where the indices of $v$ are raised and lowered with respect to $g_{ab}$.
We then define the Banach space of $k$-differentiable functions  
$C^{k}(\cM\times \mathbb{R})$ with norm $\|\cdot\|_k$
to be those functions $u$ satisfying
$$
\|u\|_k = \sum_{j=0}^k \sup_{x\in\cM}|D^ju| < \infty,
$$
where $D$ is the covariant derivative associated with $g_{ab}$.
Similarly, we define the space $C^k(\cT^r_s \cM)$ of $k$-times
differentiable $(r,s)$ tensor fields to be those tensors $v$ satisfying $\|v\|_k < \infty$.

Given two points $x,y \in \cM$, we define $d(x,y)$ to be the geodesic distance between them.
Let $\alpha \in (0,1)$.
Then we may define the $C^{0,\alpha}$ H\"{o}lder seminorm for a scalar-valued function $u$ to be
$$
[u]_{0,\alpha} = \sup_{x \ne y} \frac{|u(x)-u(y)|}{(d(x,y))^{\alpha}}.
$$
Using parallel transport, this definition can be extended to $(r,s)$-tensors
$v$ to obtain the $C^{k,\alpha}$ seminorm $[u]_{k,\alpha}$ \cite{Au82}.
This leads us to the following definition of the $C^{k,\alpha}(\cM\times \mathbb{R})$ H\"{o}lder norm
$$
\|u\|_{k,\alpha} = \|u\|_k + [u]_{k,\alpha}
$$
for scalar-valued functions, and we may define the $C^{k,\alpha}(\cT^r_s \cM)$ H\"{o}lder
norm for $(r,s)$ tensors in a similar fashion.

Finally, we will also make use of the Sobolev spaces $ W^{k,p}(\cM\times \mathbb{R})$ 
and \\
$W^{k,p}(\cT^r_s \cM)$ where we assume
$k \in \mathbb{N}$ and $p \ge 1$.
If $dV_g$ denotes
the volume form associated with $g_{ab}$, then the 
$L^p$ norm of an $(r,s)$ tensor is defined to be 
\begin{align}\label{eq3:8july12}
\|v\|_p = \left( \int_{\cM} |v|^p dV_g\right)^{\frac1p}.
\end{align}  
We can then define the Banach space $W^{k,p}(\cM\times \mathbb{R})$ (resp.\ $W^{k,p}(\cT^r_s\cM)$) to be those
functions (resp.\ $(r,s)$ tensors) $v$ satisfying
$$
\|v\|_{k,p} = \left( \sum_{j=0}^k\|D^jv\|^p_p \right)^{\frac1p} < \infty.
$$

The above norms are independent of the
background metric chosen.
Indeed, given any two metrics
$g_{ab}$ and $\hat{g}_{ab}$, one can show that the norms induced
by the two metrics are equivalent.
For example, if $D$ and $\hat{D}$ are
the derivatives induced by $g_{ab}$ and $\hat{g}_{ab}$ respectively,
then there exist constants $C_1$ and $C_2$ such that
$$
C_1 \|u\|_{k,\hat{g}} \le \|u\|_{k,g} \le C_2 \|u\|_{k,\hat{g}},
$$
where $\|\cdot\|_{k,g}$ denotes the $C^k(\cM)$ norm with respect to $g$.
This holds for
the $W^{k,p}$ and $C^{k,\alpha}$ norms as well.
We also note that the above norms
are related through the Sobolev embedding theorem.
In particular, the spaces $C^{k,\alpha}$ and $W^{l,p}$ are related in the sense that
if $n$ is the dimension of $\cM$ and $u \in W^{l,p}$ and
$$
k+\alpha < l-\frac np, 
$$
then $u \in C^{k,\alpha}$.
See \cite{Au82,Au98,EH96,RP68} for a complete
discussion of the Sobolev embedding Theorem, Banach spaces on manifolds, and the above norms.
 
\subsection{Adjoints and Projection Operators}

Solutions to the coupled system \eqref{eq3:3july12} satisfy
\begin{align}\label{eq4:10july12}
F(x,{\bw}) = 0,
\end{align}
where $F:X\times Y\to Z$ is a nonlinear operator between Banach spaces.
This allows us to use basic tools from functional analysis to analyze our 
problem.
In particular, we will repeatedly need to consider the linearization
$D_xF(x,{\bw})$, its adjoint, and projections onto subspaces determined
by these operators.
Later on in the section when we introduce the
Liapunov-Schmidt reduction, we will use the kernel of the linearization
$D_xF(x_0,\bw_0)$ at a point $(x_0,\bw_0)$,
the kernel of the adjoint, and projection operators
onto these subspaces, to decompose $X$ and $Y$ in a manner that
will greatly simplify our analysis.
Here we briefly discuss the adjoint
and projection operators.
See \cite{EZ86} for a more complete discussion of these topics
and see Appendix~\ref{sec1:10july12} for a discussion of Fr\'{e}chet derivatives.

\subsubsection{The Adjoint and Properties}

Suppose that $\cH$ is a Hilbert space with inner product $\langle \cdot, \cdot \rangle$.
Then if $A:\cH \to \cH$ is a linear operator, the Riesz Representation Theorem
implies that there exists a unique operator $A^*$ that satisfies
\begin{align}\label{eq1:10july12}
\langle Ax,y \rangle = \langle x,A^*y \rangle \quad \text{for all $x,y \in \cH$}.
\end{align}
\noindent
If $R(A)$ denotes the range of $A$ and $\text{ker}(A)$ denotes the kernel, then the operator $A^*$ satisfies the following properties:
\begin{align}\label{eq2:10july12}
&{\bf 1)} \quad \text{ker}( A^*) = R(A)^{\perp}\\
&{\bf 2)} \quad (\text{ker}(A^*))^{\perp} = \overline{R(A)}.
\end{align}

\subsubsection{Projection Operators and Fredholm Operators}

Now assume that $X \subset \cH$ is a Banach space contained in a Hilbert space $\cH$.
Given a subspace $V \subset X$, the projection
$P$ onto $V$  is a bounded linear operator ${P:X \to V}$  that satisfies $P^2 = P$.
In particular, if $V$ is a finite-dimensional
subspace spanned by the orthonormal basis $\hat{v}_1,\cdots, \hat{v}_n$, then we can easily construct the projection
onto $V$ by the formula
\begin{align}\label{eq1:12july12}
Pu: \sum_{i=1}^n \langle u,\hat{v}_i\rangle \hat{v}_i,
\end{align} 
where $u \in X$ and $\langle \cdot, \cdot \rangle$ is the inner product on $\cH$.
Note that $P$ is just the
normal projection operator from $\cH$ to $V$ restricted to $X$.

We end the section by introducing one more definition that will be important in the following section.
A {\bf Fredholm operator} is a bounded linear operator $A:X\to Y$ where $X$ and $Y$ are Banach spaces such that
$\text{dim ker}(A)$ and $\text{dim ker}(A^*)$ are finite-dimensional and $R(A)$ is closed.
Given a nonlinear operator $F:U \to Y$ where $U\subset X$, we say that $F$ is a {\bf nonlinear Fredholm operator}
if it is Fr\'{e}chet differentiable on $U$ and $D_xF(x)$ is a Fredholm operator.

Notice that if $A$ is a Fredholm operator, then $\text{ker}(A^*)^{\perp} = R(A)$ and furthermore, the fact that $\text{ker}(A)$
and $\text{ker}(A^*)$ are finite dimensional allows one to define projection operators $P$ and $Q$ onto 
$\text{ker}(A)$ and $\text{ker}(A^*)$ to decompose $X$ and $Y$.
As we will see, these properties make Fredholm operators
ideal candidates for bifurcation analysis.

\subsection{Elements of Bifurcation Theory}\label{LSReduc}

We now present some basic concepts from bifurcation theory that will be essential in obtaining our non-uniqueness
results.
In particular, we give a formal definition of a bifurcation point and then present the Liapunov-Schmidt reduction.
This reduction allows one to reduce a nonlinear problem between infinite-dimensional Banach spaces
to a finite-dimensional or even scalar-valued problem.
Therefore it greatly simplifies the analysis
and will serve as a basic tool for us going forward.
The following treatment is taken from \cite{HK04} and \cite{CH82}.

Suppose that $F: U\times V \to Z$ is a mapping with open sets $U \subset X,V \subset \Lambda$,
where $X$ and $Z$ are Banach spaces and $\Lambda = \mathbb{R}$.
We let $x\in X$ and $\lambda \in \La$.
Additionally assume that $F(x,\lambda)$ is Fr\'{e}chet differentiable with respect to $x$ and $\lambda$ on $U\times V$.
We are interested in solutions to the nonlinear problem
\begin{align}\label{eq3:2july12}
F(x,\la) = 0.
\end{align}
A solution of \eqref{eq3:2july12} is a point $(x,\la) \in X\times \La$ such
that \eqref{eq3:2july12} is satisfied.

\begin{definition}
Suppose that $(x_0,\la_0)$ is a solution to \eqref{eq3:2july12}.
We say that $\la_0$ is a {\bf bifurcation point} if 
for any neighborhood $U$ of $(x_0,\la_0)$ there exists a $\la\in \La$ and $x_1, x_2 \in X$, $x_1 \ne x_2$
such that $(x_1,\la), (x_2,\la) \in U$ and $(x_1,\la)$ and $(x_2,\la)$ are both solutions to \eqref{eq3:2july12}.
\end{definition}

Given a solution $(x_0,\la_0)$ to \eqref{eq3:2july12}, we are interested in analyzing solutions to \eqref{eq3:2july12}
in a neighborhood of $(x_0,\la_0)$ to determine whether it is a bifurcation point.
One of the most useful tools for this is the Implicit Function
Theorem~\ref{thm1:2july12}.
This theorem asserts that if $D_xF(x_0,\la_0)$ is invertible, then there exists a 
neighborhood $U_1\times V_1 \subset U\times V$ and a continuous function $f: V_1\to U_1$ such that all solutions
to \eqref{eq3:2july12} in $U_1\times V_1$ are of the form $(f(\la),\la)$.
Therefore in order for a bifurcation to occur at $(x_0,\la)$,
it follows that $D_xF(x_0,\la_0)$ must not be invertible.

\subsubsection{Liapunov-Schmidt Reduction}

The following discussion is taken from \cite{HK04}.
Let $X, \Lambda$ and $Z$ be Banach spaces and assume that $U\subset X$, $V\subset \La$.
For $\la = \la_0$, we require that the mapping 
${F:U\times V \to Z}$ be a 
nonlinear Fredholm operator with respect to $x$; i.e.\ the linearization $D_xF(\cdot,\la_0)$ of $F(\cdot,\la_0):U \to Z$ 
is a Fredholm operator.
Assume that $F$ also satisfies the following assumptions:
\begin{align}\label{eq7:2july12}
&F(x_0,\la_0) = 0 \quad \text{for some $(x_0,\la_0) \in U \times V$},\\
&\text{dim ker}(D_xF(x_0,\la_0)) = \text{dim ker}(D_xF(x_0,\la_0)^*) = 1. \nonumber
\end{align}
Given that $D_xF(x_0,\lambda_0)$ has a one-dimensional kernel, 
there exists a projection operator $P:X \to X_1 = \text{ker}(D_xF(x_0,\la_0))$.
Similarly, one has the projection operator
${Q:Y\to Y_2 = \text{ker}(D_xF(x_0,\la_0)^*)}$.
This allows us to decompose $X = X_1 \oplus X_2$ and $Y = Y_1 \oplus Y_2$ where 
$Y_1 = R(D_XF(x_0,\lambda_0))$.
We will refer to the decomposition $X_1\oplus X_2$ and $Y_1\oplus Y_2$ induced by
$D_xF(x_0,\la_0)$ as the {\bf Liapunov decomposition}, and we see that  
$F(x,\lambda) = 0$ if and only if the following two equations are satisfied
\begin{align}\label{eq1:27june12}
&QF(x,\lambda) = 0,\\
&(I-Q)F(x,\lambda) = 0. \nonumber
\end{align}

For any $x\in X$, we can write $x = v+w$, where $v= Px$ and $w = (I-P)x$.
Define $G: U_1\times W_1 \times V_1 \to Y_1$ by
\begin{align}\label{eq5:2july12}
&G(v,w,\la) = (I-Q)F(v+w,\la), \quad \text{where}\\
&U_1\subset X_1,\hspace{3mm} W_1 \subset X_2, \hspace{3mm}   V_1 \subset \mathbb{R} \quad \text{and} \nonumber \\
&v_0 = Px_0 \in U_1, \quad w_0 = (I-P)x_0 \in W_1, \nonumber
\end{align}
and $U_1, W_1$ are neighborhoods such that $U_1 + W_1 \subset U \subset X$.

Then the definition of $G(v,w,\la)$ implies that $G(v_0,w_0,\la_0) = 0$ and our choice of function spaces ensures that
$$D_wG(v_0,w_0,\la_0) =(I- Q)D_xF(x_0,\la_0):X_2 \to Y_1,$$
is bijective.
The Implicit Function Theorem then implies that there exist
neighborhoods $U_2\subset U_1, W_2\subset W_1$ and $V_2\subset V_1$ and
a continuous function  
\begin{align}\label{eq9:6july12}
&\psi:U_2\times V_2 \to W_2 \quad \text{such that all solutions to $G(v,w,\la) = 0$}\\
&\text{ in $U_2\times W_2\times V_2$ $\quad$ are of the form $G(v,\psi(v,\la), \la ) = 0.$}\nonumber 
\end{align}
Insertion of the function $\psi(v,\la)$ into the second equation of \eqref{eq1:27june12} yields
a finite-dimensional problem
\begin{align}\label{eq6:2july12}
\Phi(v,\la) = QF(v+\psi(v,\la),\la) = 0.
\end{align}
We observe that finding solutions $(v,\la)$ to \eqref{eq6:2july12} is equivalent to finding
solutions to $F(x,\la) = 0$ in a neighborhood of $(x_0,\la_0)$.
We will refer to the finite-dimensional problem \eqref{eq6:2july12}
as the {\bf Liapunov-Schmidt reduction} of \eqref{eq3:2july12}.

Given that $\text{ker}(D_xF(x_0,\lambda_0))$ is spanned by $\hat{v}_0$, then we can write $v = s\hat{v}_0 +v_0$.
Substituting this into \eqref{eq6:2july12} we obtain
\begin{align}\label{eq2:27jun12}
\Phi(s,\la) = QF(s\hat{v}_0+v_0+\psi(s\hat{v}_0+v_0,\la),\lambda) = 0.
\end{align}  
Using the reduction \eqref{eq2:27jun12} and another application of the Implicit Function Theorem, 
one obtains the following theorem taken from \cite{HK04}, which allows us to determine
a unique solution curve through the point $(x_0,\la_0)$.
We also include the proof for completeness.

\begin{theorem}\label{thm1:29apr12}
Assume $F:U\times V \to Z$ is continuously differentiable on ${U\times V \subset X \times \mathbb{R}}$ and
that assumptions \eqref{eq7:2july12} hold.
Additionally we assume that 
\begin{align}\label{eq3:29apr12}
D_{\lambda}F(x_0,\lambda_0) \notin R(D_xF(x_0,\lambda_0)).
\end{align}
Then there is a continuously differentiable curve through $(x_0,\lambda_0)$; that is, there exists
\begin{align}\label{eq4:29apr12}
\{(x(s),\lambda(s))~|~s \in (-\delta, \delta),~ (x(0),\lambda(0)) = (x_0,\lambda_0)\},
\end{align}
such that
\begin{align}\label{eq5:29apr12}
F(x(s), \lambda(s)) = 0 \quad \text{for $s\in (-\delta, \delta)$},
\end{align}
and all solutions of $F(x,\lambda) = 0$ in a neighborhood of $(x_0, \lambda_0)$ belong to the curve
\eqref{eq4:29apr12}.
\end{theorem}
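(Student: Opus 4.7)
The plan is to apply the Liapunov--Schmidt machinery set up in Section~\ref{LSReduc} and reduce the problem to a scalar equation that can be solved via the Implicit Function Theorem. Using the hypotheses \eqref{eq7:2july12}, let $\hat{v}_0$ span $\ker D_xF(x_0,\lambda_0)$ and let $P,Q$ be the projections onto $X_1=\ker D_xF(x_0,\lambda_0)$ and $Y_2=\ker D_xF(x_0,\lambda_0)^{*}$, respectively. Since $D_xF(x_0,\lambda_0)$ is Fredholm, we have the splittings $X=X_1\oplus X_2$ and $Y=Y_1\oplus Y_2$ with $Y_1=R(D_xF(x_0,\lambda_0))$. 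Writing $x = s\hat{v}_0 + v_0 + w$ with $v_0 = Px_0$ and $w\in X_2$, the reduction \eqref{eq9:6july12} produces a continuously differentiable function $\psi$ with $w = \psi(s\hat{v}_0+v_0,\lambda)$ solving $(I-Q)F(x,\lambda)=0$. It remains to solve the bifurcation equation
\begin{equation*}
\Phi(s,\lambda) \;=\; QF\bigl(s\hat{v}_0 + v_0 + \psi(s\hat{v}_0+v_0,\lambda),\lambda\bigr) \;=\; 0.
\end{equation*}

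Next I would apply the Implicit Function Theorem to $\Phi$ at $(0,\lambda_0)$, solving for $\lambda$ as a function of $s$. We have $\Phi(0,\lambda_0) = QF(x_0,\lambda_0)=0$, and the key computation is
\begin{equation*}
D_\lambda\Phi(0,\lambda_0) \;=\; Q\bigl[D_xF(x_0,\lambda_0)\,D_\lambda\psi(v_0,\lambda_0) + D_\lambda F(x_0,\lambda_0)\bigr].
\end{equation*}
The first term lies in $R(D_xF(x_0,\lambda_0))=Y_1=\ker Q$ and so is annihilated, giving $D_\lambda\Phi(0,\lambda_0) = Q\,D_\lambda F(x_0,\lambda_0)$. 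The transversality hypothesis \eqref{eq3:29apr12} says exactly that $D_\lambda F(x_0,\lambda_0)\notin Y_1=\ker Q$, so $Q\,D_\lambda F(x_0,\lambda_0)\neq 0$ in the one-dimensional space $Y_2$; hence $D_\lambda\Phi(0,\lambda_0):\mathbb{R}\to Y_2$ is an isomorphism between one-dimensional spaces. The IFT therefore yields $\delta>0$ and a $C^1$ function $\lambda:(-\delta,\delta)\to V$ with $\lambda(0)=\lambda_0$ and $\Phi(s,\lambda(s))=0$.

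I would then define the curve $x(s) = s\hat{v}_0 + v_0 + \psi(s\hat{v}_0+v_0,\lambda(s))$, which is continuously differentiable by the chain rule and satisfies both $QF(x(s),\lambda(s))=0$ (from the reduced equation) and $(I-Q)F(x(s),\lambda(s))=0$ (from the definition of $\psi$); hence $F(x(s),\lambda(s))=0$ for $s\in(-\delta,\delta)$ and $(x(0),\lambda(0))=(x_0,\lambda_0)$. For the final uniqueness claim, any solution $(x,\lambda)$ near $(x_0,\lambda_0)$ decomposes as $x = s\hat{v}_0 + v_0 + w$ with $w$ forced to equal $\psi(s\hat{v}_0+v_0,\lambda)$ by the local uniqueness in the reduction \eqref{eq9:6july12}, and then $\lambda$ is forced to equal $\lambda(s)$ by the local uniqueness in the IFT applied to $\Phi$; thus every nearby solution lies on the curve.

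The routine content sits in the Liapunov--Schmidt setup, which Section~\ref{LSReduc} already provides, so the real work is the two-line verification that $D_\lambda\Phi(0,\lambda_0)\ne 0$. The only delicate point I expect is bookkeeping: ensuring that $v_0=Px_0$ is handled consistently when we shift the $s$ variable, and keeping track of the fact that $\Phi$ takes values in the one-dimensional space $Y_2$ (so the nonvanishing of $Q\,D_\lambda F(x_0,\lambda_0)$ is genuinely equivalent to invertibility). No further structure of $F$ beyond Fredholmness and transversality is needed, which is why the argument applies verbatim to the CMC and non-CMC reductions of \eqref{eq3:3july12} in Sections~\ref{sec4:14july12}--\ref{sec5:14july12}.
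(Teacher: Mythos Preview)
Your proof is correct and follows essentially the same approach as the paper's: Liapunov--Schmidt reduction, the computation $D_\lambda\Phi(0,\lambda_0)=QD_\lambda F(x_0,\lambda_0)\neq 0$ via the transversality hypothesis \eqref{eq3:29apr12}, and an application of the Implicit Function Theorem to solve for $\lambda$ in terms of the kernel parameter. The only cosmetic difference is that the paper applies the IFT to $\Phi(v,\lambda)$ to obtain $\gamma(v)$ and substitutes $v=s\hat v_0+v_0$ afterward, whereas you parametrize by $s$ from the outset; this is merely a reparametrization and changes nothing of substance.
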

\begin{proof}
Let $x_0 = v_0 + w_0 = v_0 +\psi(v_0,\la_0)$.
Differentiating \eqref{eq6:2july12} with respect to $\la$ we obtain
\begin{align}\label{eq1:6july12}
&D_{\la}\Phi(v_0,\la_0) = \\
&QD_xF(x_0,\la_0)D_{\la}\psi(v_0,\la_0) + QD_{\la}F(x_0,\la_0) =  QD_{\la}F(x_0,\la_0) \ne 0,\nonumber
\end{align}
where \eqref{eq1:6july12} is nonzero due to the extra assumption \eqref{eq3:29apr12}.
The above expression simplifies due to the fact that that 
$$D_xF(x_0,\la_0)D_{\la}\psi(v_0,\la_0) \in R(D_xF(x_0,\la_0)),$$
and $Q$ is the projection onto $\text{ker}(D_XF(x_0,\la_0)^*)$.

The fact that $D_{\la}\Phi(v_0,\la_0)  \ne 0$ and that $X_1, Y_2$ and $\mathbb{R}$ are one-dimensional implies that
we may apply the Implicit Function Theorem to $\Phi(v,\la)$ to conclude that there exists a continuously differentiable
$\gamma:U_2 \to V_2 \subset \mathbb{R}$ such that 
\begin{align}\label{eq2:6july12}
\gamma(v_0) = \la_0 \quad \text{and} \quad \Phi(v,\gamma(v)) = 0 \quad \text{for all} \quad v \in U_2 \subset X_1.
\end{align}
Therefore our reduced equation \eqref{eq6:2july12} becomes
\begin{align}\label{eq3:6july12}
\Phi(v,\gamma(v)) = QF(v+\psi(v,\gamma(v)),\gamma(v)) = 0,
\end{align}
where solutions to \eqref{eq3:6july12} are of the form 
\begin{align}\label{eq4:6july12}
x(v) = v+\psi(v,\gamma(v)) \quad \text{and} \quad \la(v) = \gamma(v).
\end{align}
By writing $v = s\hat{v}_0 + v_0$ as in \eqref{eq2:27jun12} and inserting this into \eqref{eq4:6july12}, we obtain our solution curve
\begin{align}
&x(s) = v_0 +s\hat{v}_0 + \psi(v_0+s\hat{v}_0,\gamma(v_0+s\hat{v}_0)),  \label{eq11:6july12} \\
&\la(s) = \gamma(v_0+s\hat{v}_0). \label{eq12:6july12}
\end{align}
\end{proof}

Now we compile some useful properties of the maps $\Phi(v,\la)$, $\psi(v,\la)$ and $\gamma(v)$
defined in the \eqref{eq6:2july12}, \eqref{eq2:6july12} and \eqref{eq9:6july12}.
These results, along with their proofs, are taken from \cite{HK04}.

\begin{proposition}\label{prop1:6july12}
Let the assumptions of Theorem~\ref{thm1:29apr12} hold and let
the operators $\Phi(v,\la)$, $\psi(v,\la)$ and $\gamma(v)$ be defined as in 
\eqref{eq6:2july12}, \eqref{eq2:6july12} and \eqref{eq9:6july12} and let $\la_0$ and ${x_0 = v_0+w_0}$
be as in the previous discussion.
Then
\begin{align}\label{eq10:6july12}
D_v\Phi(v_0,\la_0) = 0, \quad D_v\psi(v_0,\la_0) = 0, \quad \text{and} \quad D_v\gamma(v_0) = 0,
\end{align}
and each of these operators has the same order of differentiability as $F(x,\la)$.
\end{proposition}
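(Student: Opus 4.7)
The plan is to derive each identity in turn by differentiating the defining implicit equation for the operator in question and exploiting the fact that $X_1 = \ker(D_xF(x_0,\lambda_0))$.

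First I would tackle $D_v\psi(v_0,\lambda_0) = 0$. The function $\psi$ is defined by the implicit relation $G(v,\psi(v,\lambda),\lambda) = 0$, where $G(v,w,\lambda) = (I-Q)F(v+w,\lambda)$ and $\psi(v_0,\lambda_0) = w_0$. Differentiating this identity with respect to $v$ at $(v_0,\lambda_0)$ yields
\begin{equation*}
D_vG(v_0,w_0,\lambda_0) + D_wG(v_0,w_0,\lambda_0)\,D_v\psi(v_0,\lambda_0) = 0.
\end{equation*}
The first term equals $(I-Q)\,D_xF(x_0,\lambda_0)\big|_{X_1}$; but every element of $X_1$ lies in $\ker(D_xF(x_0,\lambda_0))$, so this term vanishes. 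Meanwhile, $D_wG(v_0,w_0,\lambda_0) = (I-Q)D_xF(x_0,\lambda_0)\colon X_2 \to Y_1$ is an isomorphism (this is precisely why the Implicit Function Theorem applied in \eqref{eq9:6july12}). Hence $D_v\psi(v_0,\lambda_0) = 0$.

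Next, I would compute $D_v\Phi(v_0,\lambda_0)$ directly from the definition $\Phi(v,\lambda) = QF(v + \psi(v,\lambda),\lambda)$. The chain rule gives
\begin{equation*}
D_v\Phi(v_0,\lambda_0) = Q\,D_xF(x_0,\lambda_0)\bigl(I_{X_1} + D_v\psi(v_0,\lambda_0)\bigr).
\end{equation*}
By the previous step the second summand in the parentheses vanishes, and the restriction of $D_xF(x_0,\lambda_0)$ to $X_1 = \ker(D_xF(x_0,\lambda_0))$ is identically zero, so $D_v\Phi(v_0,\lambda_0) = 0$. Finally, $\gamma$ is defined by $\Phi(v,\gamma(v)) = 0$ with $\gamma(v_0) = \lambda_0$; differentiating and evaluating at $v_0$ gives
\begin{equation*}
D_v\Phi(v_0,\lambda_0) + D_\lambda\Phi(v_0,\lambda_0)\,D_v\gamma(v_0) = 0.
\end{equation*}
We have just shown the first term is zero, and from \eqref{eq1:6july12} we have $D_\lambda\Phi(v_0,\lambda_0) = QD_\lambda F(x_0,\lambda_0) \neq 0$ as a map $\mathbb{R}\to Y_2$; since both sides live in the one-dimensional space $Y_2$ we conclude $D_v\gamma(v_0) = 0$.

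The differentiability claim is then immediate: the function $\psi$ inherits its order of differentiability from $F$ via the Implicit Function Theorem applied to $G$, and the same applies to $\gamma$ via the Implicit Function Theorem applied to $\Phi$, which is itself as smooth as $F$ composed with $\psi$. The only real subtlety is keeping careful track of the decomposition $X = X_1 \oplus X_2$ when interpreting $D_v\psi$ and $D_v\Phi$ as operators defined on $X_1$ with values in $X_2$ and $Y_2$ respectively; once this is set up cleanly, each vanishing statement reduces to the observation that $D_xF(x_0,\lambda_0)$ annihilates $X_1$.
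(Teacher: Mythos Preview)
Your proof is correct and follows essentially the same approach as the paper: differentiate the defining implicit relations for $\psi$, $\Phi$, and $\gamma$ in turn, use that $D_xF(x_0,\lambda_0)$ annihilates $X_1$ and that $(I-Q)D_xF(x_0,\lambda_0)\colon X_2\to Y_1$ is invertible, and invoke the Implicit Function Theorem for the differentiability claim.
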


\begin{proof}
The fact that $\Phi(v,\la)$, $\psi(v,\la)$ and $\gamma(v)$ all have the same order
of differentiability as $F(x,\la)$ follows from the definition of $\Phi(v,\la)$ and the
Implicit Function Theorem~\ref{thm1:2july12}.
By differentiating
$(I-Q)F(v+\psi(v,\la),\la) = 0$
with respect to $v$ we obtain
\begin{align}
(I-Q)D_xF(v+\psi(v,\la),\la)(I_{X_1}+D_v\psi(v,\la)) = 0,
\end{align}
where $I_{X_1}$ denotes the identity on $X_1 = \text{ker}(D_xF(x_0,\la_0))$.
By evaluating at $(v_0,\la_0)$, where $x_0 = v_0+w_0$, we obtain
\begin{align}\label{eq7:5july12}
(I-Q)D_xF(x_0, \la_0)D_v\psi(v_0,\la_0) = 0.
\end{align}
Given that $D_v\psi(v_0,\la_0)$ maps onto $X_2$ and $(I-Q)D_XF(x_0,\la_0)$
is an invertible operator from $X_2$ to $Y_1$, we have that $D_v\psi(x_0,\la_0)= 0$.

Then if we differentiate $\Phi(v,\la) = QF(v+\psi(v,\la),\la) = 0$ with respect to $v$ and evaluate at $(v_0,\la_0)$,
we obtain
\begin{align}\label{eq5:6july12}
D_v\Phi(v_0,\la_0) = QD_xF(x_0,\la_0)I_{X_1} = 0.
\end{align}
By differentiating \eqref{eq3:6july12} with respect to $v$ and utilizing \eqref{eq5:6july12},
we have
$$
D_{\la}\Phi(v_0,\la_0)D_v\gamma(v_0) = 0.
$$
The assumption that $D_{\la}\Phi(v_0,\la_0) \ne 0$ implies that 
\begin{align}\label{eq6:6july12}
D_v\gamma(v_0) = 0.
\end{align}
\end{proof}

Once we've obtained a unique solution curve $(x(s),\la(s))$ through $(x_0,\la_0)$, we
analyze $\ddot{\la}(0)$ (where $\dot{} = \frac{d}{ds}$) to determine additional information
about the solution curve.
In particular, we can determine whether
or not a {\bf saddle node bifurcation} or fold occurs at $(x_0,\la_0)$.
This type of bifurcation
occurs when the solution curve $\{x(s),\la(s)\}$ has a turning point at $(x_0,\la_0)$.
The next proposition, taken
from \cite{HK04}, provides us with a method to determine information about $\ddot{\la}(0)$.

\begin{proposition}\label{prop1:1july12}
Let the assumptions of Theorem~\ref{thm1:29apr12} be in effect.
Additionally assume that 
$ker(D_XF(x_0,\la_0))$ is spanned by $\hat{v}_0$.
Then 
\begin{align}\label{eq1:1july12}
&\left.\frac{d}{ds}F(x(s),\la(s))\right|_{s=0}= \\
& D_xF(x_0,\la_0)\dot{x}(0)+D_{\la}F(x_0,\la_0)\dot{\la}(0) =D_xF(x_0,\la_0)\hat{v}_0= 0 \nonumber\\
&\left.\frac{d^2}{ds^2}F(x(s),\la(s))\right|_{s=0}= \label{eq1:17july12} \\
&D^2_{xx}F(x_0,\la_0)[\hat{v}_0,\hat{v}_0]+D_xF(x_0,\la_0)\ddot{x}(0)+D_{\la}F(x_0,\la_0)\ddot{\la}(0) = 0.\nonumber
\end{align}
In particular, an application of the projection operator
$Q$ defined in \eqref{eq1:27june12} to \eqref{eq1:17july12} yields
\begin{align}\label{eq3:5oct12}
QD^2_{xx}F(x_0,\la_0)[\hat{v}_0,\hat{v}_0]+QD_{\la}F(x_0,\la_0)\ddot{\la}(0) = 0.
\end{align}
This implies that if $D_{\la}F(x_0,\la_0)\notin R(D_xF(x_0,\la_0))$ and 
$$D^2_{xx}F(x_0,\la_0)[\hat{v}_0,\hat{v}_0] \notin R(D_xF(x_0,\la_0)),$$
then $\ddot{\la}(0) \ne 0$.
\end{proposition}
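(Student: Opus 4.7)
The plan is to verify the two chain-rule identities by differentiating $F(x(s),\lambda(s))=0$ along the curve produced by Theorem~\ref{thm1:29apr12}, and then to project onto $Y_2$ to isolate $\ddot{\lambda}(0)$. The key inputs are the explicit parametrization \eqref{eq11:6july12}--\eqref{eq12:6july12} for $(x(s),\lambda(s))$ and the vanishing of the relevant first derivatives recorded in Proposition~\ref{prop1:6july12}.

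First I would compute $\dot{x}(0)$ and $\dot{\lambda}(0)$ directly from the parametrization. Differentiating $\lambda(s)=\gamma(v_0+s\hat{v}_0)$ at $s=0$ gives $\dot{\lambda}(0)=D_v\gamma(v_0)\hat{v}_0$, which vanishes by Proposition~\ref{prop1:6july12}. Differentiating $x(s)=v_0+s\hat{v}_0+\psi(v_0+s\hat{v}_0,\gamma(v_0+s\hat{v}_0))$ and using $D_v\psi(v_0,\lambda_0)=0$ together with $\dot{\lambda}(0)=0$ then yields $\dot{x}(0)=\hat{v}_0$. Substituting these into the chain rule applied to $F(x(s),\lambda(s))\equiv 0$ at $s=0$ produces the first displayed identity, whose right-most equality $D_xF(x_0,\lambda_0)\hat{v}_0=0$ is just the defining property of $\hat{v}_0\in\ker(D_xF(x_0,\lambda_0))$.

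For the second derivative, I would apply the chain rule and product rule to $F(x(s),\lambda(s))\equiv 0$, which yields
\begin{align*}
0 &= D^2_{xx}F[\dot{x},\dot{x}] + 2\,D^2_{x\lambda}F[\dot{x}]\,\dot{\lambda} + D^2_{\lambda\lambda}F\,\dot{\lambda}^2 \\
  &\quad + D_xF\,\ddot{x} + D_\lambda F\,\ddot{\lambda},
\end{align*}
all evaluated at $(x_0,\lambda_0)$ with $s=0$. Because $\dot{\lambda}(0)=0$, both the mixed term and the pure $\lambda\lambda$ term drop out, and substituting $\dot{x}(0)=\hat{v}_0$ gives exactly \eqref{eq1:17july12}. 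Applying the projection $Q$ annihilates $QD_xF(x_0,\lambda_0)\ddot{x}(0)$, since $R(D_xF(x_0,\lambda_0))=Y_1=\ker Q$, which produces \eqref{eq3:5oct12}.

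To finish, observe that $Y_2=\ker(D_xF(x_0,\lambda_0)^*)$ is one-dimensional by \eqref{eq7:2july12}, so the hypothesis $D_\lambda F(x_0,\lambda_0)\notin R(D_xF(x_0,\lambda_0))$ forces $QD_\lambda F(x_0,\lambda_0)$ to be a nonzero element of $Y_2$, and similarly $QD^2_{xx}F(x_0,\lambda_0)[\hat{v}_0,\hat{v}_0]\neq 0$ by the second range hypothesis. Solving \eqref{eq3:5oct12} for $\ddot{\lambda}(0)$ then yields $\ddot{\lambda}(0)\neq 0$. The main obstacle is really only bookkeeping: one must carefully track which chain-rule terms survive after imposing $\dot{\lambda}(0)=0$, and keep straight that $Q$ projects onto $\ker(D_xF(x_0,\lambda_0)^*)=R(D_xF(x_0,\lambda_0))^\perp$, so it annihilates every element of $R(D_xF(x_0,\lambda_0))$.
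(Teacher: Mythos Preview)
Your proof is correct and follows essentially the same route as the paper: compute $\dot{x}(0)=\hat{v}_0$ and $\dot{\lambda}(0)=0$ from the explicit parametrization via Proposition~\ref{prop1:6july12}, differentiate $F(x(s),\lambda(s))\equiv 0$ twice, simplify using $\dot{\lambda}(0)=0$, project with $Q$, and read off $\ddot{\lambda}(0)\neq 0$. Your write-up is in fact slightly more explicit than the paper's in displaying the full second-order chain-rule expansion before cancellation, but the argument is otherwise identical.
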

\begin{proof}
Let $\{x(s),\la(s)\}$ be the solution curves for $F(x,\la) = 0 $ defined by \eqref{eq11:6july12} and \eqref{eq12:6july12}.
Differentiating
these curves we obtain
\begin{align}
&\left.\frac{d}{ds}x(s)\right|_{s=0} = \hat{v}_0 + D_v\psi(v_0,\la_0)\hat{v}_0+D_{\la}\psi(v_0,\la_0)D_v\gamma(v_0)\hat{v}_0 = \hat{v}_0, \label{eq7:6july12} \\
&\left.\frac{d}{ds}\la(s)\right|_{s=0} =D_v\gamma(v_0)\hat{v}_0 = 0, \label{eq13:6july12}
\end{align}
where the above expressions simplify as a result of Proposition~\ref{prop1:6july12}.
Differentiating the expression $F(x(s),\la(s)) = 0$ twice and again using Proposition~\ref{prop1:6july12} to simplify, we obtain
\begin{align}\label{eq8:6july12}
&\left.\frac{d^2}{ds^2}F(x(s),\la(s))\right|_{s=0} =\\
& D^2_{xx}F(x_0,\la_0)[\hat{v}_0,\hat{v}_0]+D_xF(x_0,\la_0)\ddot{x}(0)+D_{\la}F(x_0,\la_0)\ddot{\la}(0) = 0,\nonumber  
\end{align}
where 
$$
\ddot{\la}(0) = D^2_{vv}\gamma(v_0)[\hat{v}_0,\hat{v}_0] \quad \text{and} \quad \ddot{x}(0) = D^2_{vv}\psi(v_0,\la_0)[\hat{v}_0,\hat{v}_0],
$$
by differentiating \eqref{eq7:6july12} and \eqref{eq13:6july12} once more with respect to $s$.
Applying the projection operator $Q$ to \eqref{eq8:6july12} yields \eqref{eq3:5oct12}.
Then the assumptions that
$D_{\la}F(x_0,\la_0) \notin R(D_xF(x_0,\la_0))$ and $D^2_{xx}F(x_0,\la_0)[\hat{v}_0,\hat{v}_0] \notin R(D_xF(x_0,\la_0))$
imply that $\ddot{\la}(0) \ne 0$.
\end{proof}

The significance of Proposition~\ref{prop1:1july12} is that it gives explicit conditions that allow us to determine whether or not $\ddot{\la}(0)$ is nonzero.
Heuristically, the fact that $\ddot{\la}(0) \ne 0$ means that $\la(s)$ has a turning point at $s=0$.
This means
that the graph of $\{x(s),\la(s)\}$ looks like a parabola and that a saddle node
bifurcation occurs at $s=0$ (cf.~\cite{HK04}).
If we assume that $F(x,\la)$ is at least $3$-times differentiable we may expand the operators ${\psi(v_0+s\hat{v}_0,\gamma(v_0+s\hat{v}_0))}$ and ${\gamma(v_0+s\hat{v}_0)}$ about $s=0$ as a second order Taylor series and use \eqref{eq11:6july12} and \eqref{eq12:6july12} to obtain second order representations of our solutions $\{x(s),\la(s)\}$.
This is the solution approach we take to prove non-uniqueness in both the CMC and non-CMC cases.

\section{Main Results}\label{sec2:14july12}

The main results of this article pertain to the following one parameter family 
of problems
\begin{align}\label{eq2:30apr12}
-\Delta \phi + &a_R\phi + \lambda^2a_{\tau}\phi^5-a_{{\bf w}}\phi^{-7}-2\pi\rho e^{-\lambda}\phi^{5}=0,\\
&\mathbb{L}{\bf w} + \lambda b_{\tau}^a\phi^6=0. \nonumber
\end{align}
Here we assume that $g_{ab}$ is a given SPD metric with no conformal killing fields that has constant, positive scalar curvature.
The expressions $D_a$ and $\Delta$ denote the derivative and the Laplace-Beltrami operator associated with $g_{ab}$ and
$$\mathbb{L}\bw = -D_b(\mathcal{L}\bw)^{ab},$$
\noindent
denotes the divergence of the
conformal killing operator associated with $g_{ab}$.
Finally, we define
\begin{align}\label{eq5:27jun12}
&a_R = \frac{1}{8}R,  & a_{\tau} = \frac{1}{12}\tau^2, \\
&a_{{\bf w}} = \frac{1}{8}(\sigma+\mathcal{L}{\bf w})_{ab}(\sigma +\mathcal{L}{\bf w})^{ab},  & b_{\tau} = \frac{2}{3}D^a \tau. \nonumber
\end{align}
In general, we assume that $\tau \in C^{1,\alpha}(\cM)$, however when we prove our CMC results we will
additionally require that $\tau$ be constant.
For the remainder of this paper we assume that $R$ is a positive constant and that $|\sigma| = (\sigma_{ab}\sigma^{ab})^{\frac12}$
is also a nonzero constant.
Notice that \eqref{eq2:30apr12} has the form of \eqref{eq3:3july12} with 
initial data depending on $\la$ where
$$
\tau_{\la} = \la\tau, \quad \rho_{\la} = e^{-\la}\rho \quad \text{and} \quad \bj_{\la} = 0.
$$ 

We show that in both the CMC and non-CMC cases that solutions to \eqref{eq5:27jun12} are non-unique.
Our method
for doing this is to apply the bifurcation theory outlined in Section~\ref{LSReduc}.
The first step in doing this is
to formulate \eqref{eq2:30apr12} in a way that allows us to utilize the framework outlined in Section~\ref{LSReduc}.

\subsection{Problem Setup}

We now formulate \eqref{eq2:30apr12} so that we can apply the Liapunov-Schmidt reduction.
Define $F((\phi, {\bf w}), \lambda)$ by
\begin{align}\label{eq6:27jun12}
F((\phi,{\bf w}),\lambda) = \left[ \begin{array}{c} -\Delta \phi + a_R\phi + \lambda^2 a_{\tau}\phi^5-a_{{\bf w}}\phi^{-7}-2\pi\rho e^{-\lambda}\phi^{5}\\
\mathbb{L}{\bf w} + \lambda b_{\tau}^a\phi^6 
 \end{array} \right],
\end{align}
and in the event that $\tau$ is constant, define
\begin{align}\label{eq10:29jun12}
G(\phi,\la) = -\Delta\phi+ a_R\phi + \lambda^2 a_{\tau}\phi^5-\frac18\sigma^2\phi^{-7}-2\pi\rho e^{-\lambda}\phi^{5}.
\end{align}
If $F((\phi,{\bf w}),\lambda) = 0$ (resp.\ $G(\phi,\la) = 0$) for a given $\lambda$, then $((\phi,{\bf w}),\la)$ (resp.\ $(\phi,\la)$) solves Eq.\ \eqref{eq2:30apr12} (resp.\ Eq.\ \eqref{eq10:29jun12}).

We view \eqref{eq6:27jun12} and \eqref{eq10:29jun12} as nonlinear operators between the Banach spaces 
$$
F((\phi,\bw),\la):C^{k,\alpha}(\cM)\oplus C^{k,\alpha}(\cT \cM)\times \mathbb{R} \to C^{k-2,\alpha}(\cM)\oplus C^{k-2,\alpha}(\cT \cM),
$$
$$
G(\phi,\la):C^{k,\alpha}(\cM)\times \mathbb{R} \to C^{k-2,\alpha}(\cM).
$$
where $k \ge 2$.
For $\phi \ne 0$ and $X=(\phi,\bw)$, the first order Fr\'{e}chet derivatives $D_{\phi}G(\phi,\la),\\
D_{\la}G(\phi,\la)$, $D_XF((\phi,\bw),\la)$ and $D_{\la}F((\phi,\bw),\la)$
all exist.
In fact, both $F$ and $G$ are $k$-differentiable for any $k\in \mathbb{N}$ provided that $\phi \ne 0$.
See the Appendix~\ref{sec1:10july12} for more information regarding Fr\'{e}chet derivatives.

Now we are ready to state the main results of this paper.
The first two results state that there is a critical density $\rho=\rho_c$
such that there exists a constant $\phi_c$ where the linearizations $D_{\phi}G(\phi_c,0)$ and $D_XF((\phi_c,{\bf 0}),0)$ have a kernel of dimension one.
This provides the basis for our final two main results where we determine explicit solution curves $\{\phi(s), \la(s)\}$ and $\{(\phi(s),\bw(s)),\la(s)\}$ to obtain our non-uniqueness results.

\subsection{Existence of $\rho_c$ such that $\text{ dim ker}(D_XF((\phi_c,{\bf 0},0)) = 1$}\label{sec1:6july12}

The two results in this section pertain to the existence of a critical energy density $\rho=\rho_c$ at which the linearizations of the operators
$F$ and $G$ develop a one-dimensional kernel.
These results allow us to apply the Liapunov-Schmidt reduction outlined in Section~\ref{LSReduc} to analyze solutions in a neighborhood of $((\phi_c,{\bf 0}),0)$ and $(\phi_c,0)$.
We present the theorems here without proof and postpone them until Section~\ref{sec1:29jun12}.

\begin{theorem}[CMC]\label{thm4:29jun12}
Let $D_{\phi}G(\phi,\la)$ denote the Fr\'{e}chet derivative of \eqref{eq10:29jun12} with respect to $\phi$.
Then
there exists a critical value of $\rho= \rho_c~$ and a constant $\phi_c$ such that when $\rho= \rho_c$, Eq.\ \eqref{eq10:29jun12} has a solution if and only if $\la \ge 0$.
Furthermore, $\text{dim ker}(D_{\phi}G(\phi_c,0))) = 1$ and it is spanned by the constant function $\phi =1$.
Moreover, we can determine the explicit values of $\rho_c$ and $\phi_c$, which are
\begin{align}\label{eq1:29jun12}
\rho_c = \frac{R^{\frac32}}{24\sqrt{3}\pi |\sigma|} \quad \text{and} \quad \phi_c = \left(\frac{R}{24\pi \rho}\right)^{\frac14}.
\end{align}
\end{theorem}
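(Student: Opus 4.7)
The plan is to extract $\rho_c$ and $\phi_c$ from a double-root analysis of the constant-coefficient reduction at $\lambda=0$, then separately handle the existence dichotomy and the kernel computation. First I would search for constant positive solutions of $G(\phi,0)=0$; because $R$, $|\sigma|$, and $\tau$ are constant by hypothesis, the Laplacian drops out and $\phi$ must satisfy
\begin{equation*}
P_\rho(\phi):=R\phi^{8}-16\pi\rho\,\phi^{12}=\sigma^{2},
\end{equation*}
after clearing denominators by $8\phi^{7}$. Elementary calculus on $(0,\infty)$ gives a unique critical point of $P_\rho$ at $\phi^{4}=R/(24\pi\rho)$, where $P_\rho$ attains the value $R^{3}/[3(24\pi\rho)^{2}]$. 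This maximum meets $\sigma^{2}$ precisely when $\rho=\rho_c=R^{3/2}/(24\sqrt{3}\pi|\sigma|)$, at which point the unique positive (double) root is $\phi_c=(R/(24\pi\rho_c))^{1/4}$, matching the theorem.

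Next, with $\rho=\rho_c$ fixed, I would address the ``if and only if $\lambda\ge 0$'' clause. For $\lambda=0$ the constant $\phi_c$ solves $G(\phi,0)=0$ by construction. For $\lambda>0$ I would invoke the sub/supersolution method: $\phi_c$ is a supersolution because
\begin{equation*}
G(\phi_c,\lambda)=G(\phi_c,0)+\bigl(\lambda^{2}a_\tau+2\pi\rho_c(1-e^{-\lambda})\bigr)\phi_c^{5}>0,
\end{equation*}
while any sufficiently small positive constant is a subsolution since $-\tfrac{\sigma^{2}}{8}\phi^{-7}$ dominates as $\phi\to 0^{+}$. For $\lambda<0$ I would argue nonexistence by applying the maximum principle at a minimum $x_{\min}$ of any hypothetical positive solution: the inequality $-\Delta\phi(x_{\min})\le 0$ forces
\begin{equation*}
R\phi_{\min}^{8}+c(\lambda)\phi_{\min}^{12}\ge\sigma^{2},\qquad c(\lambda):=\tfrac{2}{3}\lambda^{2}\tau^{2}-16\pi\rho_c e^{-\lambda},
\end{equation*}
and the same calculus from Step~1 shows that the maximum of $R\phi^{8}+c(\lambda)\phi^{12}$ on $(0,\infty)$ drops strictly below $\sigma^{2}$ for $\lambda<0$, contradicting this lower bound.

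The heart of the theorem is the kernel computation, which I would handle by direct Fr\'echet differentiation. At $(\phi_c,0)$,
\begin{equation*}
D_\phi G(\phi_c,0)\,v=-\Delta v+V\,v,\qquad V=a_R+\tfrac{7}{8}\sigma^{2}\phi_c^{-8}-10\pi\rho_c\phi_c^{4}.
\end{equation*}
Substituting $\phi_c^{4}=R/(24\pi\rho_c)$ and, from the explicit formula for $\rho_c$, $(24\pi\rho_c)^{2}=R^{3}/(3\sigma^{2})$, a short algebraic simplification gives $V=\tfrac{R}{8}-\tfrac{5R}{12}+\tfrac{7R}{24}=0$. Hence $D_\phi G(\phi_c,0)=-\Delta$ as an operator on $C^{k,\alpha}(\cM)$, and on the closed manifold $\cM$ its kernel is exactly the constants, giving a one-dimensional kernel spanned by $\phi\equiv 1$.

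I expect the main obstacle to be the nonexistence direction for $\lambda<0$, which requires the uniform polynomial comparison $\max_{\phi>0}(R\phi^{8}+c(\lambda)\phi^{12})<\sigma^{2}$ and is most delicate near $\lambda=0$, where $c(\lambda)$ is only a $C^{2}$ perturbation of $c(0)$ and one must track the opposing contributions of $\tfrac{2}{3}\lambda^{2}\tau^{2}$ and $-16\pi\rho_c e^{-\lambda}$; by contrast, the existence direction and the two closed-form computations (the double-root identification and the vanishing of $V$) are rigid and short. Because the cancellation $V\equiv 0$ is precisely what enables the Liapunov--Schmidt reduction in the sections that follow, I would present Step~3 in full algebraic detail.
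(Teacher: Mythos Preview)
Your proposal is correct and follows essentially the same route as the paper: the double-root analysis to extract $\rho_c$ and $\phi_c$, sub/supersolutions for existence together with a polynomial--maximum-principle comparison for nonexistence, and the direct verification that the zeroth-order coefficient $V$ vanishes so that $D_\phi G(\phi_c,0)=-\Delta$. The only cosmetic differences are that you phrase nonexistence via evaluation at a minimum point whereas the paper argues $\Delta\phi<0$ globally, and that both you and the paper in fact establish the $\lambda<0$ nonexistence only for $|\lambda|$ small (the paper's Corollary~4.4 makes this restriction explicit even though the theorem statement does not).
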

\begin{proof}
We present the proof in Section~\ref{sec1:29jun12}.
\end{proof}

\begin{theorem}[non-CMC]\label{thm2:29jun12}
Let $D_XF((\phi,\bw),\la)$ denote the Fr\'{e}chet derivative of Eq.\ \eqref{eq6:27jun12} with respect to $X=(\phi,\bw)$ and
let $\rho_c$ and $\phi_c$ be as in Theorem~\ref{thm4:29jun12}.
Then when $\rho= \rho_c$, $\text{dim ker}(D_XF((\phi_c,{\bf 0}),0))) = 1$ and it is spanned
by the constant vector $\tiny{\left[\begin{array}{c} 1 \\ 0 \end{array} \right]}$.
\end{theorem}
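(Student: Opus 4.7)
The plan is to exploit the fact that at $\lambda=0$ and $\bw=\mathbf{0}$ the operator $F$ decouples in a very favorable way, so that computing $\text{ker}(D_XF((\phi_c,\mathbf{0}),0))$ reduces to two separate problems: the scalar CMC linearization already handled in Theorem~\ref{thm4:29jun12}, together with the kernel of the conformal vector Laplacian $\mathbb{L}$.

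First I would compute $D_XF((\phi_c,\mathbf{0}),0)$ componentwise.  Setting $\lambda=0$ annihilates both $\lambda^2 a_\tau\phi^5$ in the Hamiltonian slot and $\lambda b_\tau^{a}\phi^6$ in the momentum slot, and the $\phi$-derivative of the second component of \eqref{eq6:27jun12}, namely $6\lambda b_\tau^a\phi^5$, also vanishes.  The $\bw$-derivative of $-a_{\bw}\phi^{-7}$ at $\bw=\mathbf{0}$ produces the term $-\tfrac14\phi_c^{-7}\sigma_{ab}(\cL\bv)^{ab}$, while the $\bw$-derivative of $\mathbb{L}\bw$ is simply $\mathbb{L}\bv$.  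Writing a variation as $(h,\bv)\in C^{k,\alpha}(\cM)\oplus C^{k,\alpha}(\cT\cM)$, this gives the block form
\begin{align*}
D_XF((\phi_c,\mathbf{0}),0)\begin{pmatrix} h \\ \bv \end{pmatrix}
= \begin{pmatrix} D_\phi G(\phi_c,0)\,h - \tfrac14\phi_c^{-7}\sigma_{ab}(\cL\bv)^{ab} \\ \mathbb{L}\bv \end{pmatrix},
\end{align*}
in which the top-left block is exactly the scalar CMC linearization from Theorem~\ref{thm4:29jun12} and the lower-left block is zero.

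Second, I would read the kernel condition off this block form: $(h,\bv)\in\text{ker}\,D_XF((\phi_c,\mathbf{0}),0)$ if and only if $\mathbb{L}\bv = 0$ and $D_\phi G(\phi_c,0)\,h = \tfrac14\phi_c^{-7}\sigma_{ab}(\cL\bv)^{ab}$.  The standard integration by parts on the closed manifold $\cM$ (using that $(\cL\bv)^{ab}$ is symmetric and trace-free) yields
\begin{align*}
\int_{\cM} v_a\,(\mathbb{L}\bv)^a\, dV_g \;=\; \tfrac12 \int_{\cM} (\cL\bv)_{ab}(\cL\bv)^{ab}\, dV_g,
\end{align*}
so that $\mathbb{L}\bv = 0$ forces $\cL\bv\equiv 0$; that is, $\bv$ is a conformal killing field of $g_{ab}$.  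The blanket assumption that $(\cM,g_{ab})$ admits no conformal killing field then gives $\bv = 0$.

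Finally, with $\bv=0$ the coupling term in the first component drops out and the kernel condition reduces to $D_\phi G(\phi_c,0)\,h = 0$.  Theorem~\ref{thm4:29jun12} identifies that kernel as the one-dimensional span of the constant function $1$.  Hence $\text{ker}\,D_XF((\phi_c,\mathbf{0}),0)$ is one-dimensional and spanned by $\begin{pmatrix} 1 \\ \mathbf{0} \end{pmatrix}$, as asserted.  The only possible source of difficulty — the off-diagonal coupling $\sigma_{ab}(\cL\bv)^{ab}$ between the Hamiltonian and momentum variations — is neutralized by the triangular block structure, so all the real work is concentrated in Theorem~\ref{thm4:29jun12} and in the no-conformal-killing-field hypothesis, both of which act here as essentially off-the-shelf inputs.
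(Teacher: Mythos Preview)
Your proposal is correct and follows essentially the same route as the paper: compute the block-triangular form of $D_XF((\phi_c,\mathbf{0}),0)$, use the no-conformal-killing-field hypothesis to force $\bv=0$ from $\mathbb{L}\bv=0$, and then invoke the scalar CMC kernel result for $h$. The only cosmetic difference is that the paper records the top-left block directly as $-\Delta$ (via $q'(\phi_c)=0$) rather than as $D_\phi G(\phi_c,0)$, and simply cites the no-conformal-killing-field assumption where you spell out the integration-by-parts identity; neither changes the substance.
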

\begin{proof}
We present the proof in Section~\ref{sec1:29jun12}.
\end{proof}

\subsection{Non-unique Solutions to $F((\phi,\bw),\la)=0$ when $\rho=\rho_c$}

The two Theorems in this section pertain to the non-uniqueness of solutions to the nonlinear problems
\eqref{eq6:27jun12} and \eqref{eq10:29jun12}.
Theorem~\ref{thm3:29jun12} provides the explicit form of solutions to \eqref{eq10:29jun12} in a neighborhood of the point $(\phi_c,0)$ in the CMC case.
The form of this solution curve implies that a saddle node bifurcation occurs at
$(\phi_c,0)$ and that solutions are non-unique in a neighborhood of this point.
Theorem~\ref{thm1:29jun12} provides analogous results in the non-CMC case for the point $((\phi_c,{\bf 0}),0)$.

\begin{theorem}[CMC]\label{thm3:29jun12}
Suppose that $\tau$ is constant.
Then \eqref{eq6:27jun12} reduces to the scalar problem
\begin{align}\label{eq6:29jun12}
-\Delta\phi + a_R\phi +(\la^2a_{\tau}-2\pi\rho e^{-\la})\phi^5 - \frac18\sigma^2\phi^{-7} = 0.
\end{align}
When $\rho = \rho_c$, with $\rho_c$ as in Theorem~\ref{thm2:29jun12}, 
then there exists a neighborhood
of $(\phi_c,0)$ such that all solutions to \eqref{eq6:29jun12} 
in this neighborhood lie on
a smooth solution curve $\{\phi(s),\la(s)\}$ that has the form
\begin{align}\label{eq17:6july12}
&\phi(s) = \phi_c + s + O(s^2),\\
&\la(s) = \frac12\ddot{\la}(0)s^2 + O(s^3), \quad (\ddot{\la}(0) \ne 0).
\end{align}
In particular, there exists a $\delta>0$ such that for all $0 < \la < \delta$ there exist at least two
distinct solutions $\phi_{1,\la} \ne \phi_{2,\la}$ to \eqref{eq6:29jun12}.
\end{theorem}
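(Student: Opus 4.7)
The plan is to apply the Liapunov--Schmidt framework of Section~\ref{LSReduc} to the scalar operator $G(\phi,\la)$ at the point $(\phi_c,0)$, invoking Theorem~\ref{thm1:29apr12} to extract the solution curve and Proposition~\ref{prop1:1july12} to control $\ddot\la(0)$. Theorem~\ref{thm4:29jun12} already provides the key spectral data: $G(\phi_c,0)=0$ and $\ker D_\phi G(\phi_c,0)=\mathrm{span}\{\hat v_0\}$ with $\hat v_0\equiv 1$. The linearization $D_\phi G(\phi_c,0)$ is Fredholm of index zero because it is a second-order elliptic operator on the closed manifold $\cM$, and the constant-coefficient structure of its zeroth-order term makes it formally self-adjoint on $L^2(\cM,dV_g)$; thus $\ker D_\phi G(\phi_c,0)^{*}=\mathrm{span}\{1\}$ as well, and the range of the linearization is exactly the subspace of mean-zero functions in $C^{k-2,\alpha}(\cM)$.

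Next I would verify the two transversality conditions required to invoke the abstract theorems. A direct computation from \eqref{eq10:29jun12} gives
\begin{align*}
D_\la G(\phi_c,0) &= 2\pi\rho_c\,\phi_c^{5},\\
D^2_{\phi\phi}G(\phi_c,0)[1,1] &= -7\sigma^{2}\phi_c^{-9}-40\pi\rho_c\,\phi_c^{3}.
\end{align*}
Both expressions are nonzero constant functions on $\cM$, so neither lies in the mean-zero range of $D_\phi G(\phi_c,0)$. The first of these is exactly the hypothesis \eqref{eq3:29apr12} of Theorem~\ref{thm1:29apr12}, and the two together are precisely what Proposition~\ref{prop1:1july12} demands.

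With these ingredients in place, Theorem~\ref{thm1:29apr12} produces a continuously differentiable solution curve $\{(\phi(s),\la(s))\}$ through $(\phi_c,0)$ that captures every solution of $G(\phi,\la)=0$ in a neighborhood of that point. Taking the Liapunov decomposition with $X_1=\mathrm{span}\{1\}$ and $X_2$ the mean-zero subspace of $C^{k,\alpha}(\cM)$, one has $v_0=P\phi_c=\phi_c$ and $w_0=(I-P)\phi_c=0$. Plugging into the parameterizations \eqref{eq11:6july12}--\eqref{eq12:6july12} and using Proposition~\ref{prop1:6july12} to kill the first-order contributions of $\psi$ and $\gamma$ at $(v_0,\la_0)$ delivers the expansions $\phi(s)=\phi_c+s+O(s^2)$ and $\la(s)=\tfrac12\ddot\la(0)s^{2}+O(s^{3})$, while Proposition~\ref{prop1:1july12} combined with the two non-range computations above yields $\ddot\la(0)\neq 0$.

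Non-uniqueness then follows by inverting the leading-order relation $\la\approx\tfrac12\ddot\la(0)s^{2}$: for every sufficiently small $|\la|$ on the side of $0$ matching the sign of $\ddot\la(0)$ there are two distinct parameter values $s=\pm s(\la)$, producing two distinct solutions $\phi_{1,\la}\neq\phi_{2,\la}$ on the curve; compatibility with the existence criterion of Theorem~\ref{thm4:29jun12} forces $\ddot\la(0)>0$, so the two solutions exist precisely for $0<\la<\delta$. The step I expect to require the most care is checking that the transversality quantities computed above truly fall outside the range of $D_\phi G(\phi_c,0)$, since this relies both on identifying the range as the mean-zero subspace (via self-adjointness) and on confirming through the explicit values \eqref{eq1:29jun12} that the zeroth-order coefficient of $D_\phi G(\phi_c,0)$ vanishes, so the linearization reduces to $-\Delta$ with adjoint kernel equal to the constants.
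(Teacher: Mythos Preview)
Your proposal is correct and follows essentially the same route as the paper: apply the Liapunov--Schmidt reduction at $(\phi_c,0)$ using that $D_\phi G(\phi_c,0)=-\Delta$ (so the kernel and cokernel are the constants and the range is the mean-zero functions), verify the same two transversality computations $D_\la G(\phi_c,0)=2\pi\rho_c\phi_c^5$ and $D^2_{\phi\phi}G(\phi_c,0)[1,1]=-7\sigma^2\phi_c^{-9}-40\pi\rho_c\phi_c^3$, and then invoke Theorem~\ref{thm1:29apr12} and Proposition~\ref{prop1:1july12} to obtain the curve and $\ddot\la(0)\neq 0$. Your closing observation that the existence criterion of Theorem~\ref{thm4:29jun12} forces $\ddot\la(0)>0$ is a clean way to pin down the sign, which the paper leaves implicit in its intermediate-value-theorem argument.
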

\begin{proof}
We postpone the proof until Section~\ref{sec4:14july12}.
\end{proof}

\begin{theorem}[non-CMC]\label{thm1:29jun12}
Suppose $\tau \in C^{1,\alpha}(\cM)$ is non-constant and let
${F((\phi,\bw),\la)}$ be defined as in \eqref{eq6:27jun12}.
Then if $\rho_c$ and $\phi_c$ are defined as in Theorem~\ref{thm4:29jun12} and $\rho = \rho_c$, there exists a neighborhood of $((\phi_c,\bw),0)$ such that
all solutions to ${F((\phi,\bw),\la)=0}$ in this neighborhood lie on a smooth curve of the form
\begin{align}\label{eq14:6july12}
&\phi(s) = \phi_c + s + \frac12 \ddot{\la}(0) u(x) s^2 + O(s^3),\\
&\bw(s) = \frac12 \ddot{\la}(0) \bv(x) s^2 + O(s^3),  \nonumber\\
&\la(s) = \frac12\ddot{\la}(0)s^2+O(s^3), \quad (\ddot{\la}(0) \ne 0),\nonumber
\end{align}
where $u(x) \in C^{2,\alpha}(\cM)$, $\bv(x) \in C^{2,\alpha}(\cT\cM)$ and $\bv(x) \ne {\bf 0}$.
In particular, there exists a $\delta>0$ such that for all $0<\la < \delta$ there exist elements $(\phi_{1,\la},\bw_{1,\la}),(\phi_{2,\la},\bw_{2,\la}) \in C^{2,\alpha}(\cM)\oplus C^{2,\alpha}(\cT \cM)$ such that 
$$F((\phi_{i,\la},\bw_{i,\la}),\la) = 0,~~  \text{for} ~~i \in \{1,2\},~~\text{and} ~~ 
(\phi_{1,\la},\bw_{1,\la}) \ne (\phi_{2,\la},\bw_{2,\la}).$$
\end{theorem}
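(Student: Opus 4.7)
The plan is to cast $F((\phi,\bw),\la)$ into the Liapunov--Schmidt framework of Section~\ref{LSReduc}, apply Theorem~\ref{thm1:29apr12} to obtain a smooth solution curve through $((\phi_c,\mathbf{0}),0)$, and then use Proposition~\ref{prop1:1july12} to show $\ddot{\la}(0)\neq 0$; the claimed expansion~\eqref{eq14:6july12} and the non-uniqueness of solutions then fall out directly. Theorem~\ref{thm2:29jun12} already supplies the one-dimensional kernel $\ker(D_XF((\phi_c,\mathbf{0}),0)) = \mathrm{span}\{(1,\mathbf{0})\}$, so the preparatory work that remains is to verify (i) Fredholmness of $D_XF$ together with $\dim\ker(D_XF^*)=1$, (ii) first-order transversality $D_\la F((\phi_c,\mathbf{0}),0)\notin R(D_XF((\phi_c,\mathbf{0}),0))$, and (iii) second-order transversality $D^2_{XX}F((\phi_c,\mathbf{0}),0)[(1,\mathbf{0}),(1,\mathbf{0})]\notin R(D_XF((\phi_c,\mathbf{0}),0))$.

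For (i), at $\la=0$ the linearization is block upper-triangular: the $\bw$-row collapses to $\mathbb{L}$ because $D_\phi(\la b_\tau^a\phi^6)|_{\la=0}=0$, while the $\phi$-row is $-\Delta+c$ coupled to $\bv$ through the off-diagonal block $B\bv = -\tfrac14\phi_c^{-7}\sigma_{ab}(\mathcal{L}\bv)^{ab}$. Both diagonal blocks are second-order self-adjoint elliptic on a closed manifold, so the full operator is Fredholm of index zero. Using $D_a\sigma^{ab}=0$ and $\sigma_a{}^a=0$, integration by parts yields $B^*\psi = \tfrac12\phi_c^{-7}\sigma^{ab}D_a\psi$, which annihilates constants; the argument behind Theorem~\ref{thm2:29jun12} then identifies $\ker(D_XF^*)=\mathrm{span}\{(1,\mathbf{0})\}$. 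For (ii), a direct computation gives
\begin{align*}
D_\la F((\phi_c,\mathbf{0}),0) = \bigl(2\pi\rho_c\phi_c^5,\ \tfrac{2}{3}(D^a\tau)\phi_c^6\bigr)^T,
\end{align*}
and its $L^2$-pairing with $(1,\mathbf{0})$ equals $2\pi\rho_c\phi_c^5\,\mathrm{vol}(\cM)>0$. For (iii), differentiating twice in $X$ annihilates the $\bw$-row and reduces the $\phi$-slot to the pure second derivative of the Hamiltonian part:
\begin{align*}
D^2_{XX}F[(1,\mathbf{0}),(1,\mathbf{0})] = \bigl(-7\sigma^2\phi_c^{-9} - 40\pi\rho_c\phi_c^3,\ 0\bigr)^T,
\end{align*}
whose pairing with $(1,\mathbf{0})$ is strictly negative since every constant appearing is positive. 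Proposition~\ref{prop1:1july12} then forces $\ddot{\la}(0)\neq 0$ (in fact $\ddot{\la}(0)>0$).

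To extract the curve and conclude, Proposition~\ref{prop1:6july12} gives $\dot x(0)=(1,\mathbf{0})$ and $\dot\la(0)=0$, so a second-order Taylor expansion of the curve from Theorem~\ref{thm1:29apr12} produces the form~\eqref{eq14:6july12}. Reading off the $\bw$-component of the identity $D^2_{XX}F[(1,\mathbf{0}),(1,\mathbf{0})]+D_XF\,\ddot x(0)+D_\la F\,\ddot\la(0)=0$ gives $\mathbb{L}\ddot{\bw}(0) = -\tfrac{2}{3}\phi_c^6\ddot\la(0)D^a\tau$, so $\bv(x):=\ddot{\bw}(0)/\ddot\la(0)$ is the unique solution of $\mathbb{L}\bv = -\tfrac{2}{3}\phi_c^6 D^a\tau$; it is not identically zero precisely because $\tau$ is non-constant and $\mathbb{L}$ has trivial kernel under the no-conformal-Killing-field assumption. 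Since $\la(s)=\tfrac12\ddot\la(0)s^2+O(s^3)$ genuinely folds at $s=0$, for every sufficiently small $\la>0$ there are two parameter values $s\approx\pm\sqrt{2\la/\ddot\la(0)}$, and the leading $s$ in $\phi(s)=\phi_c+s+O(s^2)$ guarantees that the two corresponding pairs $(\phi_{i,\la},\bw_{i,\la})$ are distinct. The main technical obstacle will be verifying (iii): both terms of the second derivative come with the ``wrong'' sign noted in the introduction, and it is the explicit value of $\rho_c$ supplied by Theorem~\ref{thm4:29jun12} that forces them to add constructively rather than cancel.
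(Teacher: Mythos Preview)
Your proposal is correct and follows essentially the same Liapunov--Schmidt route as the paper: verify the one-dimensional kernel and cokernel, check the two transversality conditions, invoke Theorem~\ref{thm1:29apr12} and Proposition~\ref{prop1:1july12}, then Taylor-expand the resulting curve. The one noteworthy difference is how you extract $\bv(x)$: the paper works through the implicit map $\psi$, computing $D_\la\psi((\phi_c,\mathbf 0),0)=(u,\bv)$ from the equation $(I-Q)F(v+\psi(v,\la),\la)=0$ and then separately proving $D^2_{vv}\psi[\hat v_0,\hat v_0]=0$ to simplify $\ddot f(0)$; you instead read the $\bw$-component directly from the identity $D^2_{XX}F[\hat v_0,\hat v_0]+D_XF\,\ddot x(0)+D_\la F\,\ddot\la(0)=0$ in Proposition~\ref{prop1:1july12}, which yields $\mathbb L\bv=-\tfrac23\phi_c^6 D^a\tau$ in one step. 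Both computations land on the same equation for $\bv$, and your shortcut is legitimate (and arguably cleaner) since that identity already encodes the full second derivative of the curve. Your closing remark slightly misattributes the role of $\rho_c$ in step~(iii): the two terms $-7\sigma^2\phi_c^{-9}$ and $-40\pi\rho_c\phi_c^3$ are both manifestly negative for any $\rho_c>0$, so no cancellation is possible regardless of the specific value; the critical choice of $\rho_c$ is what makes the \emph{first}-order operator degenerate (kernel spanned by constants), not what rescues the second derivative.
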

\begin{proof}
We present the proof in Section~\ref{sec5:14july12}.
\end{proof}

\section{Some Key Technical Results}

\subsection{Existence of a Critical Value $\rho_c$}\label{sec3:14july12}

In this section we lay the foundation for proving Theorems \ref{thm4:29jun12} and ~\ref{thm2:29jun12}.
As in \cite{PY05}, we seek a critical density $\rho_c$ where our elliptic problem goes from having positive
solutions to having no positive solutions.
In particular, what we seek is a value $\rho_c$ such that when $\la = 0$,
then \eqref{eq6:27jun12} will have no solution for $\rho> \rho_c$ and will have a solution for $\rho \le \rho_c$.

When $\lambda = 0$, the assumption that $g_{ab}$ admits no conformal killing fields implies that
\begin{align}\label{eq3:29jun12}
 F((\phi,{\bf w}),0) = F((\phi,{\bf 0}),0) =\left[ \begin{array}{c} -\Delta \phi +a_R\phi - \frac{\sigma^2}{8}\phi^{-7}-2\pi\rho\phi^5=0 \\ {\bf w} = 0\end{array}\right].
\end{align}
Define 
\begin{align}\label{eq4:29jun12}
q(\chi) = a_R\chi -\frac18\sigma^2\chi^{-7} - 2\pi\rho_c\chi^5,
\end{align}
where $\rho_c$ is a constant to be determined.
The objective will be to determine
$\rho_c$ so that $q(\chi)$ has a single, positive, multiple root and then
use the maximum principle discussed in Appendix~\ref{maxp} to conclude that if $\rho > \rho_c$, then
\eqref{eq3:29jun12} will have no solution.
This leads us to the following proposition.

\begin{proposition}\label{prop1:16july12}
Let $q(\chi)$ be defined as in \eqref{eq4:29jun12}.
Then there exists constants $\rho_c>0$ and $\phi_c >0$ such that
$q(\chi) \le 0$ for all $\chi >0$ and the only positive root of $q(\chi)$ is $\phi_c$.
\end{proposition}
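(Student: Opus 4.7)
The plan is to reduce this to a one-variable calculus problem by clearing the negative power of $\chi$, then choose $\rho_c$ so that the resulting polynomial's global maximum on $(0,\infty)$ vanishes exactly once.

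First I would multiply through by $\chi^7$ to remove the singularity at $0$, introducing
\begin{equation*}
p(\chi) := \chi^7 q(\chi) = a_R\chi^8 - \tfrac{1}{8}\sigma^2 - 2\pi\rho_c\chi^{12},
\end{equation*}
which has the same sign as $q$ on $(0,\infty)$. Since $p(0^+) = -\sigma^2/8 < 0$ and $p(\chi)\to -\infty$ as $\chi\to\infty$, the inequality $q\le 0$ will follow once I locate the global maximum of $p$ and force its value to be zero.

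Next I would compute $p'(\chi) = 8a_R\chi^7 - 24\pi\rho_c\chi^{11} = 8\chi^7(a_R - 3\pi\rho_c\chi^4)$, so $p$ has a unique positive critical point
\begin{equation*}
\phi_c := \left(\frac{a_R}{3\pi\rho_c}\right)^{1/4}.
\end{equation*}
A short computation of $p''(\phi_c)$ (using $\phi_c^{10}=\phi_c^6\cdot a_R/(3\pi\rho_c)$) shows $p''(\phi_c) = -32 a_R\phi_c^6 < 0$, so $\phi_c$ is the global maximum of $p$ on $(0,\infty)$. Because $p$ has no other critical points and tends to negative values at both endpoints, $p\le p(\phi_c)$ everywhere, with equality only at $\phi_c$.

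Finally I would impose $p(\phi_c) = 0$, which is a single scalar equation in $\rho_c$:
\begin{equation*}
a_R\phi_c^8 - \tfrac{1}{8}\sigma^2 - 2\pi\rho_c\phi_c^{12} = 0.
\end{equation*}
Substituting $\phi_c^4 = a_R/(3\pi\rho_c)$ collapses the left-hand side to $\frac{a_R^3}{27\pi^2\rho_c^2} - \frac{\sigma^2}{8}$, giving $\rho_c^2 = 8a_R^3/(27\pi^2\sigma^2)$, i.e.\ a unique positive $\rho_c$; plugging $a_R=R/8$ recovers the formulas $\rho_c = R^{3/2}/(24\sqrt{3}\pi|\sigma|)$ and $\phi_c = (R/(24\pi\rho_c))^{1/4}$ stated in Theorem~\ref{thm4:29jun12}. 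There is no real obstacle here; the only thing to be careful about is verifying that the critical point I found is indeed the \emph{global} max (the sign of $p''(\phi_c)$ combined with the boundary behavior takes care of this), and that the resulting double root of $q$ is the \emph{only} positive root, which follows from strict negativity of $p$ away from $\phi_c$ once $p(\phi_c)=0$ is enforced.
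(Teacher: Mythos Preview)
Your proof is correct and follows essentially the same strategy as the paper: multiply $q(\chi)$ by $\chi^7$ to obtain the polynomial $p(\chi)=a_R\chi^8-\tfrac18\sigma^2-2\pi\rho_c\chi^{12}$, locate its unique positive critical point, and choose $\rho_c$ so that the maximum value vanishes. The only cosmetic difference is that the paper makes the further substitution $\gamma=\chi^4$ to reduce to a cubic before computing the critical point, whereas you differentiate $p(\chi)$ directly; your version is slightly more explicit about the boundary behavior and the second-derivative check confirming a global maximum.
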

\begin{proof}
To determine $\rho_c$, we observe that because $a_R$ and $\sigma^2$ are constants, we simply need to analyze the roots of
\eqref{eq4:29jun12} as $\rho_c$ varies.
We seek $\rho_c$ such that $q(\chi)$ has a single, positive, multiple root.
We observe that
$q(\chi) = 0$ if and only if 
$$p(\chi) = a_R\chi^8-\frac18\sigma^2-2\pi\rho_c\chi^{12} = 0.$$  
Furthermore, it is clear that each pair of roots
$\{-\chi_0, \chi_0\}$ of the even polynomial $p(\chi)$ is in direct correspondence with each positive root of $p(\gamma)=a_R\gamma^2-\frac18\sigma^2-2\pi\rho_c\gamma^3$, where $\gamma = \chi^4$.
Therefore, we simply need to 
choose $\rho_c$ such that $p(\gamma)$ has a single positive root.
To accomplish this, we find the lone, local maximum of $p(\gamma)$
and require it to be a root of $p(\gamma)$.
We have that
$$
0 = p'(\gamma) = 2a_R\gamma-6\pi\rho_c\gamma^2 \Longrightarrow \gamma_c = \frac{a_R}{3\pi\rho_c} \hspace{3mm} \text{is a local max},
$$  
and 
\begin{align}
0 = p(\gamma_c) = a_R\left(\frac{a_R}{3\pi\rho_c}\right)^2- \frac18\sigma^2-2\pi\rho_c\left(\frac{a_R}{3\pi\rho_c}\right)\\
= \frac{a_R^3-\frac18\sigma^2(27\pi^2\rho_c^2)}{27\pi^2\rho_c^2}\Longrightarrow \rho_c = \frac{R^{\frac32}}{24\sqrt{3}|\sigma|\pi}. \nonumber
\end{align}
\end{proof}
The next result follows immediately from the previous analysis but will be useful going forward.
\newpage

\begin{corollary}\label{cor1:29jun12}
Define the constants
\begin{align}
\rho_c = \frac{R^{\frac32}}{24\sqrt{3}|\sigma|\pi} \quad \text{and} \quad \phi_c = \left(\frac{a_R}{3\pi \rho_c}\right)^{\frac14}.
\end{align}
Then if  
$$q(\chi) = a_R\chi -\frac18\sigma^2\chi^{-7}-2\pi\rho_c\chi^5,$$ 
it follows that $q(\phi_c) = q'(\phi_c) = 0$.
\end{corollary}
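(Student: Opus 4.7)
The plan is to verify the two equalities $q(\phi_c)=0$ and $q'(\phi_c)=0$ by direct substitution, exploiting the polynomial structure already set up in the proof of Proposition~\ref{prop1:16july12}. The corollary is essentially a bookkeeping statement: the argument of that proposition picked $\rho_c$ precisely so that the underlying polynomial $p(\gamma)$ (with $\gamma=\chi^4$) has a double root, and all I need to do is translate that double-root condition back to the function $q$ itself.

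First I would clear the negative power: for $\chi>0$ we have $q(\chi) = \chi^{-7}P(\chi)$, where $P(\chi) = \chi^7 q(\chi) = a_R\chi^8 - \tfrac{1}{8}\sigma^2 - 2\pi\rho_c\chi^{12}$ is a genuine polynomial. By the product rule,
\begin{equation*}
q'(\chi) = -7\chi^{-8}P(\chi) + \chi^{-7}P'(\chi),
\end{equation*}
so it suffices to prove $P(\phi_c)=0$ and $P'(\phi_c)=0$.

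Next I would make the substitution $\gamma=\chi^4$ and set $\tilde p(\gamma) = a_R\gamma^2 - \tfrac{1}{8}\sigma^2 - 2\pi\rho_c\gamma^3$, so that $P(\chi)=\tilde p(\chi^4)$. The critical point condition $\tilde p'(\gamma_c)=0$ that was solved in the proof of Proposition~\ref{prop1:16july12} gave $\gamma_c = a_R/(3\pi\rho_c) = \phi_c^{4}$, and the value of $\rho_c$ was chosen precisely so that $\tilde p(\gamma_c)=0$. Therefore $P(\phi_c)=\tilde p(\gamma_c)=0$, which, combined with $q(\chi)=\chi^{-7}P(\chi)$, yields $q(\phi_c)=0$.

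For the derivative, the chain rule gives $P'(\chi) = 4\chi^{3}\tilde p'(\chi^{4})$. Evaluating at $\chi=\phi_c$ and using $\tilde p'(\gamma_c)=0$ shows $P'(\phi_c)=0$, hence $q'(\phi_c)=\phi_c^{-7}P'(\phi_c)=0$. There is no real obstacle in this proof; the only subtlety is notational — one must make sure that the negative-power term $-\tfrac{1}{8}\sigma^2\chi^{-7}$ in $q$ is handled by first clearing it, and that the critical-point conditions on $\tilde p$ transfer correctly to $P$ under $\chi\mapsto\chi^4$, both of which are routine.
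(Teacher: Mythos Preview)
Your proposal is correct and is essentially the same approach as the paper's own proof, which simply states that the result follows immediately from the proof of Proposition~\ref{prop1:16july12} or by direct computation. You have merely made explicit the routine translation of the double-root condition on $p(\gamma)$ back to $q(\chi)$ that the paper leaves implicit.
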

\begin{proof}
This follows immediately from the proof of Proposition~\ref{prop1:16july12} or by direct computation.
\end{proof}
Now we show that $\rho_c$ is a critical value of \eqref{eq3:29jun12}.
\begin{proposition}\label{prop1:29jun12}
Let $\rho(x) \in C(\cM)$.
Then the constant $\rho_c$ defined in Corollary~\ref{cor1:29jun12} has the
property that Eq.\ \eqref{eq3:29jun12} has a positive solution if
$~0< \rho \le \rho_c$ and has no positive solution if $\rho > \rho_c$.
\end{proposition}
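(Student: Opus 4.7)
The plan is to handle the two halves of the statement separately, using the maximum principle for non-existence and the method of sub- and super-solutions for existence. Throughout, write $q_{\rho}(\chi) = a_R\chi - \tfrac{1}{8}\sigma^2\chi^{-7} - 2\pi\rho\chi^5$, so \eqref{eq3:29jun12} becomes $-\Delta\phi + q_{\rho(x)}(\phi) = 0$.

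\textbf{Non-existence when $\rho(x)>\rho_c$.} First I would sharpen Proposition~\ref{prop1:16july12}: since replacing $\rho_c$ by any larger value $\rho^{*}>\rho_c$ strictly decreases $q_{\rho}(\chi)$ pointwise on $(0,\infty)$, the cubic analysis carried out in its proof forces $q_{\rho^{*}}(\chi)<0$ for every $\chi>0$. Now suppose $\phi$ were a positive solution, and let $x_{*}\in\cM$ be a point where $\phi$ attains its minimum on the closed manifold. Then $\Delta\phi(x_{*})\ge 0$, so
\[ q_{\rho(x_{*})}(\phi(x_{*})) \;=\; \Delta\phi(x_{*}) \;\ge\; 0, \]
yet the assumption $\rho(x_{*})>\rho_c$ together with the preceding sharpening forces $q_{\rho(x_{*})}(\phi(x_{*}))<0$, a contradiction.

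\textbf{Existence when $0<\rho(x)\le\rho_c$.} I would build constant sub- and super-solutions and invoke a standard monotone-iteration argument for semilinear elliptic equations on $\cM$. The constant $\phi_{+}=\phi_{c}$ is a super-solution because $-\Delta\phi_{+}=0$ and
\[ q_{\rho(x)}(\phi_c)\;=\;q_{\rho_c}(\phi_c)\;+\;2\pi\bigl(\rho_c-\rho(x)\bigr)\phi_c^{5}\;\ge\;0 \]
by Corollary~\ref{cor1:29jun12} and $\rho(x)\le\rho_c$. For a sub-solution, note that $q_{\rho}(\chi)\to-\infty$ as $\chi\to 0^{+}$ uniformly for $\rho\in(0,\rho_c]$, so a sufficiently small constant $\phi_{-}=\varepsilon>0$ satisfies $q_{\rho(x)}(\varepsilon)\le 0$ pointwise, and clearly $\phi_{-}<\phi_{+}$. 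A classical sub/super-solution theorem applied to the shifted operator $-\Delta+K$, with $K$ chosen so that $K\phi-q_{\rho(x)}(\phi)$ is nondecreasing on $[\varepsilon,\phi_c]$, then produces a solution $\phi\in C^{2,\alpha}(\cM)$ with $\phi_{-}\le\phi\le\phi_{+}$. In the degenerate case $\rho\equiv\rho_c$ the constant $\phi\equiv\phi_c$ is itself such a solution by Corollary~\ref{cor1:29jun12}.

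\textbf{Main obstacle.} The principal technicality is the singularity of $\phi^{-7}$ at the origin: $q_{\rho}$ is not globally Lipschitz in $\phi$, so the monotone iteration must be restricted to the invariant interval $[\varepsilon,\phi_c]$, with the shift $K$ large enough to absorb $\partial_{\phi}q_{\rho}$ on that interval. Verifying that the iterates remain trapped inside $[\phi_{-},\phi_{+}]$ is a routine application of the maximum principle of Appendix~\ref{maxp}, and the required a priori bounds are essentially those already exploited in the classical Lichnerowicz-equation literature.
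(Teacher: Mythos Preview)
Your proof is correct and follows essentially the same strategy as the paper: a maximum-principle argument for non-existence when $\rho>\rho_c$, and constant sub/super-solutions for existence when $\rho\le\rho_c$. The one minor difference is that the paper, for $\rho<\rho_c$, constructs its super-solution as any constant strictly between the two positive roots of $q_{\hat\rho}(\chi)=a_R\chi-\tfrac18\sigma^2\chi^{-7}-2\pi\hat\rho\,\chi^5$ (where $\hat\rho=\sup\rho$), treating $\rho\equiv\rho_c$ separately; your uniform choice $\phi_{+}=\phi_c$ for all $0<\rho\le\rho_c$ is a slight simplification that avoids this case split.
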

\begin{proof}
Let $q(\chi)$ be defined as in Corollary~\ref{cor1:29jun12}.
If $\phi>0$ solves \eqref{eq3:29jun12}, then
\begin{align}\label{eq2:16july12}
\Delta \phi = a_R\phi-\frac18\sigma^2\phi^{-7}-2\pi\rho\phi^5 = f(x,\phi).
\end{align}
We observe that if $\rho > \rho_c$, then $\check{\rho} = \inf_{x\in\cM} \rho > \rho_c$ and for $\chi>0$,
\begin{align}\label{eq3:16july12}
f(x,\chi) = a_R\chi -\frac18\sigma^2\chi^{-7}-2\pi\rho\chi^5 \le a_R\chi -\frac18\sigma^2\chi^{-7}-2\pi\check{\rho}\chi^5 < q(\chi).
\end{align}
Therefore if $\rho >\rho_c$, \eqref{eq2:16july12} and \eqref{eq3:16july12} imply that any positive solution $\phi$ to \eqref{eq3:29jun12}
satisfies
$$
\Delta \phi = f(x,\phi) < q(\phi) \le 0.
$$
An application of the maximum principle \eqref{maxp} implies that if $\rho > \rho_c$, then \eqref{eq3:29jun12} has no solution.

To verify that
\eqref{eq3:29jun12} has a solution if $\rho \le \rho_c$, first observe that Corollary~\ref{cor1:29jun12} implies
that
\begin{align}\label{eq5:29jun12}
\phi_c = \left( \frac{a_R}{3\pi\rho_c}\right)^{\frac14} = \left( \frac{R}{24\pi\rho_c}\right)^{\frac14}, 
\end{align}
solves Eq.\ \eqref{eq3:29jun12} when $\rho = \rho_c$.
If $\rho < \rho_c$, the properties of $q(\chi)$ imply that the polynomial
$$q_1(\chi) = a_R\chi - \frac18\sigma^2\chi^{-7} -2\pi\hat{\rho}\chi^5, \quad \quad \hat{\rho} = \sup_{x\in\cM}\rho(x), $$ 
will have two positive roots $\chi_1 < \chi_2$.
Therefore, any $\phi_+$ satisfying $0<\chi_1 < \phi_+ < \chi_2$ will be a positive super-solution to \eqref{eq3:29jun12} given that 
$$
f(x,\chi) > q_1(\chi) = a_R\chi - \frac18\sigma^2\chi^{-7} -2\pi\hat{\rho}\chi^5.
$$ 
Similarly, we may choose
a positive sub-solution $\phi_-< \phi_+$ to \eqref{eq3:29jun12} by choosing any sufficiently small $\phi_-$ satisfying $0< \phi_- < \chi_3$, where $\chi_3$ is the lone positive root of
$$q_2(\chi) = a_R\chi- \frac18\sigma^2\chi^{-7}.$$ 
We can then apply the method of sub- and super-solutions 
outlined in Section~\ref{subsup} to solve \eqref{eq3:29jun12}.
\end{proof}

The next result extends Proposition~\ref{prop1:29jun12} to the case when $\la \ne 0$ and indicates that
$\rho_c$ is also a critical value for the decoupled problem \eqref{eq10:29jun12}.

\begin{corollary}\label{cor1:30jun12}
Let $\rho(x) \in C(\cM)$ and suppose that $\tau$ is a constant and that $$\rho_c = \frac{R^{\frac32}}{24\sqrt{3}|\sigma|\pi}.$$ 
There exists an $\ee > 0$ such that there is no positive solution to \eqref{eq10:29jun12} if $\rho > \rho_c$ and $-\ee < \la < 0$, and there exists a positive solution to \eqref{eq10:29jun12} if $0 < \rho \le \rho_c$ and $0 \le \la <  \ee$.
Finally, if $\rho = \rho_c$ and $\la$ is sufficiently small, then \eqref{eq10:29jun12} has a solution if and only if $ \la \ge 0$.
\end{corollary}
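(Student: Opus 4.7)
The plan is to absorb the $\lambda$-dependence of \eqref{eq10:29jun12} into an effective density and then reduce everything to Proposition~\ref{prop1:29jun12}. Since $\tau$ is constant, equation \eqref{eq10:29jun12} is equivalent to
\begin{equation*}
-\Delta\phi + a_R\phi - \tfrac{1}{8}\sigma^2\phi^{-7} - 2\pi\,\tilde\rho(\lambda,x)\phi^5 = 0,
\end{equation*}
where $\tilde\rho(\lambda,x) := \rho(x)e^{-\lambda} - \lambda^2 a_\tau/(2\pi)$. Thus a positive $\phi$ solves \eqref{eq10:29jun12} for a given $\lambda$ if and only if it solves the $\lambda=0$ problem \eqref{eq3:29jun12} with $\rho$ replaced by $\tilde\rho(\lambda,\cdot)$. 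The entire corollary then reduces to comparing $\tilde\rho(\lambda,\cdot)$ with $\rho_c$ for small $|\lambda|$ and invoking Proposition~\ref{prop1:29jun12}.

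For the first assertion, assume $\rho(x) > \rho_c$ on all of $\cM$. Since $\tilde\rho(0,x) = \rho(x) > \rho_c$ and $(\lambda,x)\mapsto \tilde\rho(\lambda,x)$ is jointly continuous on $\mathbb{R}\times\cM$, compactness of $\cM$ yields an $\epsilon_1 > 0$ such that $\tilde\rho(\lambda,x) > \rho_c$ on $\cM$ whenever $|\lambda| < \epsilon_1$. Proposition~\ref{prop1:29jun12} then rules out a positive solution, in particular for $-\epsilon_1 < \lambda < 0$.

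For the second assertion, if $0 < \rho(x) \le \rho_c$ and $\lambda \ge 0$, the inequalities $e^{-\lambda}\le 1$ and $-\lambda^2 a_\tau/(2\pi) \le 0$ immediately give $\tilde\rho(\lambda,x) \le \rho(x) \le \rho_c$. Moreover $\tilde\rho(0,x)=\rho(x) > 0$, so by continuity and compactness $\tilde\rho(\lambda,\cdot) > 0$ on $\cM$ for $0\le\lambda < \epsilon_2$ for some $\epsilon_2 > 0$. Proposition~\ref{prop1:29jun12} applied to the density $\tilde\rho(\lambda,\cdot)$ then produces a positive solution.

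The third assertion is the most delicate but reduces to a first-order Taylor expansion. The ``if'' direction is just the second assertion applied to $\rho\equiv\rho_c$. For the ``only if'' direction, expand
\begin{equation*}
\tilde\rho(\lambda) - \rho_c \;=\; \rho_c(e^{-\lambda}-1) - \frac{\lambda^2 a_\tau}{2\pi} \;=\; -\rho_c\lambda + O(\lambda^2),
\end{equation*}
which is strictly positive for all sufficiently small $\lambda < 0$, say for $-\epsilon_3 < \lambda < 0$. By Proposition~\ref{prop1:29jun12}, no positive solution exists in this regime. Setting $\epsilon = \min\{\epsilon_1,\epsilon_2,\epsilon_3\}$ gives the desired single constant. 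I do not foresee a genuine obstacle here; once the reformulation via $\tilde\rho$ is made, the argument is a purely elementary continuity-and-Taylor computation layered on top of Proposition~\ref{prop1:29jun12}, with the mild subtlety being the need for uniformity in $x\in\cM$, which is handled by compactness.
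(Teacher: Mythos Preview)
Your proof is correct and is, in fact, a cleaner packaging of the same underlying idea the paper uses. Both arguments hinge on the Taylor expansion of the $\lambda$-dependent coefficient of $\phi^5$ around $\lambda=0$, observing that the linear term $2\pi\rho_c\lambda$ (equivalently, your $-\rho_c\lambda$ in $\tilde\rho-\rho_c$) dominates for small $|\lambda|$ and pushes the effective density above or below $\rho_c$ according to the sign of $\lambda$.

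The difference is structural. The paper does not introduce the effective density $\tilde\rho(\lambda,\cdot)$ explicitly; instead it reruns the maximum-principle and sub/super-solution machinery of Proposition~\ref{prop1:29jun12} directly on the $\lambda$-dependent polynomial $f(x,\chi,\lambda)=a_R\chi+\lambda^2 a_\tau\chi^5-\tfrac18\sigma^2\chi^{-7}-2\pi\rho e^{-\lambda}\chi^5$, bounding it above and below by auxiliary polynomials $p_1,p_2$ built from $\check\rho$ and $\hat\rho$, and comparing those to $q(\chi)+g(\lambda)\chi^5$ with $g(\lambda)=\lambda^2 a_\tau+2\pi\rho_c\lambda+o(\lambda^2)$. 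Your reformulation absorbs all of this into the single observation that the equation \emph{is} Proposition~\ref{prop1:29jun12}'s equation with density $\tilde\rho(\lambda,\cdot)$, so the proposition can be invoked as a black box. This buys you a shorter argument with no repetition of the barrier constructions; the paper's inlined version is more self-contained but essentially duplicates work already done. The compactness-of-$\cM$ remark you make to get uniformity in $x$ is exactly what is needed and is implicit in the paper's use of $\check\rho$ and $\hat\rho$.
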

\begin{proof}
Again, we observe that if $\phi>0$ solves \eqref{eq10:29jun12}, then
\begin{align}\label{eq4:16july12}
\Delta \phi =  a_R\phi+\la^2a_{\tau}\phi^5-\frac18\sigma^2\phi^{-7}-2\pi\rho e^{-\la}\phi^5 = f(x,\phi,\la).
\end{align}
Let $q(\chi)$ be as in Corollary~\ref{cor1:29jun12} and define
$$p_1(\chi,\la) = a_R\chi+\la^2a_{\tau}\chi^5-\frac18\sigma^2\chi^{-7}-2\pi\check{\rho}e^{-\la}\chi^5,$$
where $\check{\rho} = \inf_{x\in\cM}\rho(x)$.
It is clear that $f(x,\phi,\la) \le p_1(\phi,\la)$ for any $\phi >0$, and for $\la < 0$ and $\rho > \rho_c$ we have that
\begin{align}\label{eq1:30jun12}
p_1(\chi,\la) &= a_R\chi +\la^2a_{\tau}\chi^5-\frac18\sigma^2\chi^{-7}-2\pi\check{\rho}e^{-\la}\chi^5\\
&\le  a_R\chi +(\la^2a_{\tau}-2\pi\rho_c+ 2\pi\rho_c \la +o(\la^2))\chi^5-\frac18\sigma^2\chi^{-7}\nonumber\\
&= q(\chi) + (\la^2a_{\tau}+ 2\pi\rho_c \la + o(\la^2))\chi^5 = q(\chi) +g(\la)\chi^5.\nonumber
\end{align}
Here we observe that $g(\la) \to 0$ as $\la \to 0$, and for $|\la|$ sufficiently small, $g(\la) < 0$ if $\la<0$.
By Proposition~\ref{prop1:16july12}, we know that if $\chi>0$ then $q(\chi)\le 0$.
So Eq.\ \eqref{eq1:30jun12} implies that if $\rho > \rho_c$ and $\la < 0$ is sufficiently small, then $f(x,\chi,\la) \le p_1(\chi,\la)< 0$, and the maximum principle then implies that \eqref{eq10:29jun12} will have no solution.

If $\rho \le \rho_c$, then define 
$$
p_2(\chi,\la) = a_R\chi +\la^2a_{\tau}\chi^5-\frac18\sigma^2\chi^{-7}-2\pi\hat{\rho}e^{-\la}\chi^5,
$$
where $\hat{\rho} = \sup_{x\in\cM}\rho(x)$.
It is clear that $f(x,\chi,\la) \ge p_2(\chi,\la)$ for all $\chi > 0$, and for $\la \le 0$ we have
\begin{align}\label{eq6:16july12}
p_2(\chi,\la) &= a_R\chi +\la^2a_{\tau}\chi^5-\frac18\sigma^2\chi^{-7}-2\pi\hat{\rho}e^{-\la}\chi^5\\
&\ge a_R\chi +(\la^2a_{\tau}-2\pi\rho_c+ 2\pi\rho_c \la +o(\la^2))\chi^5-\frac18\sigma^2\chi^{-7}\nonumber\\
&= q(\chi) + (\la^2a_{\tau}+ 2\pi\rho_c \la + o(\la^2))\chi^5 = q(\chi) +g(\la)\chi^5.\nonumber
\end{align}
Again, $g(\la) \to 0$ as $\la \to 0$ and $g(\la ) > 0$ for $\la> 0$ sufficiently small.
Therefore if
$\chi>0$, Eq.\ \eqref{eq6:16july12} implies that $f(x,\chi,\la) > p_2(\chi,\la) \ge q(\chi)$ if $\la \ge 0$.
The properties of $q(\chi)$ specified in Proposition~\ref{prop1:16july12} imply that
that for any $\la>0$, either $p_2(\chi,\la)$ has a single positive root $\chi_0$ and $p_2(\chi,\la) >0$ for all
$\chi > \chi_0$, or $~p_2(\chi,\la)$ has two distinct positive roots.
This implies that if $\la>0$ we can find a positive
super-solution $\phi_+$ to \eqref{eq10:29jun12}.
If $\la = 0$ we take $\phi_+ = \phi_c$ to be a super-solution where
$\phi_c$ is defined in Corollary~\ref{cor1:29jun12}.
Similarly, we can also find a positive sub-solution $\phi_-$
satisfying $\phi_-<\phi_+$ by choosing any sufficiently small $0< \phi_-<\chi_0$, where $\chi_0$ is the
unique positive root of 
$$r(\chi,\la) = a_R\chi +\la^2a_{\tau}\chi^5-\frac18\sigma^2\chi^{-7}.$$
The method of sub-and super-solutions outlined in Section~\ref{subsup} then implies that if
$\rho \le \rho_c$ and $\la \ge 0$, then \eqref{eq10:29jun12} has a solution.

Finally, we observe that if $\rho = \rho_c$, then we have that
$$
f(x,\chi,\la) = q(\chi)+ g(\la)\chi^5,
$$
where $f$ and $g$ are the same as above.
Therefore, when $\la$ is small and $\rho =\rho_c$,  
we can apply the above analysis to conclude that
\eqref{eq10:29jun12} will have a solution if and only if  $\la \ge 0$.
\end{proof}

\begin{remark}
We note that the negative sign in front of the term $2\pi\rho\chi^5$ in the polynomial
$$
q(\chi) = a_R\chi-\frac18\sigma^2\chi^{-7}-2\pi\rho\chi^5,
$$ 
played an essential role in allowing us to determine our critical density $\rho_c$ and critical solution $\phi_c$.
If this term were positive, then $q(\chi)$ would be monotonic increasing for $\chi>0$, and we would not be able to find a positive $\phi_c$ and $\rho_c$ so that $q(\phi_c) =0$ and $q'(\phi_c) = 0$.
As we saw in Corollary~\ref{cor1:30jun12} and Proposition~\ref{prop1:29jun12}, these properties of $q(\chi)$ played an important role in the existence of solutions to Eq.\ \eqref{eq10:29jun12} and Eq.\ \eqref{eq3:29jun12}.
Later in this article, we will also see that these properties of $q(\chi)$ play an important role in our non-uniqueness analysis by allowing for the kernel of the linearization of $F((\phi,\bw),\la)$ and
$G(\phi,\la)$ to be one-dimensional.
These facts further emphasize the role that terms with the ``wrong sign" (cf.~\cite{PY05}) have in the non-uniqueness phenomena associated with the CTS, CTT and XCTS formulations of the Einstein constraint equations.
\end{remark}

\subsection{Existence of a One Dimensional kernel of $D_XF((\phi_c,{\bf 0}),0)$ when $\rho= \rho_c$}\label{sec1:29jun12}

In the previous section we proved the existence of a critical density $\rho_c$ that affected whether Eq.\ \eqref{eq3:29jun12} and Eq.\ \eqref{eq10:29jun12} had positive solutions.
We now show that when $\rho = \rho_c$, the linearization of both \eqref{eq10:29jun12} and \eqref{eq6:27jun12} develops a one-dimensional kernel.

We first calculate the Fr\'{e}chet derivatives
$D_XF((\phi,\bw),\la)$ and $D_{\phi}G(\phi,\la)$.
To compute these derivatives, we need only compute the 
G\^{a}teaux derivatives given that the G-derivatives are continuous in a neighborhood of $((\phi_c,{\bf 0},0)$.
See \cite{EZ86} and Remark~\ref{rem1:28aug12}.
Therefore,
$$
D_XF((\phi_c, {\bf 0}),0) = \left.\frac{d}{dt}F((\phi_c + t\phi, t\bw),0)\right|_{t=0},
$$
where $(\phi,\bw) \in C^{2,\alpha}(\cM)\oplus C^{2,\alpha}(\cT\cM)$ satisfies $\|(\phi,\bw)\|_{C^{2,\alpha}(\cM)\oplus C^{2,\alpha}(\cT\cM)} = 1.$  

So for a given $((\phi,{\bf w}),\lambda)$, the Fr\'{e}chet derivative
$$
D_XF((\phi,\bw),\la): C^{2,\alpha}(\cM) \oplus C^{2,\alpha}(\cT\cM) \to C^{0,\alpha}(\cM) \oplus C^{0,\alpha}(\cT\cM),
$$ 
is a block matrix of operators where the first column
consists of derivatives of \\$F((\phi,\bw),\la)$ with respect to $\phi$ and the second column consists of derivatives
with respect to $\bw$.
This implies that 
\begin{align}\label{eq2:28jun12}
D_XF((\phi,\bw),\la) =  \left[ \begin{array}{cc} -\Delta +a_R +5\la^2 a_{\tau} \phi^4+7a_{\bw}\phi^{-8}-10\pi \rho_ce^{-\la}\phi^4 & \quad \overline{\mathbb{L}}\\
6\la b_{\tau}^a \phi^5 &\quad \mathbb{L}\end{array} \right],
\end{align}
where 
\begin{align}\label{eq1:28jun27}
\overline{\mathbb{L}}h = \overline{\mathbb{L}}(\phi,\bw) h =  -\frac{1}{4}\phi^{-7}\left( (\cL w)_{ab}(\cL h)^{ab}+\sigma_{ab}(\cL h)^{ab}\right),
\end{align}
and $\cL$ is the conformal Killing operator.
Similarly, in the CMC case the map 
$$
D_{\phi}G(\phi,\la):C^{2,\alpha}(\cM) \to C^{0,\alpha}(\cM),
$$
has the form
\begin{align}\label{eq11:29jun12}
D_{\phi}G(\phi,\la) =  -\Delta +a_R +5\la^2 a_{\tau} \phi^4+\frac78\sigma^2\phi^{-8}-10\pi \rho_ce^{-\la}\phi^4.
\end{align}
We now make some key observations about \eqref{eq2:28jun12}.
\begin{proposition}\label{prop3:27jun12}
Let $\phi_c$ be as in Corollary~\ref{cor1:29jun12}.
Then $F((\phi_c,{\bf 0}),0)=0$ and \\
$D_XF((\phi_c,{\bf 0}),0)$ has the form
\begin{align}\label{eq8:29jun12}
D_XF(\phi_c,{\bf 0},0) =  \left[ \begin{array}{cc} -\Delta & \tilde{\mathbb{L}}\\
0 & \mathbb{L}\end{array} \right],
\end{align}
where $\tilde{\mathbb{L}}:C^{k,\alpha}(\mathcal{TM}) \to C^{k-1,\alpha}(\mathcal{M})$ is defined by 
$$
\ol{\mathbb{L}}(\phi_c,{\bf 0})h = \tilde{\mathbb{L}}h = -\frac14\phi^{-7}_c\sigma_{ab}(\mathcal{L}h)^{ab},
$$ 
and $\mathcal{L}$ is the conformal killing operator.
\end{proposition}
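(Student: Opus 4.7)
The plan is to reduce the proposition to two direct substitutions followed by an appeal to the twin identities $q(\phi_c) = 0$ and $q'(\phi_c) = 0$ supplied by Corollary~\ref{cor1:29jun12}. No new analytic machinery is needed; the content of the statement is the algebraic coincidence that the very conditions that define the critical pair $(\rho_c, \phi_c)$ also kill the scalar diagonal entry of the linearization.

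First I would verify $F((\phi_c, {\bf 0}), 0) = 0$ by evaluating the defining expression \eqref{eq6:27jun12}. Since $\phi_c$ is a constant, $\Delta \phi_c = 0$. Setting $\bw = {\bf 0}$ reduces $a_{\bw}$ to $\tfrac18|\sigma|^2 = \tfrac18 \sigma^2$, and setting $\la = 0$ eliminates the $\la^2 a_\tau \phi^5$ contribution and forces $e^{-\la} = 1$. The Hamiltonian line then becomes exactly
\[
a_R \phi_c - \tfrac18 \sigma^2 \phi_c^{-7} - 2\pi \rho_c \phi_c^5 \;=\; q(\phi_c) \;=\; 0,
\]
by Corollary~\ref{cor1:29jun12}. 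The momentum line collapses to $\mathbb{L}{\bf 0} + 0\cdot b_\tau^a \phi_c^6 = 0$.

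Next I would read off the four blocks of \eqref{eq2:28jun12} at $((\phi_c,{\bf 0}),0)$. Setting $\la=0$ immediately annihilates the $5\la^2 a_\tau \phi^4$ term in the top-left block and kills the entire bottom-left block $6\la b_\tau^a \phi^5$. Setting $\bw = {\bf 0}$ in \eqref{eq1:28jun27} removes the $(\cL\bw)_{ab}(\cL h)^{ab}$ piece, leaving $\overline{\mathbb{L}}(\phi_c,{\bf 0})h = -\tfrac14 \phi_c^{-7}\sigma_{ab}(\cL h)^{ab} = \tilde{\mathbb{L}} h$, which is the top-right entry of the claimed form. The bottom-right is unchanged and equals $\mathbb{L}$.

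The only nontrivial point is the scalar diagonal entry, which after the above simplifications reads
\[
-\Delta + a_R + \tfrac{7}{8}\sigma^2 \phi_c^{-8} - 10\pi \rho_c \phi_c^4.
\]
I would observe that the three non-differential terms are exactly the value of $q'(\phi_c)$: indeed, differentiating $q(\chi) = a_R\chi - \tfrac18\sigma^2 \chi^{-7} - 2\pi\rho_c\chi^5$ yields $q'(\chi) = a_R + \tfrac78 \sigma^2 \chi^{-8} - 10\pi\rho_c \chi^4$, and Corollary~\ref{cor1:29jun12} gives $q'(\phi_c) = 0$. Hence the top-left block collapses to $-\Delta$, completing the proof. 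The main thing to be careful about is bookkeeping the two independent simplifications ($\la=0$ killing certain terms, $\bw={\bf 0}$ killing others) so that the remaining scalar coefficients are correctly identified with $q'(\phi_c)$ rather than with $q(\phi_c)/\phi_c$ or some other combination; beyond that, the argument is a calculation.
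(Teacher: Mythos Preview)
Your proposal is correct and follows essentially the same approach as the paper: both arguments reduce to the observation that the Hamiltonian component of $F((\phi_c,{\bf 0}),0)$ is $q(\phi_c)=0$ and the scalar part of the top-left block of the linearization is $q'(\phi_c)=0$, with the remaining simplifications coming from direct substitution of $\la=0$ and $\bw={\bf 0}$ into \eqref{eq2:28jun12} and \eqref{eq1:28jun27}. Your write-up is in fact more explicit than the paper's, which simply cites Corollary~\ref{cor1:29jun12} and states that \eqref{eq2:28jun12} reduces to \eqref{eq8:29jun12}.
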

\begin{proof}
By Corollary~\ref{cor1:29jun12} it follows that $\phi_c$ is a  root of the polynomial
$$q(\chi) = a_R\chi -\frac{1}{8}\sigma^2\chi^{-7} - 2\pi\rho_c\chi^5,$$
\noindent
and also a root of 
\begin{align}\label{eq3:28jun12}
q'(\chi) = a_R +\frac{7}{8}\sigma^2\chi^{-8} - 10\pi\rho_c\chi^4.
\end{align}
This implies that $F((\phi_c,{\bf 0}),0)=0$ and that Eq.\ \eqref{eq2:28jun12} reduces to \eqref{eq8:29jun12} when \\
${((\phi,\bw),\la) = ((\phi_c,{\bf 0}),0)}$.
\end{proof}

\begin{remark}\label{rem1:29jun12}
Corollary~\ref{cor1:29jun12} implies that \eqref{eq11:29jun12} reduces to 
\begin{align}\label{eq7:29jun12}
D_{\phi}G(\phi_c,0) = -\Delta,
\end{align}
in the CMC case.
Therefore $\text{ dim ker}(D_{\phi}G(\phi_c,0)))=1$ and it is spanned by the constant function $\phi = 1$.
\end{remark}

\begin{corollary}\label{cor1:28jun12}
Letting $\cH_1 = L^2(\cM)$ and $\cH_2 = L^2(\cT\cM)$, the $\cH_1\oplus \cH_2$-adjoint
of $D_XF((\phi_c,{\bf 0}),0)$ has the form
\begin{align}
(D_XF(\phi_c,{\bf 0},0))^* =  \left[ \begin{array}{cc} -\Delta & \quad 0\\
 \hat{\mathbb{L}} & \quad \mathbb{L}\end{array} \right],
\end{align}
where $\hat{\mathbb{L}}: C^{k,\alpha}(\cM)  \to C^{k-1,\alpha}(\cT\cM)$ is defined by
\begin{align}
\hat{\mathbb{L}}u = D^b(\frac14 \phi_c^{-7}u\sigma_{ab}).
\end{align}
\end{corollary}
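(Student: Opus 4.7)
The proof is a direct calculation verifying, for all $(\phi,\bh)\in C^{k,\alpha}(\cM)\oplus C^{k,\alpha}(\cT\cM)$ and $(u,\bv)$ in the same space, the defining identity
\begin{align*}
\langle D_XF((\phi_c,{\bf 0}),0)(\phi,\bh),(u,\bv)\rangle_{\cH_1\oplus\cH_2} = \langle (\phi,\bh), A(u,\bv)\rangle_{\cH_1\oplus\cH_2},
\end{align*}
with $A$ equal to the claimed block matrix. Because $D_XF((\phi_c,{\bf 0}),0)$ has the upper-triangular block form of Proposition~\ref{prop3:27jun12}, its $L^2$-adjoint must have the opposite, lower-triangular form, so I would reduce the task to computing the adjoint of each of the three nonzero blocks separately; all integrations by parts are boundary-term-free since $\cM$ is closed.

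The diagonal entries are self-adjoint. The scalar Laplacian $-\Delta$ is self-adjoint on $L^2(\cM)$ by two applications of the divergence theorem. For $\mathbb{L}\bw=-D_b(\cL\bw)^{ab}$, a single integration by parts rewrites $\langle \mathbb{L}\bw,\bv\rangle$ as $\int_\cM (\cL\bw)^{ab}D_b v_a\,dV_g$, and the symmetry plus trace-freeness of $(\cL\bw)^{ab}$ allows this to be recast as the manifestly symmetric bilinear form $\tfrac12\int_\cM (\cL\bw)_{ab}(\cL\bv)^{ab}\,dV_g$; hence $\mathbb{L}^*=\mathbb{L}$. The upper-right zero block in $D_XF$ produces a lower-left zero in the adjoint.

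The only real work is to show $\tilde{\mathbb{L}}^*=\hat{\mathbb{L}}$. I would begin from
\begin{align*}
\langle \tilde{\mathbb{L}}\bh, u\rangle_{L^2(\cM)} = -\tfrac14\int_\cM \phi_c^{-7}\,u\,\sigma_{ab}(\cL\bh)^{ab}\,dV_g,
\end{align*}
expand $(\cL\bh)^{ab}=D^a h^b+D^b h^a-\tfrac{2}{3}(D_c h^c)g^{ab}$, and use that $\sigma_{ab}$ is symmetric and trace-free: the divergence term drops out because $\sigma_{ab}g^{ab}=0$, and the two symmetric derivative pieces combine, collapsing $\sigma_{ab}(\cL\bh)^{ab}$ to a constant multiple of $\sigma_{ab}D^a h^b$. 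A single integration by parts then throws the $D^a$ onto $\phi_c^{-7}u\,\sigma_{ab}$, yielding a pairing of $\bh$ against a covector of the form $D^b(\mathrm{const}\cdot\phi_c^{-7}u\,\sigma_{ab})$, which is (up to index relabeling) exactly $\hat{\mathbb{L}}u$.

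The expected obstacles are purely clerical: careful bookkeeping of the index conventions and of the numerical prefactor produced by the symmetry identity $\sigma_{ab}(\cL\bh)^{ab}=2\sigma_{ab}D^a h^b$, plus verification of the mapping property $\hat{\mathbb{L}}:C^{k,\alpha}(\cM)\to C^{k-1,\alpha}(\cT\cM)$. The latter is immediate since $\phi_c$ and $\sigma_{ab}$ are smooth (in fact $\phi_c$ is constant and $|\sigma|$ is constant, so the product with $u$ and a single covariant derivative cost exactly one order of H\"older regularity), so no analytic subtlety arises.
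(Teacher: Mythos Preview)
Your proposal is correct and follows essentially the same approach as the paper: both verify the adjoint identity block by block, noting that $-\Delta$ and $\mathbb{L}$ are self-adjoint and then computing the adjoint of $\tilde{\mathbb{L}}$ via integration by parts. The only cosmetic difference is that the paper invokes directly the known formal adjoint relation between the conformal Killing operator $\cL$ and the negative divergence of a symmetric trace-free $(0,2)$ tensor, whereas you expand $(\cL\bh)^{ab}$ explicitly and use the symmetry and trace-freeness of $\sigma_{ab}$ to arrive at the same integration-by-parts step; your version is simply a more hands-on justification of the same identity.
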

\begin{proof}
Let $(u_1,\bv_1)$ and $(u_2,\bv_2)$ both be elements of $C^{2}(\cM)\oplus C^{2}(\cT \cM)$.
Then given that both $-\Delta$ and $\mathbb{L} = -D_b(\cL)^{ab}$ are self-adjoint with respect to
the $L^2(\cM)$ and $L^2(\cT\cM)$ inner products, it follows that
\begin{align}
\left\langle D_XF((\phi_c,{\bf 0}),0)\left[\begin{array}{c} u_1\\ \bv_1\end{array}\right],\left[\begin{array}{c} u_2\\ \bv_2\end{array}\right] \right\rangle 
= \int_{\cM} ( -u_1 \Delta u_2 +\bv_1\cdot\mathbb{L}\bv_2+ \tilde{\mathbb{L}}\bv_1 u_2 )dV_g,
\end{align}
where $dV_g$ is the volume element associated with $g_{ab}$ and $\tilde{\mathbb{L}}\bv_1 = -\frac14\phi^{-7}_c\sigma_{ab}(\mathcal{L}\bv_1)^{ab}$.
Given that the negative divergence of a $(0,2)$ tensor and the conformal killing operator $\mathcal{L}$ are formal adjoints (see \cite{EZ86}), we have that
\begin{align}\label{eq1:19july12}
&\int_{\cM} \tilde{\mathbb{L}}\bv_1 u_2 dV_g = \int_{\cM}\left(-\frac14 u_2\phi_c^{-7}\sigma_{ab}(\mathcal{L}\bv_1)^{ab}\right)dV_g\\
=& \int_{\cM}\left(D^{b}(\frac14 u_2 \phi^{-7}_c \sigma_{ab})\cdot \bv_1\right)dV_g = \int_{\cM}\hat{\mathbb{L}}u_2\cdot \bv_1 dV_g. \nonumber
\end{align}
Therefore,
\begin{align}
&\left\langle D_XF((\phi_c,{\bf 0}),0)\left[\begin{array}{c} u_1\\ \bv_1\end{array}\right],\left[\begin{array}{c} u_2\\ \bv_2\end{array}\right] \right\rangle =\\
&  \int_{\cM} ( -u_1 \Delta u_2 +\bv_1\cdot\mathbb{L}\bv_2+ \hat{\mathbb{L}}u_2\cdot \bv_1)dV_g
=\left\langle \left[\begin{array}{c} u_1\\ \bv_1\end{array}\right], \left[ \begin{array}{cc} -\Delta & \quad 0\\
 \hat{\mathbb{L}} & \quad \mathbb{L}\end{array} \right]\left[\begin{array}{c} u_2\\ \bv_2\end{array}\right] \right\rangle.\nonumber
\end{align}
\end{proof}

\begin{corollary}\label{cor1:27jun12}
$D_XF((\phi_c,{\bf 0},0)$ has a kernel of dimension 1 that is spanned by $\tiny{\left[\begin{array}{c} 1 \\ {\bf 0} \end{array} \right]}$,
and $(D_XF(\phi_c,{\bf 0},0))^*$ also has a kernel of dimension one that is spanned by $\tiny{\left[\begin{array}{c} 1 \\ {\bf 0}\end{array} \right]}$.
\end{corollary}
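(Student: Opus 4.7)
The plan is to read off both kernels from the block-triangular structure in Proposition~\ref{prop3:27jun12} and Corollary~\ref{cor1:28jun12}, using two facts: (i) $-\Delta$ on the closed manifold $\cM$ has kernel exactly the constants, and (ii) $\mathbb{L}=-D_b(\cL\cdot)^{ab}$ has trivial kernel because $\cM$ admits no conformal Killing field. The third ingredient, needed only for the adjoint computation, is the hypothesis $D_a\sigma^{ab}=0$ (so that $\sigma$ is transverse-traceless).

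\textbf{Kernel of $D_XF((\phi_c,\mathbf 0),0)$.} Suppose
\begin{align*}
\begin{bmatrix} -\Delta & \tilde{\mathbb L} \\ 0 & \mathbb L \end{bmatrix}
\begin{bmatrix} \phi \\ \bw \end{bmatrix}=\begin{bmatrix} 0 \\ \mathbf 0 \end{bmatrix}.
\end{align*}
The second row gives $\mathbb L\bw=0$. I would pair this with $\bw$ in $L^2(\cT\cM)$ and integrate by parts, using that $-D_b(\cL\cdot)^{ab}$ and $\cL$ are formal adjoints, to get $\tfrac12\|\cL\bw\|^2_{L^2}=0$, hence $\cL\bw=0$. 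Since $(\cM,g)$ has no conformal Killing field this forces $\bw=\mathbf 0$. The first row then reduces to $-\Delta\phi=0$ on the closed manifold $\cM$, so $\phi$ is constant. Therefore $\ker D_XF((\phi_c,\mathbf 0),0)=\mathrm{span}\{[1,\mathbf 0]^\top\}$ and it is one-dimensional.

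\textbf{Kernel of the adjoint.} Suppose
\begin{align*}
\begin{bmatrix} -\Delta & 0 \\ \hat{\mathbb L} & \mathbb L \end{bmatrix}
\begin{bmatrix} u \\ \bv \end{bmatrix}=\begin{bmatrix} 0 \\ \mathbf 0 \end{bmatrix}.
\end{align*}
The first row yields $u\equiv c$ for some constant $c$. Substituting into the definition of $\hat{\mathbb L}$,
\begin{align*}
\hat{\mathbb L}u=D^b\!\left(\tfrac14\,\phi_c^{-7}\,c\,\sigma_{ab}\right)=\tfrac{c}{4\phi_c^{7}}\,D^b\sigma_{ab}=0,
\end{align*}
using that $\phi_c$ is the constant from Corollary~\ref{cor1:29jun12} and that $\sigma$ is divergence-free. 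The second row then collapses to $\mathbb L\bv=\mathbf 0$, and the no-conformal-Killing-field hypothesis gives $\bv=\mathbf 0$ exactly as in the first part. Hence $\ker (D_XF((\phi_c,\mathbf 0),0))^{*}=\mathrm{span}\{[1,\mathbf 0]^\top\}$, also one-dimensional.

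There is no real obstacle here; the only point that needs care is the adjoint direction, where one must remember that a constant $u$ can be pulled outside $D^b$ precisely because $\phi_c$ and $c$ are constants and $\sigma$ is transverse (so $D^b\sigma_{ab}=0$). If instead one tried to move through $\hat{\mathbb L}$ before showing $u$ is constant, the computation would be more delicate; doing the $-\Delta u=0$ step first sidesteps this.
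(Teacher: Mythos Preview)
Your proof is correct and follows essentially the same approach as the paper's own proof: both arguments exploit the block-triangular structure to first kill the vector-field component via the no-conformal-Killing-field hypothesis (for $D_XF$) or via the divergence-free property of $\sigma$ (for the adjoint), and then conclude that the scalar component is constant from $-\Delta u=0$ on the closed manifold. Your write-up is in fact slightly more explicit than the paper's in justifying why $\mathbb L\bw=0$ forces $\bw=\mathbf 0$ (you spell out the integration-by-parts identity $\langle \mathbb L\bw,\bw\rangle=\tfrac12\|\cL\bw\|^2$), whereas the paper simply invokes the standing assumption directly.
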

\begin{proof}
We solve for $\tiny{\left[\begin{array}{c} u\\ \bv\end{array}\right]}$ $\in C^{2,\alpha}(\cM)\oplus C^{2,\alpha}(\cT \cM)$ such that
$$
D_XF((\phi_c,{\bf 0}),0)\left[\begin{array}{c} u\\ \bv\end{array}\right] = \left[\begin{array}{cc}-\Delta& \tilde{\mathbb{L}}\\0& \mathbb{L}\end{array}\right]\left[\begin{array}{c}u \\ \bv\end{array}\right] = \left[\begin{array}{c}0\\{\bf 0}\end{array}\right].
$$
Given that $g_{ab}$ admits no conformal killing fields, we must have that $\bv = 0$.
This implies that
$$
0 = -\Delta u - \frac14\phi_c^{-7}(\sigma_{ab}(\cL\bv)^{ab})  = -\Delta u \Longrightarrow u \quad \text{is a constant}.
$$ 
Therefore $\tiny{\left[\begin{array}{c} 1 \\ {\bf 0} \end{array} \right]}$ spans $\text{ker}(D_XF((\phi_c,{\bf 0}),0)$.

Similarly, we solve for $\tiny{\left[\begin{array}{c} u\\ \bv\end{array}\right]}$ such that
$$
(D_XF((\phi_c,{\bf 0}),0))^*\left[\begin{array}{c} u\\ \bv\end{array}\right] = \left[\begin{array}{cc}-\Delta& 0\\ \hat{\mathbb{L}}& \mathbb{L}\end{array}\right]\left[\begin{array}{c}u \\ \bv\end{array}\right] = \left[\begin{array}{c}0\\{\bf 0}\end{array}\right].
$$
This implies that $u$ is a constant and that
$$
0= \hat{\mathbb{L}}u + \mathbb{L}\bv = \nabla^b(\frac14\phi_c u \sigma_{ab}) +\mathbb{L}\bv = \frac14\phi_c u \nabla^b\sigma_{ab} +\mathbb{L}\bv.
$$
Given that $\sigma_{ab}$ is divergence free, we have that $\nabla^b\sigma_{ab} = 0$, which implies that $\bv = 0$.
Therefore
$\tiny{\left[\begin{array}{c} 1 \\ {\bf 0} \end{array} \right]}$ spans $\text{ker}(D_XF((\phi_c,{\bf 0}),0)^*)$.
\end{proof}

We can now prove Theorems~\ref{thm4:29jun12} and \ref{thm2:29jun12}.
The proofs are an immediate consequence of the preceding results, but
we summarize them here in the proof for convenience.

\subsection{Proofs of Theorems~ \ref{thm4:29jun12} and \ref{thm2:29jun12}: Critical Parameter and Kernel Dimension}

Proposition~\ref{prop1:29jun12} implies the existence of critical values 
$$\rho_c = \left( \frac{R}{24\pi\rho_c}\right)^{\frac14} \quad \text{and} \quad  \phi_c = \left(\frac{a_R}{3\pi \rho_c}\right)^{\frac14},$$
such that if
$$q(\chi) = a_R\chi -\frac18\sigma^2\chi^{-7}-2\pi\rho_c\chi^5,$$ 
then $q(\phi_c) = q'(\phi_c) = 0$.
By Remark~\ref{rem1:29jun12} we have that the linearization
\eqref{eq11:29jun12} in the CMC case reduces to $-\Delta$.
This proves Theorem~\ref{thm4:29jun12}.
Similarly,
in Proposition~\ref{prop3:27jun12} we explicitly determined $D_XF((\phi_c,{\bf 0}),0)$, and in 
Corollary~\ref{cor1:29jun12} we showed that it has a kernel spanned by the constant vector $\tiny{\left[\begin{array}{c}1 \\ 0\end{array}\right]}$.
This proves Theorem~\ref{thm2:29jun12}.

\subsection{Fredholm properties of the operators $D_XF((\phi_c,{\bf 0}),0)$ and $D_{\phi}G(\phi_c,0)$}\label{sec1:26july12}

Now that we have shown that the linearizations $D_XF((\phi_c,{\bf 0}),0)$ and $D_{\phi}G(\phi_c,0)$ have one-dimensional
kernels, we are almost ready to apply the Liapunov-Schmidt reduction.
Recall from section~\ref{sec1:14july12} that a key assumption in this reduction was that the operator be a nonlinear Fredholm operator.
Therefore, to apply this reduction in the CMC and non-CMC cases we must show that the operators $D_{\phi}G(\phi_c,0)$ and $D_XF((\phi_c,{\bf 0}),0)$
are Fredholm operators between the spaces on which they are defined.
In particular, we need to show that $D_{\phi}G(\phi_c,0)$ is a Fredholm operator
between the spaces $C^{2,\alpha}(\cM)$ and $C^{0,\alpha}(\cM)$ and that the operator $D_XF((\phi_c,{\bf 0}),0)$ is a Fredholm operator between
$C^{2,\alpha}(\cM)\oplus C^{2,\alpha}(\cT\cM)$ and $C^{0,\alpha}(\cM)\oplus C^{0,\alpha}(\cT\cM)$.

In the CMC case, we have that $D_{\phi}G(\phi_c,0) = -\Delta$.
It is well known that this operator is a Fredholm operator between the
Hilbert spaces $H^2(\cM)$ and $L^2(\cM)$ \cite{HNT07b}.
Furthermore, $-\Delta$ is a Fredholm operator between the subspaces $C^{2,\alpha}(\cM)$
and $C^{0,\alpha}(\cM)$ because of the regularity properties of the the Laplacian and the fact that these spaces continuously embed into
the Hilbert spaces $H^2(\cM)$ and $L^2(\cM)$.
See Appendix~\ref{sec1:24july12} for a more detailed discussion of these facts.

Letting $L= -\Delta$, we regard $L= L^*$ as operators from $H^2(\cM) \to L^2(\cM)$.
The Fredholm properties of these operators
allow us to make the following decompositions that are orthogonal with respect to the $L^2$-inner product:
\begin{align}\label{eq1:31july12}
L^2(\cM) = R(L^*) \oplus \text{ker}(L)\\
L^2(\cM) = R(L) \oplus \text{ker}(L^*)\nonumber.
\end{align}
In this case, these decompositions are the same given that $L$ is self-adjoint.
Therefore if we regard $C^{2,\alpha}(\cM)$ and $C^{0,\alpha}(\cM)$ as subspaces of $L^2(\cM)$, 
then we may use \eqref{eq1:31july12} to obtain the following decompositions 
\begin{align}\label{eq2:31july12}
C^{2,\alpha}(\cM) = (R(L^*)\cap C^{2,\alpha}(\cM))\oplus  \text{ker}(L), \\
C^{0,\alpha}(\cM) = (R(L)\cap C^{0,\alpha}(\cM))\oplus \text{ker}(L^*),\nonumber
\end{align}
which are also orthogonal with respect to the $L^2$-inner product.
See Appendix~\ref{sec1:24july12}
for further details.

It is not as clear that the operator 
$D_XF((\phi_c,{\bf 0}),0)$ is a Fredholm operator between the spaces
$C^{2,\alpha}(\cM) \oplus C^{2,\alpha}(\cT\cM)$ and $ C^{0,\alpha}(\cM) \oplus C^{0,\alpha}(\cT\cM)$.
For the sake of completeness, we briefly discuss this point.
As in Appendix~\ref{sec1:24july12}, we first show that $D_XF(\phi_c,{\bf 0}),0)$ is a Fredholm operator from
the Hilbert space $L^2(\cM)\oplus L^2(\cT\cM)$ to itself, where we consider the domain of definition of 
$D_XF((\phi_c,{\bf 0}),0)$ to be 
${H^2(\cM)\oplus H^2(\cT\cM)}$.
Indeed, the operator $D_XF((\phi_c,{\bf 0}),0)$
induces the bilinear form 
$$B((u_1,\bv_1),(u_2,\bv_2)): (H^1(\cM)\oplus H^1(\cT\cM))\times (H^1(\cM)\oplus H^1(\cT\cM))\to \mathbb{R},$$
where $\langle \cdot, \cdot \rangle$ is the inner product associated with $L^2(\cM)\oplus L^2(\cT\cM)$ and
\begin{align}
B((u_1,\bv_1),(u_2,\bv_2))= \left\langle  \left[ \begin{array}{cc} -\Delta & \tilde{\mathbb{L}} \\ {\bf 0} & \mathbb{L}  \end{array}\right]\left[\begin{array}{c}u_1\\\bv_1\end{array}\right], \left[\begin{array}{c}u_2\\\bv_2\end{array}\right]\right\rangle .
\end{align}
Paralleling the discussion in \ref{sec1:24july12}, we first show there exists constants $C,c>0$ such that 
$$B((u,\bv),(u,\bv))+c\langle (u,\bv),(u,\bv)\rangle \ge C\|(u,\bv)\|^2_{H^1(\cM)\oplus H^1(\cT\cM)}.$$ 
Let $c>0$ be a constant to be determined.
Then
\begin{align}\label{eq1:25july12}
&B((u,\bv),(u,\bv))+c\langle (u,\bv),(u,\bv)\rangle \\
&= \int_{\cM}\left( D^auD_au - \frac14u \phi_c^{-7} \sigma_{ab}(\cL v)^{ab}+(\cL v)^{ab}(\cL v)_{ab} +cu^2+cv^av_a\right)dV_g\nonumber \\
&\ge  \int_{\cM}\left( D^auD_au - \frac{1}{16c\ee}u^2 - \ee\phi_c^{-14} (\sigma_{ab}(\cL v)^{ab})^2 + (\cL v)^{ab}(\cL v)_{ab} +cu^2+cv^av_a\right)dV_g, \nonumber
\end{align}
where the above inequality follows from an application of Young's inequality.
The Schwartz inequality and the definition of $\cL$ then imply that
$$
\sigma_{ab}(\cL v)^{ab} = \langle \sigma,\cL\bv\rangle_g \le C|\sigma||D\bv|.
$$
Therefore 
\begin{align}\label{eq2:25july12}
\int_{\cM} \ee\phi_c^{-14} (\sigma_{ab}(\cL v)^{ab})^2  \le c(\ee)\|\bv\|^2_{1,2},
\end{align}
where $c(\ee) \to 0$ as $\ee \to 0$.
Combining \eqref{eq1:25july12} and \eqref{eq2:25july12} we have that
\begin{align}
&B((u,\bv),(u,\bv))+c\langle (u,\bv),(u,\bv)\rangle \ge \\
& (1-c(\ee))\|\bv\|^2_{1,2} + \|Du\|^2_{0,2} + (c-\frac{1}{16\ee})\|u\|^2_{0,2} \ge C(\|\bv\|^2_{1,2}+\|u\|^2_{1,2}),\nonumber
\end{align}
where the final inequality holds by choosing $\ee$ sufficiently small and $c$ sufficiently large.

The above discussion tells us that the
bilinear form
$$B((u,\bv),(u,\bv))+c\langle (u,\bv),(u,\bv)\rangle $$
 is coercive on $H^1(\cM)\oplus H^1(\cT\cM)$.
The Lax-Milgram
theorem implies that the problem 
$$(D_XF((\phi_c,{\bf 0}),0)+cI)\left[\begin{array}{c} u \\ \bv \end{array}\right] = \left[\begin{array}{c} f \\ \bg \end{array}\right]$$
has a unique weak solution $(u,\bv) \in H^1(\cM)\oplus H^1(\cT\cM)$
for each $(f,\g) \in\\
 L^2(\cM)\oplus L^2(\cT\cM)$, and elliptic regularity gives us that $(u,\bv) \in H^2(\cM)\oplus H^2(\cT\cM)$.
Therefore we conclude that the operator $D_XF((\phi_c,{\bf 0}),0)+cI$ is a bijection between $H^2(\cM)\oplus H^2(\cT\cM)$ and $L^2(\cM)\oplus L^2(\cT\cM)$.
We are the able to conclude that 
$$(D_XF((\phi_c,{\bf 0}),0)+cI)^{-1} \quad \text{exists and is compact.}$$  
Paralleling the discussion in
Appendix~\ref{sec1:24july12}, we can then conclude that the operator $D_XF((\phi_c,{\bf 0}),0)$ is a Fredholm operator between
$H^2(\cM)\oplus H^2(\cT\cM)$ and \\
$L^2(\cM)\oplus L^2(\cT\cM)$.
Using the fact that $C^{0,\alpha}(\cM)\oplus C^{0,\alpha}(\cT\cM)$ embeds
continuously into  $L^2(\cM)\oplus L^2(\cT\cM)$ and invoking classical Schauder estimates, an argument similar to the argument in \ref{sec1:24july12}
implies that $D_XF((\phi_c,{\bf 0}),0)$ is Fredholm
operator between the spaces $C^{2,\alpha}(\cM)\oplus C^{2,\alpha}(\cT\cM)$ and $C^{0,\alpha}(\cM)\oplus C^{0,\alpha}(\cT\cM)$.
By applying the same argument to $D_XF((\phi_c,{\bf 0}),0)^*$, we can also conclude that this operator is a Fredholm operator between $C^{2,\alpha}(\cM)\oplus C^{2,\alpha}(\cT\cM)$ and $C^{0,\alpha}(\cM)\oplus C^{0,\alpha}(\cT\cM)$.

If $L= D_XF((\phi_c,{\bf 0}),0)$, then the fact that both $L, L^*$ are Fredholm operators from $H^2(\cM)\oplus H^2(\cT\cM) \to L^2(\cM)\oplus L^2(\cT\cM)$ allows us to decompose
${L^2(\cM) \oplus L^2(\cT\cM)}$ as in \eqref{eq1:31july12}.
Therefore, regarding $C^{2,\alpha}(\cM)\oplus C^{2,\alpha}(\cT\cM)$ and $C^{0,\alpha}(\cM)\oplus C^{0,\alpha}(\cT\cM)$ as subspaces of  $L^2(\cM) \oplus L^2(\cT\cM)$, we obtain the following decompositions that are orthogonal with respect to the
 $L^2(\cM) \oplus L^2(\cT\cM)$- inner product:
\begin{align}\label{eq3:31july12}
C^{2,\alpha}(\cM)\oplus C^{2,\alpha}(\cT\cM) = \text{ker}(L) \oplus (R(L^*)\cap (C^{2,\alpha}(\cM)\oplus C^{2,\alpha}(\cT\cM))),\\
C^{0,\alpha}(\cM)\oplus C^{0,\alpha}(\cT\cM) = \text{ker}(L^*) \oplus ( R(L)\cap (C^{0,\alpha}(\cM)\oplus C^{0,\alpha}(\cT\cM))).\nonumber
\end{align}
In the above decomposition, $L=  D_XF((\phi_c,{\bf 0}),0)$ and $\text{ker}(L), R(L), \text{ker}(L^*)$ and \\$R(L^*)$ are
all regarded as subspaces of $L^2(\cM) \oplus L^2(\cT\cM)$.

\section{Proofs of the Main Results}

\subsection{Proof of Theorem~\ref{thm3:29jun12}: Bifurcation and non-uniqueness in the CMC case}\label{sec4:14july12}

We are now ready to prove Theorem~\ref{thm3:29jun12}.
In the CMC case, our system \eqref{eq2:30apr12}
with $\rho= \rho_c$ reduces to
\begin{align}\label{eq12:2july12}
G(\phi,\la) = -\Delta\phi+ a_R\phi + \lambda^2 a_{\tau}\phi^5-\frac18\sigma^2\phi^{-7}-2\pi\rho_c e^{-\lambda}\phi^{5}.
\end{align}
To prove that solutions to \eqref{eq12:2july12} are non-unique, we will apply the Liapunov-Schmidt reduction outlined in Section~\ref{LSReduc}
and then invoke Theorem~\ref{thm1:29apr12} and Proposition~\ref{prop1:1july12}.

By Theorem~\ref{thm4:29jun12} and Remark~\ref{rem1:29jun12}, we know that $D_{\phi}G(\phi_c,0) = - \Delta.$
It follows that 
$\text{dim ker}(D_{\phi}G(\phi_c,0)) = \text{dim ker}(D_{\phi}G(\phi_c,0)^*) = 1,$
where both spaces are spanned by $\phi = 1$.

Using the notation from Section~\ref{LSReduc}, we can apply the Liapunov-Schmidt Reduction, where $\hat{v}_0 = 1$ is a basis of
$\text{ker}(D_{\phi}G(\phi_c,0)) = \text{ker}(D_{\phi}G(\phi_c,0)^*)$.
By the discussion in Section~\ref{sec1:26july12} and appendix~\ref{sec1:24july12}, we can decompose $X = C^{2,\alpha}(\cM) = X_1 \oplus X_2$ and $Y = C^{0,\alpha}(\cM) = Y_1\oplus Y_2$, where
\begin{align}
&X_1 =\text{ker}(D_{\phi}G(\phi_c,0)), \quad X_2= R(D_{\phi}G(\phi_c,0)^*)\cap C^{2,\alpha}(\cM),  \\
&Y_1 = R(D_{\phi}G(\phi_c,0))\cap C^{0,\alpha}(\cM), \quad \text{and} \quad Y_2 = \text{ker}(D_{\phi}G(\phi_c,0)^*).\nonumber
\end{align}
Letting $P:X\to X_1$ and $Q:Y\to Y_2 $ be projection operators as in Section~\ref{LSReduc}, and writing $\phi = P\phi + (I-P)\phi = v+w$, the Implicit Function Theorem applied to
\begin{align}\label{eq1:22aug12}
(I-Q)G(v+w,\la) = 0,
\end{align}
 implies that $w = \psi(v,\la)$ in a neighborhood of $(\phi_c,0)$ and $0 = \psi(\phi_c,0)$.
Plugging $\psi(v,\la)$ into
$$
QG(v+w,\la) = 0,
$$
we obtain
\begin{align}\label{eq3:19july12}
\Phi(v,\la) = QG(v+\psi(v,\la),\la) = 0.
\end{align}
All solutions to $G(\phi,\la) = 0$ in a neighborhood of $(\phi_c,0)$ must satisfy Eq.\ \eqref{eq3:19july12}.

We now observe that $D_{\la}G(\phi_c,0) = 2\pi\rho_c\phi_c^5 \ne 0$.
This implies that
\begin{align}
D_{\la}\Phi(\phi_c,0) = QD_{\la}G(\phi_c,0)  = 2\pi\rho_c\phi_c^5 \ne 0,
\end{align}
given that $Q$ is the projection onto $Y_2$ and $Y_2$ is spanned by the constant function $1$.
The Implicit Function Theorem applied to Eq.\ \eqref{eq3:19july12} implies that there exists a function $\gamma: U_1 \to V_1$ such that
$U_1\subset X_1$, $V_1 \subset \mathbb{R}$ and $\gamma(v) = \la$ in a neighborhood 
of $\phi_c$ with $\gamma(\phi_c) = 0$.

Therefore \eqref{eq3:19july12} becomes
\begin{align}\label{eq13:2july12}
g(v) =  QG(v+\psi(v,\gamma(v)),\gamma(v)),
\end{align}
and by writing $v = s + \phi_c$, which we can do for $s\in (-\delta,\delta)$ with $\delta>0$ sufficiently small, we obtain
\begin{align}\label{eq1:5oct12}
g(s) = QG(s+\phi_c + \psi(s+\phi_c,\gamma(s+\phi_c)),\gamma(s+\phi_c))=0.
\end{align} 
This implies that solutions to $G(\phi,\la) = 0$
are given by $g(s) = 0$ in a neighborhood of $(\phi_c,0)$, where
\begin{align}\label{eq14:2july12}
&\phi(s) = s+\phi_c +\psi(s+\phi_c,\gamma(s+\phi_c)), \\
&\la(s) = \gamma(s+ \phi_c) \nonumber
\end{align}
determine a differentiable solution curve through $(\phi_c,0)$.

Equation \eqref{eq14:2july12} gives us a fairly explicit representation of the continuously
differentiable curve $\{\phi(s),\la(s)\}$ provided by Theorem~\ref{thm1:29apr12}.
However,
by applying Proposition~\ref{prop1:1july12} we can determine that $\ddot{\la}(0) \ne 0$
to obtain even more information about $\{\phi(s),\la(s)\}$.
We observe that 
\begin{align}\label{eq15:2july12}
D^2_{\phi\phi}G(\phi_c,0)[\hat{v}_0,\hat{v}_0] = -7\sigma^2\phi_c^{-9}-40\pi\rho_c\phi_c^3 \ne 0.
\end{align}
Therefore  
$$-7\sigma^2\phi_c^{-9}-40\pi\rho_c\phi_c^3 \in Y_2  \Longrightarrow D^2_{\phi\phi}G(\phi_c,0)[\hat{v}_0,\hat{v}_0] \notin R(D_{\phi}G(\phi_c,0)) = Y_1,$$ 
given that $Y_1\perp Y_2$.
Proposition~\ref{prop1:1july12} implies that $\ddot{\la}(0) \ne 0$ and that a 
saddle node bifurcation occurs at $(\phi_c,0)$.

We now combine \eqref{eq14:2july12} and the fact that $\ddot{\la}(0) \ne 0$ to obtain a more
explicit representation to the solution curve $\{\phi(s),\la(s)\}$ in a neighborhood of $(\phi_c,0)$.
Define the function 
\begin{align}\label{eq15:6july12}
f(s) = \psi(s+\phi_c,\gamma(s+\phi_c)).
\end{align}
Then by Propositions~\ref{prop1:6july12} and \ref{prop1:1july12} we have that
\begin{align}\label{eq18:6july12}
&f(0) = 0,  \quad \text{and} \quad \la(0) = \gamma(\phi_c) = 0,\\
&\dot{\la}(0) = \left.\frac{d}{ds} \la(s)\right|_{s=0} = D_v\gamma(\phi_c) = 0, \nonumber \\
&\dot{f}(0) = \left.\frac{d}{ds} f(s)\right|_{s=0} = D_v\psi(\phi_c,0)+D_{\la}\psi(\phi_c,0)D_v\gamma(\phi_c) = 0.\nonumber
\end{align}
Therefore the function $f(s) = O(s^2)$.
By computing a Taylor expansion of $\la(s)$ about $s=0$ and using Eq.\ \eqref{eq18:6july12} and Eq.\ \eqref{eq14:2july12}, we find that for $s \in (-\delta,\delta)$,
\begin{align}\label{eq16:6july12}
&\phi(s) = \phi_c + s + O(s^2),\\
&\la(s) = \frac12\ddot{\la}(0)s^2 + O(s^3) \nonumber,
\end{align}
where $\ddot{\la}(0) \ne 0$.

Based on the form of $\phi(s)$ and $\la(s)$ in Eq.\ \eqref{eq16:6july12}, there exists a $\delta' \in (0,\delta)$ such that $\phi(s) <0$, $\la(s) >0$
for all $s \in [-\delta',0)$,  and $\phi(s) >0$, $\la(s) >0$ for all $s \in (0,\delta']$.
Letting $M = \min\{M_1,M_2\}$, where 
$$
M_1 = \sup_{s \in [-\delta',0]} \la(s) \quad \text{and} \quad M_2  =  \sup_{s \in [0,\delta']} \la(s),
$$
the Intermediate Value Theorem then implies that for all $\la_0 \in (0,M)$, there exists $s_1, s_2 \in [-\delta', \delta']$, $s_1 \ne s_2$, such that $\la(s_1)= \la(s_2) = \la_0$.
Based on how we chose $\delta'$, we also have that $\phi(s_1) \ne \phi(s_2)$.
This completes the proof of Theorem~\ref{thm3:29jun12}.

\subsection{Proof of Theorem~\ref{thm1:29jun12}: Bifurcation and non-uniqueness in the non-CMC case}\label{sec5:14july12}

In this section we will show that solutions to $F((\phi,\bw),0) = 0$ for the full system 
\begin{align}\label{eq4:31july12}
F((\phi,{\bf w}),\lambda) = \left[ \begin{array}{c} -\Delta \phi + a_R\phi + \lambda^2 a_{\tau}\phi^5-a_{{\bf w}}\phi^{-7}-2\pi\rho e^{-\lambda}\phi^{5}\\
\mathbb{L}{\bf w} + \lambda b_{\tau}^a\phi^6 
 \end{array} \right]
\end{align}
are non-unique, where $\tau \in C^{1,\alpha}(\cM)$ is a non-constant function.
Our approach is similar to that of the CMC case: we apply a Liapunov-Schmidt reduction to Eq.\ \eqref{eq4:31july12} to determine
an explicit solution curve through the point $((\phi_c,{\bf 0}),0)$.
The form of this curve will imply that solutions to the system \eqref{eq4:31july12} are non-unique.

By Proposition~\ref{prop3:27jun12} we know that $\text{ker}D_XF((\phi_c,{\bf 0}),0)$
takes the form
$$
D_XF(\phi_c,{\bf 0},0) =  \left[ \begin{array}{cc} -\Delta & \tilde{\mathbb{L}}\\
0 & \mathbb{L}\end{array} \right],
$$
where $\tilde{\mathbb{L}}h = -\frac14\phi^{-7}_c\sigma_{ab}(\mathcal{L}h)^{ab}$.
Corollary~\ref{cor1:27jun12} gives us that
$\text{ker}(D_XF((\phi_c,{\bf 0}),0))$ and $\text{ker}(D_XF((\phi_c,{\bf 0}),0)^*)$ are spanned by $\hat{v}_0 =\tiny{\left[\begin{array}{c}1\\ 0 \end{array}\right]}$.

Using the notation from Section~\ref{LSReduc}, we apply the Liapunov-Schmidt Reduction.
By the decomposition \eqref{eq3:31july12}, we have that 
$$
X = C^{2,\alpha}(\cM)\oplus C^{2,\alpha}(\cT \cM) = X_1 \oplus X_2 ,
$$
and
$$
Y = C^{0,\alpha}(\cM)\oplus C^{0,\alpha}(\cT \cM) = Y_1\oplus Y_2,
$$ 
where
\begin{align}\label{eq5:31july12}
&X_1 = \text{ker}(D_XF((\phi_c,{\bf 0}),0)), \\
&X_2 =  R(D_XF((\phi_c,{\bf 0}),0)^*)\cap (C^{2,\alpha}(\cM)\oplus C^{2,\alpha}(\cT\cM)),\\
&Y_1 = R(D_XF((\phi_c,{\bf 0}),0))\cap (C^{0,\alpha}(\cM)\oplus C^{0,\alpha}(\cT\cM)), \hspace{2mm}\\ 
&Y_2 = \text{ker}(D_XF((\phi_c,{\bf 0}),0)^*).
\end{align}
\medskip

Let $P:X\to X_1$ and $Q:Y\to Y_2$ be the projection operators defined using $\hat{v}_0$ as in Section~\ref{LSReduc}.
Then by writing 
$$\left[ \begin{array}{c}\phi\\ \bw\end{array}\right]= P\left[ \begin{array}{c}\phi\\ \bw\end{array}\right]+ (I-P)\left[ \begin{array}{c}\phi\\ \bw\end{array}\right] = v+y,$$ 
the Implicit Function Theorem applied to
\begin{align}\label{eq2:22aug12}
(I-Q)F(v+y,\la) = 0,
\end{align}
implies that solutions to $F((\phi,\bw),\la)=0$ satisfy
\begin{align}\label{eq2:19july12}
\Phi(v,\la) = QF(v+\psi(v,\la),\la) = 0
\end{align}
in a neighborhood of $((\phi_c,{\bf 0}),0)$, where $y = \psi(v,\la)$ in this neighborhood and \\
$(0,{\bf 0}) = \psi((\phi_c,{\bf 0}),0)$.

We now observe that 
$$
D_{\la}F((\phi_c,{\bf 0}),0) = \left[\begin{array}{c}2\pi\rho_c\phi_c^5 \\ b_{\tau}^a\phi_c^6\end{array}\right]\notin Y_1,
$$
due to the fact that
$$
 \left[\begin{array}{c}2\pi\rho_c\phi_c^5 \\ 0\end{array}\right] \in Y_2  \quad \text{and} \quad Y_1 \perp Y_2.
$$  
This implies that
\begin{align}
D_{\la}\Phi((\phi_c,{\bf 0}),0) = QD_{\la}F((\phi_c,{\bf 0}),0) =  \left[\begin{array}{c}2\pi\rho_c\phi_c^5 \\ 0\end{array}\right]\ne 0,
\end{align}
given that $Q$ is the projection onto $Y_2$.
The Implicit Function Theorem again implies that there exists a function $\gamma: U_1 \to V_1$, where
$(\phi_c,{\bf 0}) \in U_1\subset X_1$, $V_1 \subset \mathbb{R}$ and $\gamma(v) = \la$ in $U_1$
with $\gamma(\phi_c,{\bf 0}) = 0$.
Using this fact, Eq.\ \eqref{eq2:19july12} becomes
\begin{align}\label{eq2:5oct12}
g(v) =  QF(v+\psi(v,\gamma(v)),\gamma(v))=0,
\end{align}
and by writing 
$$v = (s+\phi_c)\hat{v}_0 =  s\left[\begin{array}{c}1\\0\end{array}\right] + \left[\begin{array}{c} \phi_c\\0\end{array}\right],$$ 
for $s\in (-\delta,\delta)$ with $\delta>0$ sufficiently small, we then obtain
\begin{align}\label{eq1:5july12}
g(s) = QF\left(s\hat{v}_0 +\phi_c\hat{v}_0+\psi(s\hat{v}_0 +\phi_c\hat{v}_0,\gamma(s\hat{v}_0+\phi_c\hat{v}_0)),\gamma(s\hat{v}_0+\phi_c\hat{v}_0) \right) = 0.
\end{align} 
This implies that solutions to $F((\phi,{\bf 0}),\la) = 0$ in a neighborhood of $((\phi_c,{\bf 0}),0)$
satisfy $g(s) = 0$, where
{\small
\begin{align}\label{eq2:5july12}
\left[\begin{array}{c}\phi(s)\\ \bw(s)\end{array}\right] &= s\left[\begin{array}{c}1\\0\end{array}\right] +\left[\begin{array}{c}\phi_c\\0\end{array}\right]  +\psi\left(s\left[\begin{array}{c}1\\0\end{array}\right] +\left[\begin{array}{c}\phi_c\\0\end{array}\right] ,\gamma\left(s\left[\begin{array}{c}1\\0\end{array}\right] +\left[\begin{array}{c}\phi_c\\0\end{array}\right]\right) \right),\\
\la(s) &= \gamma\left(s\left[\begin{array}{c}1\\0\end{array}\right] + \left[\begin{array}{c}\phi_c\\0\end{array}\right] \right) ,\nonumber
\end{align}}
determine a smooth solution curve through $((\phi_c,{\bf 0}),0)$.

As in the CMC case, we seek additional information so that we can further analyze
the solution curve \eqref{eq2:5july12}.
Now we apply Proposition~\ref{prop1:1july12} to determine information about $\ddot{\la}(0)$, and
then we will expand the function
\begin{align}\label{eq4:5july12}
f(s) = \psi((s+\phi_c)\hat{v}_0,\gamma((s+\phi_c)\hat{v}_0))
\end{align}
as a Taylor series to obtain a more explicit representation of $\{(\phi(s),\bw(s)),\la(s)\}$.
  
Taking the second derivative of $F((\phi,\bw),\la)$, we have that 
\begin{align}\label{eq3:5july12}
D^2_{XX}F((\phi_c,{\bf 0}),0)[\hat{v}_0,\hat{v}_0] = \left[\begin{array}{c} -7\sigma^2\phi_c^{-9}-40\pi\rho_c\phi_c^3 \\ {\bf 0}\end{array}\right] \in Y_2.
\end{align}
Given that the vector \eqref{eq3:5july12} lies in $Y_2$ and $Y_1\perp Y_2$,
$$
D^2_{XX}F((\phi_c,{\bf 0}),0)[\hat{v}_0,\hat{v}_0] \notin Y_1.
$$
We can therefore apply Proposition~\ref{prop1:1july12} to conclude that $\ddot{\la}(0) \ne 0$.

Our next goal is to expand the function $f(s)$ as a Taylor series about $0$.
In order to do this,
we use \eqref{eq2:5july12}, Proposition~\ref{prop1:6july12} and the fact that $\ddot{\la}(0) \ne 0$ to obtain
information about coefficients in this expansion.
In particular, the objective is to determine
information about the coefficient of the second order term in the expansion of $f(s)$.

By differentiating $$(I-Q)F(v+\psi(v,\la),\la) = 0,$$ with respect to $\la$ and evaluating the resulting expression
at $((\phi_c,{\bf 0}),0),$ we obtain
\begin{align}\label{eq9:5july12}
(I-Q)D_XF((\phi_c,{\bf 0}),0)D_{\la}\psi((\phi_c,{\bf 0}),0)+(I-Q)D_{\la}F((\phi_c,{\bf 0}),0) = 0.
\end{align}
Given that 
$$D_{\la}F((\phi_c,{\bf 0}),0) =\left[\begin{array}{c} 2\pi\rho_c\phi_c^5 \\ b_{\tau}^a\phi_c^6\end{array}\right],$$
and $Q$ is the projection operator onto $Y_2$, which is spanned by $\left[\begin{array}{c}1\\{\bf 0}\end{array}\right]$,
we have that
\begin{align}\label{eq10:5july12}
(I-Q)D_{\la}F((\phi_c,{\bf 0}),0) = \left[\begin{array}{c} 0 \\ b_{\tau}^a\phi_c^6\end{array}\right].
\end{align}
Equations \eqref{eq10:5july12} and \eqref{eq9:5july12} imply that
\begin{align}\label{eq8:5july12}
(I-Q)D_XF((\phi_c,{\bf 0}),0)D_{\la}\psi((\phi_c,{\bf 0}),0)= - \left[\begin{array}{c} 0 \\ b_{\tau}^a\phi_c^6\end{array}\right].
\end{align}
Given that $D_XF((\phi_c,{\bf 0}),0)$ has the form \eqref{eq8:29jun12} and the operator $\mathbb{L}$ is invertible,
Eq.\ \eqref{eq8:5july12} implies that 
\begin{align}\label{eq11:5july12}
D_{\la}\psi((\phi_c,{\bf 0}),0) = \left[\begin{array}{c}u(x)\\\bv(x) \end{array}\right], \quad \text{with}\quad \bv(x) \ne {\bf 0}.
\end{align}
As we shall see, this fact implies that $\bw(s)$ has quadratic terms in $s$.

We have one last piece of data left to determine the coefficient of the second order term in the
Taylor expansion of $f(s)$.
Differentiating $(I-Q)F(v+\psi(v,\la),\la) = 0$ twice with respect to $v$, evaluating at $((\phi_c,{\bf 0}),0)$ and applying the resulting bilinear form to $\hat{v}_0$, we obtain
\begin{align}\label{eq12:5july12}
&(I-Q)D^2_{XX}F((\phi_c,{\bf 0}),0)[\hat{v}_0,\hat{v}_0]+\\
&(I-Q)D_XF((\phi_c,{\bf 0}),0)D^2_{vv}\psi((\phi_c,{\bf 0}),0)[\hat{v}_0,\hat{v}_0] = 0.\nonumber
\end{align} 
By Eq.\ \eqref{eq3:5july12} we know that $D^2_{XX}F((\phi_c,{\bf 0}),0)[\hat{v}_0,\hat{v}_0] \in Y_2$.Because $(I-Q)$ projects
onto $Y_1 $ and $Y_1 \perp Y_2$, we have that 
\begin{align}\label{eq13:5july12}
(I-Q)D^2_{XX}F((\phi_c,{\bf 0}),0)[\hat{v}_0,\hat{v}_0] = 0.
\end{align}
Equations \eqref{eq13:5july12} and \eqref{eq12:5july12} and the invertibility of $(I-Q)D_XF((\phi_c,{\bf 0}),0)$
as an operator from $X_2$ to $Y_1$ imply that 
\begin{align}\label{eq14:5july12}
D^2_{vv}\psi((\phi_c,{\bf 0}),0)[\hat{v}_0,\hat{v}_0] = 0.
\end{align}
This was the final piece of information that we needed to to determine the second order expansion of $f(s)$.

We now expand the function $f(s)$ in Eq.\
\eqref{eq4:5july12} about $s= 0$.
We have that
\begin{align}\label{eq5:5july12}
&f(0) = \psi((\phi_c,{\bf 0}),0 ) =  \left[\begin{array}{c}0\\{\bf 0}\end{array}\right], \\
&\dot{f}(0) = D_v\psi((\phi_c,{\bf 0}),0)\hat{v}_0+ D_{\la}\psi((\phi_c,{\bf 0}),0))D_v\gamma(\phi_c,{\bf 0})\hat{v}_0  = \left[\begin{array}{c}0\\{\bf 0}\end{array}\right],\nonumber\\
&\ddot{f}(0) = D^2_{vv}\psi((\phi_c,{\bf 0}),0)[\hat{v}_0,\hat{v}_0] +  D^2_{v\la}\psi((\phi_c,{\bf 0}),0)[\hat{v}_0,D_v\gamma(\phi_c,{\bf 0})\hat{v}_0] \nonumber\\
&+  D^2_{\la v}\psi((\phi_c,{\bf 0}),0)[D_v\gamma(\phi_c,{\bf 0})\hat{v}_0,\hat{v}_0]+   D_{\la }\psi((\phi_c,{\bf 0}),0)D^2_{vv}\gamma(\phi_c,{\bf 0})[\hat{v}_0,\hat{v}_0] \nonumber\\
&+  D^2_{\la \la}\psi((\phi_c,{\bf 0}),0)[D_v\gamma(\phi_c,{\bf 0})\hat{v}_0,D_v\gamma(\phi_c,{\bf 0})\hat{v}_0] \nonumber\\
&= D_{\la }\psi((\phi_c,{\bf 0}),0)D^2_{vv}\gamma(\phi_c,{\bf 0})[\hat{v}_0,\hat{v}_0] = D_{\la}\psi((\phi_c,{\bf 0}),0)\ddot{\la}(0) \ne \left[\begin{array}{c}0\\{\bf 0}\end{array}\right],\nonumber
\end{align}
where $\ddot{f}(0)$ simplifies as a result of Proposition~\ref{prop1:1july12}, Eq.\ \eqref{eq11:5july12} and Eq.\ \eqref{eq14:5july12}, which imply
\begin{align}\label{eq4:5oct12}
&D_v\psi((\phi_c,{\bf 0}),0) = 0,     &D_v\gamma(\phi_c,{\bf 0}) = 0, \quad \quad \quad \\ 
&D^2_{vv}\psi((\phi_c,{\bf 0}),0)[\hat{v}_0,\hat{v}_0] = \left[\begin{array}{c} 0 \\ {\bf 0}\end{array}\right], & D_{\la}\psi((\phi_c,{\bf 0}),0) \ne \left[\begin{array}{c} 0 \\ {\bf 0}\end{array}\right]. \nonumber
\end{align}
Therefore it follows that 
\begin{align}\label{eq6:5july12}
f(s) =  \frac12(D_{\la}\psi(\phi_c\hat{v}_0,\gamma(\phi_c\hat{v}_0))\ddot{\la}(0))s^2+O(s^3) = \left[\begin{array}{c}\frac12 u(x)\ddot{\la}(0)\\ \frac12 \bv(x)\ddot{\la}(0)\end{array}\right]s^2 + O(s^3),
\end{align}
where we identify $D_{\la}\psi((\phi_c,{\bf 0}),0) $ with the vector $\tiny{ \left[\begin{array}{c} u(x) \\\bv(x) \end{array}\right]}$ in $C^{2,\alpha}(\cM)\oplus C^{2,\alpha}(\cT \cM)$.
By Eq.\ \eqref{eq11:5july12} we have that $\bv(x) \ne 0$ and expanding out $\la(s)$ as a second order
Taylor series about $s =0$ we obtain
\begin{align}\label{eq16:5july12}
\la(s) = \frac12\ddot{\la}(0)s^2 + O(s^3).
\end{align}
Putting together \eqref{eq2:5july12}, \eqref{eq6:5july12} and \eqref{eq16:5july12} we find that
solutions to $F((\phi,\bw),\la) = 0$ in a neighborhood of $((\phi_c,{\bf 0}),0)$ take the form
\begin{align}
&\phi(s) = \phi_c + s + \frac12 \ddot{\la}(0) u(x) s^2 + O(s^3), \label{eq17:5july12}\\
&\bw(s) = \frac12 \ddot{\la}(0) \bv(x) s^2 + O(s^3), \label{eq2:28aug12}\\
&\la(s) = \frac12\ddot{\la}(0)s^2+O(s^3),\label{eq3:28aug12}
\end{align}
where $s \in (-\delta,\delta)$ for sufficiently small $\delta>0$.

By analyzing the solution curve \eqref{eq17:5july12}-\eqref{eq3:28aug12} as we did for the curve \eqref{eq16:6july12} in the proof of Theorem~\ref{thm3:29jun12}, we can conclude that solutions to the system \eqref{eq4:31july12} are non-unique.
This completes the proof of Theorem~\ref{thm1:29jun12}.

\section{Summary}\label{sec6:14july12}

We began in Section~\ref{sec1:14july12} by introducing our notation for function spaces and presenting the basic
concepts from functional analysis and bifurcation theory that we used throughout this paper.
In particular, we gave an outline of the Liapunov-Schmidt reduction that was the basis of our non-uniqueness arguments.
Then in Section~\ref{sec2:14july12} we presented our main results, which consisted of the existence of a critical solution
where the linearizations of our system 
\begin{align}\label{eq1:26july12}
 F((\phi,\bw),\la) = \left[ \begin{array}{c}-\Delta \phi + a_R\phi + \lambda^2a_{\tau}\phi^5-a_{{\bf w}}\phi^{-7}-2\pi\rho e^{-\lambda}\phi^{5}\\
\mathbb{L}{\bf w} + \lambda b_{\tau}^a\phi^6 \end{array}\right],
\end{align}
developed a one-dimensional kernel and non-uniqueness results for solutions to \\
$F((\phi,\bw),0) = 0$ in both
the CMC and non-CMC cases.
We then set about proving these results in the following sections.
In Section~\ref{sec3:14july12}
we showed that in the CMC case there exists a critical density $\rho_c$ for the operator 
\begin{align}\label{eq2:26july12}
G(\phi,\la) = -\Delta \phi + a_R\phi +\la^2a_{\tau}- a_{{\bf w}}\phi^{-7}-2\pi\rho e^{-\lambda}\phi^{5}.
\end{align}
This density satisfied the property that if $|\la|$ was sufficiently small, then $\rho > \rho_c$ and $\la < 0$ implied
that there was no solution to $G(\phi,\la) = 0$, and if $\rho\le \rho_c$ and $\la \ge 0$ then there was a solution.
This result provided the foundation
in Section~\ref{sec1:29jun12} for showing that the linearization of \eqref{eq1:26july12} developed a one-dimensional kernel.
Then in Section~\ref{sec1:26july12}
we briefly discussed the Fredholm properties of the linearized operators $D_XF((\phi_c,{\bf 0}),0)$ and $D_{\phi}G(\phi_c,0)$ on the
Banach spaces on which they are defined.

In Section~\ref{sec4:14july12} we proved the first of our non-uniqueness results.
We showed that in the event that the mean
curvature was constant, the decoupled system \eqref{eq2:26july12} exhibited non-uniqueness.
This was indicated by the fact that the solution
curve through the point $(\phi_c,0)$ had the form 
\begin{align}\label{eq3:26july12}
\phi(s) = \phi_c + s + O(s^2),\\
\la(s) = \frac12 \ddot{\la}(0)s^2 + O(s^3)\nonumber,
\end{align}
which implied that a saddle-node bifurcation occurred at the point $(\phi_c,0)$.
We were able to determine the explicit form of the solution curve
\eqref{eq3:26july12} by applying a Liapunov-Schmidt reduction to \eqref{eq2:26july12} at the point $(\phi_c,0)$, which was possible given
that the operator \\
$D_{\phi}G(\phi_c,0)$ had a one-dimensional kernel.
Similarly, in Section~\ref{sec5:14july12} we showed that
when the mean curvature $\tau$ was an arbitrary, continuously differentiable function, solutions to $F((\phi,\bw),\la) =0$ were non-unique.
Again, this followed because we explicitly computed the solution curve through the point $((\phi_c,{\bf 0}),0)$.
In Section~\ref{sec5:14july12} we found that the solution curve through $((\phi_c,{\bf 0}),0)$ had the form
\begin{align}
&\phi(s) = \phi_c + s + \frac12 \ddot{\la}(0) u(x) s^2 + O(s^3),\\
&\bw(s) = \frac12 \ddot{\la}(0) \bv(x) s^2 + O(s^3),\\
&\la(s) = \frac12\ddot{\la}(0)s^2+O(s^3),
\end{align}
which we demonstrated by applying a Liapunov-Schmidt reduction to the system \eqref{eq1:26july12} at the point 
$((\phi_c,{\bf 0}),0)$.
Again, this was possible because of our work in Section~\ref{sec3:14july12} where we showed that the
linearization $D_XF((\phi_c,{\bf 0}),0)$ had a one-dimensional kernel.

The importance of these non-uniqueness results is that they demonstrate first and foremost that the conformal formulation
with unscaled source terms is undesirable given that solutions for this formulation will not allow us to uniquely parametrize physical solutions
to the Einstein constraint equations.
Additionally, this paper helps build on the work of Walsh in \cite{DW07} by expanding the understanding of
how bifurcation techniques can be applied to the various conformal formulations of the constraint equations.
This work is also interesting in that
the analysis conducted here helps clarify the ideas of Baumgarte, O'Murchadha, and Pfeiffer in \cite{BOP07} by
showing how terms with ``the wrong sign" that contribute to the non-monotonicity (non-convexity of the corresponding energy) of the nonlinearity in the Hamiltonian constraint directly contribute to the non-uniqueness of solutions.
Finally, it is hope of the authors that this work will also help to lay the foundation for future analysis of the uniqueness properties of
the Conformal Thin Sandwich method and the far-from-CMC solution framework established in \cite{HNT07a,HNT07b}.

\section{Appendix}\label{sec7:14july12}

\subsection{Banach Calculus and the Implicit Function Theorem}\label{sec1:10july12}

Here we give a brief review of some basic tools from functional analysis.
The following results are
presented without proof and are taken from \cite{EZ86}.
We begin with some notation.

Suppose that $X$ and $Y$ are Banach spaces and $U \subset X$ is a neighborhood of $0$.
For a given map $f:U\subset X \to Y $, we say that 
$$
f(x) = o(\|x\|), \hspace{2mm} x\to 0 \quad \text{iff} \hspace{3mm} r(x)/\|x\| \to 0 \hspace{2mm} \text{as} \hspace{2mm} x\to 0.
$$
We write $L(X,Y)$ for the class of continuous linear maps between the Banach spaces $X$ and $Y$.

\begin{definition}
Let $U \subset X$ be a neighborhood of $x$ and suppose that $X$ and $Y$ are Banach spaces.
\begin{itemize}

\item[(1)] We say that a map $f:U \to Y$ is {\bf F-differentiable} or {\bf Fr\'{e}chet differentiable} at $x$ iff
there exists a map $T\in L(X,Y)$ such that
$$
f(x+h) - f(x) = Th+o(\|h\|), \quad \text{as} \hspace{2mm} h \to 0,
$$
for all $h$ in some neighborhood of zero.
If it exists, $T$ is called the {\bf F-derivative} or {\bf Fr\'{e}chet derivative}
of $f$ and we define $f'(x) = T$.
If $f$ is Fr\'{e}chet differentiable for all $x\in U$ we say that $f$ is Fr\'{e}chet differentiable
in $U$.
Finally, we define the {\bf F-differential} at $x$ to be $df(x;h) = f'(x)h$.

\item[(2)] The map $f$ is {\bf G-differentiable} or {\bf G\^{a}teaux differentiable} at $x$ iff there exists a map
$T\in L(X,Y)$ such that
$$
f(x+tk)-f(x) = tTk +o(t), \quad \text{as} \hspace{2mm} t \to 0,
$$
for all $k$ with $\|k\|=1$ and all real numbers $t$ in some neighborhood of zero.
If it exists, $T$ is called 
the {\bf G-derivative} or {\bf G\^{a}teaux derivative} of $f$ and we define $f'(x)=T$.
If $f$ is G-differential for all $x\in U$ we say that $f$ is G\^{a}teaux differentiable in $U$.
The {\bf G-differential} at $x$ is defined to be
$d_Gf(x;h) = f'(x)h.$
\end{itemize}
\end{definition}

\begin{remark}\label{rem1:28aug12}
Clearly if an operator is F-differentiable, then it must also be \\
G-differentiable.
Moreover, if the G-derivative $f'$ exists in some neighborhood of $x$ and $f'$ is continuous at $x$, then $f'(x)$ is also the F-derivative.
This fact is quite useful for computing F-derivatives given that G-derivatives are easier to compute.
See \cite{EZ86} for a complete discussion.
\end{remark}

\noindent
We view F-derivatives and G-derivatives as linear maps $f'(x) : U \to L(X,Y)$.
More generally, we may consider higher order derivatives maps of $f$.
For example, the map ${f''(x):U \to L(X,L(X,Y))}$ is a bilinear form.
We now state some basic properties of F-derivatives.
All of the following properties also hold for G-derivatives.

The Fr\'{e}chet derivative satisfies many of the usual properties that we are accustomed to by doing calculus in $\mathbb{R}^n$.
For example, we have the chain rule.

\begin{proposition}[Chain Rule]\label{chainRule}
Suppose that $X, Y$ and $Z$ are Banach spaces and assume that $f:U \subset X\to Y$ and $g:V\subset Y \to Z$
are differentiable on $U$ and $V$ resp.\ and that $f(U) \subset V$.
Then the function $H(x) = g \circ f $, i.e.\ $H(x) = g(f(x))$, is differentiable where
$$
H'(x) = g'(f(x))f'(x)
$$
where we write $g'(f(x))f'(x)$ for $g'(f(x))\circ f'(x)$.
\end{proposition}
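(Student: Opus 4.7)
The plan is to unwind both F-derivatives to first order and then compose the expansions, keeping track of the remainders. First I would set $y = f(x)$, $A = f'(x) \in L(X,Y)$, $B = g'(y) \in L(Y,Z)$, and write
\[
f(x+h) = f(x) + Ah + r_f(h), \qquad g(y+k) = g(y) + Bk + r_g(k),
\]
where $\|r_f(h)\| = o(\|h\|)$ and $\|r_g(k)\| = o(\|k\|)$. By the continuity of $f$ at $x$ (itself an immediate consequence of F-differentiability) together with $f(U) \subset V$, the increment $k(h) := f(x+h) - f(x) = Ah + r_f(h)$ lies in $V - y$ for all $h$ in a sufficiently small neighborhood of $0$, so substituting $k = k(h)$ into the expansion for $g$ gives
\[
H(x+h) - H(x) = g(y + k(h)) - g(y) = (BA)h + Br_f(h) + r_g(k(h)).
\]

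Next I would argue that $R(h) := Br_f(h) + r_g(k(h))$ is $o(\|h\|)$. Since $B$ is bounded, $\|Br_f(h)\| \le \|B\|\,\|r_f(h)\| = o(\|h\|)$ is immediate. For the second piece I would first establish the linear control $\|k(h)\| \le (\|A\|+1)\|h\|$ for all sufficiently small $h$, which follows from $\|r_f(h)\| \le \|h\|$ eventually. Then on the set where $k(h) \ne 0$,
\[
\frac{\|r_g(k(h))\|}{\|h\|} \;\le\; (\|A\|+1)\,\frac{\|r_g(k(h))\|}{\|k(h)\|},
\]
and the right-hand side tends to $0$ because $k(h) \to 0$ as $h \to 0$ and $\|r_g(k)\|/\|k\| \to 0$ as $k \to 0$; the case $k(h) = 0$ contributes nothing.

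Combining the two estimates yields $H(x+h) - H(x) = (BA)h + o(\|h\|)$, and since $BA = g'(f(x)) \circ f'(x) \in L(X,Z)$, the definition of the F-derivative gives $H'(x) = g'(f(x)) f'(x)$, as claimed. The only delicate step is passing from the $o(\|k\|)$-estimate on $r_g$ to an $o(\|h\|)$-estimate on $r_g \circ k$, which hinges on the \emph{linear} control $\|k(h)\| = O(\|h\|)$ supplied by the bounded operator $A$; this is really the only place where any care is required, and it is the mechanism by which the two little-$o$ remainders are composed without degradation.
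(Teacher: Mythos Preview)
Your argument is correct and is exactly the standard proof of the chain rule for Fr\'{e}chet derivatives: expand $f$ and $g$ to first order, compose, and control the composite remainder via the linear bound $\|k(h)\| \le (\|A\|+1)\|h\|$. The paper itself does not prove this proposition---it is stated in the appendix without proof, with a reference to~\cite{EZ86}---so there is nothing to compare against; your write-up supplies the omitted details cleanly.
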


Given an operator $f:X\times Y \to Z$, we can also consider the partial derivative
of $f$ with respect to either $x$ or $y$.
If we fix the variable $y$ and define
$g(x) = f(x,y): X \to Z$ and $g(x)$ is Fr\'{e}chet differentiable at $x$, then the
{\bf partial derivative} of $f$ with respect to $x$ at $(x,y)$ is $f_x(x,y) = g'(x)$.
We can a make a similar definition for $f_y(x,y)$.
Finally, we observe that
we can express the F-differential of $f'(x,y)$ in terms of the partials by using
the following formula:
\begin{align}\label{eq1:7july12}
f'(x,y)(h,k) = f_x(x,y)h + f_y(x,y)k.
\end{align}
We have the following relationship between the partial derivatives and the
Fr\'{e}chet \\derivative.
\begin{proposition}
Suppose that $f:X \times Y \to Z$ is F-differentiable at $(x,y)$.
Then the partial F-derivatives
$f_x$ and $f_y$ exist at $(x,y)$ and they satisfy \eqref{eq1:7july12}.
Moreover, if 
$f_x$ and $f_y$ both exist and are continuous in a neighborhood of $(x,y)$ then $f'(x,y)$ exists as an
F-derivative and \eqref{eq1:7july12} holds.
\end{proposition}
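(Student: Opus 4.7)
The plan is to split the proposition into its two implications and handle each separately; the forward direction is essentially a restriction argument, while the reverse direction requires the mean value inequality. For the forward direction, I would assume $A := f'(x,y) \in L(X \times Y, Z)$ exists and write the defining relation $f(x+h, y+k) - f(x,y) = A(h,k) + o(\|(h,k)\|)$. Setting $k=0$ yields $f(x+h, y) - f(x,y) = A(h,0) + o(\|h\|)$, which (since $h \mapsto A(h,0)$ is continuous and linear) identifies $f_x(x,y)$ with the restriction $A(\cdot, 0)$. A symmetric argument with $h=0$ identifies $f_y(x,y)$ with $A(0,\cdot)$, and formula \eqref{eq1:7july12} then follows from linearity of $A$: $A(h,k) = A(h,0) + A(0,k) = f_x(x,y)h + f_y(x,y)k$.

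For the reverse direction, the strategy is a telescoping decomposition
\[
f(x+h, y+k) - f(x,y) = \bigl[f(x+h, y+k) - f(x, y+k)\bigr] + \bigl[f(x, y+k) - f(x,y)\bigr],
\]
where the second bracket equals $f_y(x,y)k + o(\|k\|)$ directly from existence of $f_y$ at $(x,y)$. For the first bracket I would apply the Banach space mean value inequality to the map $t \mapsto f(x+th, y+k) - f_x(x,y)(th)$ on $[0,1]$, which is differentiable in $t$ because $f_x$ exists in a neighborhood of $(x,y)$. This yields
\[
\bigl\| f(x+h, y+k) - f(x, y+k) - f_x(x,y)h \bigr\| \;\le\; \|h\| \sup_{t \in [0,1]} \bigl\| f_x(x+th, y+k) - f_x(x,y) \bigr\|,
\]
and continuity of $f_x$ at $(x,y)$ forces the supremum to vanish as $(h,k) \to 0$. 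Adding the two bracket estimates identifies the continuous linear map $(h,k) \mapsto f_x(x,y)h + f_y(x,y)k$ as the Fr\'{e}chet derivative of $f$ at $(x,y)$, proving both existence and the representation \eqref{eq1:7july12}.

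The main obstacle is a small bookkeeping one: ensuring that the line segment from $(x, y+k)$ to $(x+h, y+k)$ lies inside the neighborhood on which $f_x$ is assumed to exist and be continuous, so that the mean value inequality applies. This is handled by restricting to $(h,k)$ small enough that the entire rectangle $\{x+th : t \in [0,1]\} \times \{y+k\}$ stays in that neighborhood, which is possible by shrinking the radius. Beyond this, the only nontrivial calculus tool is the Banach space mean value inequality, which is standard and may be invoked from \cite{EZ86}; no compactness, completeness, or integration machinery is required.
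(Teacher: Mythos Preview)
Your proof is correct and is the standard argument for this classical result. Note, however, that the paper itself does not prove this proposition: it is stated in the appendix among results that are explicitly ``presented without proof and are taken from \cite{EZ86}.'' So there is no paper proof to compare against; your argument is essentially the one found in Zeidler and other standard references, with the forward direction a restriction argument and the reverse direction the usual telescoping plus mean value inequality.
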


\subsubsection{Implicit Function Theorem}

Suppose that $F:U\times V \to Z$ is a mapping with $U\subset X, V\subset Y$ and $X,Y,Z$ are real
Banach spaces.
The {\bf Implicit Function Theorem} is an extremely important tool in analyzing
the nonlinear problem
\begin{align}\label{eq1:2july12}
F(x,y) = 0.
\end{align}
We present the statement of the Theorem here, the form of which is taken from \cite{HK04}.
For a proof see \cite{EZ86,CH82}.
\begin{theorem}\label{thm1:2july12}
Let \eqref{eq1:2july12} have a solution $(x_0,y_0) \in U\times V$ such that the Fr\'{e}chet derivative of $F$ with respect
to $x$ at $(x_0,y_0)$ is bijective:
\begin{align}
&F(x_0,y_0) = 0,\\
&D_xF(x_0,y_0):\to Z \quad \text{is bounded (continuous)}\nonumber\\
&\text{with bounded inverse.}\nonumber
\end{align}
Assume also that $F$ and $D_xF$ are continuous:
\begin{align}
&F\in C(U\times V,Z),\\
&D_xF\in C(U\times V,L(X,Z)), \quad \text{where}~ L(X,Z)\nonumber\\
&\text{denotes the Banach space of bounded linear operators}\nonumber\\
&\text{from $X$ into $Z$ endowed with the operator norm.}\nonumber
\end{align}
Then there is a neighborhood $U_1\times V_1 \subset U\times V$ of $(x_0,y_0)$ and a map ${f:V_1\to U_1 \subset X}$ such
that 
\begin{align}
&f(y_0) = x_0,\\
&F(f(y),y) =0 \quad \text{for all $y\in V_1.$}\nonumber
\end{align}
Furthermore, $f \in C(V_1,X)$ and every solution to \eqref{eq1:2july12} in $U_1\times V_1$ is of the 
form $(f(y),y)$.
Finally, if $F$ is $k$-times differentiable, then $f$ is $k$-times differentiable.
\end{theorem}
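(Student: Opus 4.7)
The plan is to recast $F(x,y)=0$ as a parametrized fixed-point problem and apply the Banach contraction mapping principle. Set $A := D_x F(x_0, y_0) \in L(X,Z)$, which is invertible with bounded inverse by hypothesis, and define $T_y(x) := x - A^{-1} F(x,y)$ on a neighborhood of $(x_0, y_0)$. Then $T_y(x) = x$ is equivalent to $F(x,y)=0$, and $x_0$ is a fixed point of $T_{y_0}$. The partial F-derivative $D_x T_y(x) = I - A^{-1} D_x F(x,y)$ vanishes at $(x_0, y_0)$; this is the structural fact driving the rest of the argument.

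The main technical work is to make $T_y$ a uniform contraction on a small closed ball about $x_0$. First I would use continuity of $D_x F$ at $(x_0,y_0)$ to choose $r>0$ and a neighborhood $V_1$ of $y_0$ such that $\|I - A^{-1} D_x F(x,y)\| \le \tfrac{1}{2}$ for all $(x,y) \in \overline{B}(x_0,r) \times V_1$. The Banach-space mean-value inequality then gives the contraction bound $\|T_y(x_1) - T_y(x_2)\| \le \tfrac{1}{2} \|x_1 - x_2\|$. To ensure $T_y$ maps $\overline{B}(x_0,r)$ into itself I would shrink $V_1$ further, using continuity of $F$ together with $F(x_0,y_0)=0$ to arrange $\|A^{-1} F(x_0,y)\| \le r/2$ for $y \in V_1$, which then yields $\|T_y(x) - x_0\| \le \tfrac{1}{2}r + \tfrac{1}{2}r = r$. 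The Banach fixed-point theorem then produces a unique $f(y) \in \overline{B}(x_0,r)$ with $T_y(f(y)) = f(y)$, and choosing $U_1 = B(x_0,r)$ gives the existence and local uniqueness statements together with $f(y_0) = x_0$.

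Continuity of $f$ drops out of the same contraction estimate: writing $\|f(y_1) - f(y_2)\| = \|T_{y_1}(f(y_1)) - T_{y_2}(f(y_2))\|$, inserting $\pm T_{y_1}(f(y_2))$, and absorbing the contraction part to the left yields $\|f(y_1) - f(y_2)\| \le 2 \|A^{-1}\| \cdot \|F(f(y_2),y_1) - F(f(y_2),y_2)\|$, which tends to $0$ by continuity of $F$. For the $C^k$-regularity statement, openness of the invertible operators in $L(X,Z)$ lets me shrink $V_1$ once more so that $D_x F(f(y),y)$ remains invertible on all of $V_1$. A difference-quotient argument on the identity $F(f(y),y)\equiv 0$, combined with F-differentiability of $F$ and the already-established Lipschitz continuity of $f$, then yields the expected formula $f'(y) = -D_x F(f(y),y)^{-1} D_y F(f(y),y)$; higher regularity follows by induction on $k$, repeatedly differentiating $F(f(y),y)\equiv 0$ and using that $y \mapsto D_x F(f(y),y)^{-1}$ inherits the same order of differentiability as $F$ via smooth dependence of inversion on the invertible-operator set.

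The hard part will be the remainder estimate in the $C^1$ step: showing that the quantity $F(f(y+k),y+k) - F(f(y),y) - D_x F(f(y),y)[f(y+k)-f(y)] - D_y F(f(y),y) k$ is $o(\|k\|)$ as $k \to 0$. This needs uniform control of the partial F-derivatives of $F$ on $\overline{B}(x_0,r) \times \overline{V_1}$, together with the Lipschitz estimate on $f$, so that the F-differentiability remainder of $F$ can be bounded as $o(\|f(y+k) - f(y)\| + \|k\|) = o(\|k\|)$. Once this step is in hand, everything else in the $C^k$-statement is routine induction and bookkeeping.
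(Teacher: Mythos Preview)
Your argument is correct and is precisely the standard contraction-mapping proof of the Implicit Function Theorem. The paper does not actually supply its own proof of this statement; it merely cites the textbooks \cite{EZ86,CH82}, and what you have outlined is essentially the proof one finds there. Your handling of the self-map condition, the uniform contraction, the continuity estimate, and the $C^k$ bootstrap are all in order; the only small caveat is that in the $C^1$ step the hypotheses as stated only give continuity of $F$ and $D_xF$, not of $D_yF$, so for the final $k$-times differentiability clause you are implicitly (and correctly) using the stronger hypothesis that $F$ is $k$-times differentiable, which in particular supplies the needed partial $D_yF$.
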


\subsection{Elliptic PDE tools}

Here we assemble some useful tools for working with nonlinear elliptic partial differential equations.
Throughout this section we will assume that $\cM$ is a closed manifold with a smooth SPD
metric $g_{ab}$ and that $\Delta$ is the associated Laplace-Beltrami operator.

\subsubsection{Maximum Principle}

In this section we present a version of the maximum principle on closed manifolds.
The following result is
well-known, but we present it here for completeness.

\begin{theorem}\label{maxp}
 Let $u \in C^2(\cM)$.
Then if 
\begin{align}\label{eq4:8july12}
\Delta u \ge 0 \quad \text{or} \quad \Delta u = 0 \quad \text{or} \quad \Delta u \le 0,
\end{align}
then $u$ must be a constant.
In particular, the problem
$$\Delta u = f(x,u),$$
has no solution if $f(x,u) \ge 0$ or $f(x,u) \le 0$ unless $f(x,u) \equiv 0$.
\end{theorem}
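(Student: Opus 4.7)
The plan is to exploit the fact that $\cM$ is closed (compact without boundary), so that the divergence theorem yields a vanishing boundary term, converting pointwise sign information on $\Delta u$ into an integral identity that forces $u$ to be constant. This is a purely global argument and does not require a strong maximum principle of Hopf type.

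First I would treat the case $\Delta u \geq 0$. Since $\Delta u = D^a D_a u = \operatorname{div}(\nabla u)$ and $\cM$ has no boundary, Stokes' theorem (applied to the $1$-form dual to $\nabla u$ via $g_{ab}$) gives
\begin{align}
\int_{\cM} \Delta u \, dV_g = 0.
\end{align}
Combined with the pointwise inequality $\Delta u \geq 0$ and the continuity of $\Delta u$ (which follows from $u \in C^2(\cM)$), this forces $\Delta u \equiv 0$ on $\cM$. The case $\Delta u \leq 0$ is identical after replacing $u$ by $-u$.

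Next I would reduce the harmonic case $\Delta u = 0$ to a gradient estimate. Multiplying by $u$ and applying the divergence theorem to $\operatorname{div}(u \nabla u) = u \Delta u + g(\nabla u, \nabla u)$, the boundary term again vanishes and one obtains
\begin{align}
0 = \int_{\cM} u \Delta u \, dV_g = -\int_{\cM} |\nabla u|^2 \, dV_g.
\end{align}
Since the integrand is continuous and nonnegative, $\nabla u \equiv 0$ on $\cM$, so $u$ is constant on each connected component of $\cM$ (assumed connected in this setting, as elsewhere in the paper).

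Finally, the statement about $\Delta u = f(x,u)$ is an immediate corollary: if a $C^2$ solution $u$ exists and $f(x,u(x))$ has a definite sign everywhere, then the first part of the theorem applied to $u$ produces $u \equiv \text{const}$, hence $\Delta u \equiv 0$, hence $f(x,u(x)) \equiv 0$. There is essentially no serious obstacle here; the only delicate points are making sure the divergence theorem is invoked correctly on a closed Riemannian manifold (which is standard for $C^1$ vector fields, and $\nabla u$ and $u\nabla u$ qualify under $u \in C^2$) and noting that the conclusion ``$u$ is constant'' implicitly uses connectedness of $\cM$, a hypothesis tacitly in force throughout the paper.
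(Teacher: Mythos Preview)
Your proof is correct. The paper does not actually prove this theorem; it simply cites a reference (Protter--Weinberger) and moves on. Your argument via the divergence theorem on a closed manifold is the standard elementary route and is complete as written: integrating $\Delta u$ over $\cM$ kills the sign condition, and then the Bochner-type identity $\int u\,\Delta u = -\int |\nabla u|^2$ finishes the harmonic case. The only tacit hypothesis you flag---connectedness of $\cM$---is indeed assumed throughout the paper, so there is no gap.
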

\begin{proof}
See \cite{PW67} for a proof.
\end{proof}

\subsubsection{Method of Sub- and Super-Solutions}\label{subsup}

Here we present a theorem that provides a method to solve an elliptic problem of the form
\begin{align}
Lu = f(x,u),
\end{align}
where
\begin{align}\label{eq1:15aug12}
Lu = -\Delta u + c(x)u,\quad \text{$c(x)\in C(\cM \times \mathbb{R})$ , \hspace{3mm} $c(x) > 0$}
\end{align}
and the function $f(x,y)$ is nonlinear in the variable $y$.
\begin{theorem}
Suppose that
$f:\cM \times \mathbb{R}^+\to \mathbb{R}$
is in $C^{k}(\cM \times\mathbb{R}^+)$.
Let $L$ be of the
form \eqref{eq1:15aug12} and suppose that there exist functions
$u_-:\cM \to\mathbb{R}$ and $u_+:\cM \to \mathbb{R}$ 
such that the following hold:
\begin{enumerate}
\item $u_-,u_+ \in C^{k}(\cM),$ 
\item $0<u_-(x) \le u_+(x) \hspace{3mm}\forall x\in \cM,$ 
\item $ Lu_- \le f(x,u_-),$
\item $ Lu_+ \ge f(x,u_+).$
\end{enumerate}
Then there exists a solution $u$ to 
\begin{align}
\label{eq2:15aug12}
Lu &= f(x,u) \hspace{3mm} \text{on $\cM$,}
\end{align}
such that
\begin{itemize}
\item[(i)] $u\in C^{k}(\cM)$,
\item[(ii)] $u_-(x)\le u(x) \le u_+(x).$
\end{itemize}
\end{theorem}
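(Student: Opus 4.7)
The plan is to reduce the nonlinear problem to a monotone iteration of linear elliptic problems, exactly as in the standard sub/super-solution argument, and then bootstrap regularity using elliptic theory on the closed manifold $\cM$.

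First, I would fix a constant $K>0$ large enough that the map $y\mapsto f(x,y)+Ky$ is nondecreasing on the interval $[\check{u},\hat{u}]\subset\mathbb{R}^+$, where $\check{u}=\min_{\cM} u_-$ and $\hat{u}=\max_{\cM} u_+$. Since $f\in C^k$ and the interval is compact, choosing $K>-\inf\{\partial_y f(x,y):x\in\cM,\ y\in[\check u,\hat u]\}$ suffices. I would then define a solution operator $T$ by setting $u=Tv$ to be the unique solution of the linear problem $(L+K)u = f(x,v)+Kv$. The coefficient $c(x)+K$ is strictly positive and continuous on a closed manifold, so $L+K\colon H^2(\cM)\to L^2(\cM)$ is invertible by the Lax--Milgram theorem (coercivity is immediate), and standard Schauder estimates then give $T\colon C^{0,\alpha}(\cM)\to C^{2,\alpha}(\cM)$ continuously.

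Next I would set up the monotone iteration. Define $u_0=u_+$ and $u_{n+1}=Tu_n$. The maximum principle applied to $L+K$ (which has a strictly positive zeroth-order coefficient) is the central tool: I would use it to show that (a) if $u_-\leq v\leq u_+$ then $u_-\leq Tv\leq u_+$, and (b) $Tu_+\leq u_+$ and $Tu_-\geq u_-$. Step (a) follows from the inequality $(L+K)(Tv-u_-)\geq (f(x,v)+Kv)-(f(x,u_-)+Ku_-)\geq 0$ by the monotonicity of $f+Ky$ (and symmetrically for $u_+$); step (b) follows by writing $(L+K)(u_+-Tu_+) = (L+K)u_+ -(f(x,u_+)+Ku_+)\geq Lu_+-f(x,u_+)\geq 0$ and then invoking the maximum principle. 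Iterating (a) and (b) gives a pointwise-decreasing sequence $u_+\geq u_1\geq u_2\geq\cdots\geq u_-$.

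Now I would pass to the limit. The sequence $\{u_n\}$ is monotone and uniformly bounded in $L^\infty(\cM)$, hence converges pointwise to some $u$ with $u_-\leq u\leq u_+$. Uniform $L^\infty$-boundedness of $\{u_n\}$ and of $\{f(\cdot,u_n)+Ku_n\}$, combined with Schauder estimates for $L+K$, gives a uniform bound in $C^{2,\alpha}(\cM)$; the Arzel\`a--Ascoli theorem upgrades pointwise to $C^2$ convergence (along a subsequence), so $u\in C^{2,\alpha}(\cM)$ and $Lu=f(x,u)$. Finally, a bootstrap argument using $f\in C^k$ and repeatedly applying Schauder estimates to the equation $(L+K)u = f(x,u)+Ku$ raises the regularity of $u$ to $C^k(\cM)$.

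The main obstacle I expect is purely a matter of care rather than depth: verifying that the maximum principle applied to $L+K$ on a closed manifold gives the sign information needed in step (b), and then tracking the regularity bootstrap precisely enough to land in $C^k(\cM)$ as stated. The monotone iteration itself is standard, but on a closed manifold one must confirm that the Fredholm alternative is not an obstruction to the invertibility of $L+K$ (which it is not, because coercivity gives a trivial kernel), and one must verify that the comparison arguments go through without any boundary data to worry about.
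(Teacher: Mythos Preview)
Your proposal is correct and is precisely the standard monotone iteration argument that the cited reference carries out. The paper itself does not give a proof at all: it simply refers the reader to Isenberg's article for the details, so your write-up in fact supplies more than the paper does.
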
  
\begin{proof}
See \cite{JI95} for a proof.
\end{proof}

\subsubsection{Fredholm Properties and Liapunov-Schmidt Decompositions for \\Elliptic Operators}\label{sec1:24july12}

In this appendix we discuss the Fredholm properties of linear elliptic operators on a closed manifold.
We use these properties to form Liapunov-Schmidt decompositions for a given elliptic 
operator $L$ between certain Banach spaces.
The following treatment is taken from \cite{HK04}.

Let $u\in C^{2,\alpha}(\cM)$ and define the elliptic operator $L:C^{2,\alpha}(\cM) \to C^{0,\alpha}(\cM)$ by
\begin{align}\label{eq1:22july12}
Lu = -\sum_{i,j=1}^n (a_{ij}(x)u_{x_i})_{x_j} +\sum_{i=1}^nb_i(x)u_{x_i}+c(x)u,
\end{align}
where $a_{ij},b_i$ and $c$ are smooth, bounded coefficients where $a_{ij} = a_{ji}$.
We also assume that the $a_{ij}$ satisfy the standard elliptic property
$$
\sum_{i,j=1}^n a_{ij} \xi_i \xi_j \ge d\|\xi\|^2,
$$
where $d>0$ is constant and $\|\cdot\|$ is the Euclidean norm on $\mathbb{R}^n$.

The operator \eqref{eq1:22july12} has an associated bilinear form
\begin{align}
B(u,u) = \langle Lu, u \rangle = \langle u,L^*u\rangle,
\end{align}
where $\langle \cdot,\cdot\rangle$ is the $L^2(\cM)$ inner product and
$L^*$ is the $L^2$-adjoint defined by
\begin{align}\label{eq4:23july12}
L^* u = -\sum_{i,j=1}^n (a_{ij}(x)u_{x_i})_{x_j} -\sum_{i=1}^n(b_i(x)u)_{x_i}+c(x)u.
\end{align}
Using the bilinear form $B(u,u)$, the elliptic operator \eqref{eq1:22july12} defines an
elliptic \\
operator 
\begin{align}
L:L^2(\cM) \to L^2(\cM), \quad \text{with domain of definition $D(L) = H^2(\cM)$.}
\end{align}
It is a standard argument in linear elliptic PDE to show that there exists a $c>0$ such the operator
$L+cI:H^2(\cM) \to L^2(\cM)$ is bounded and bijective.
In particular, one
shows that there exists a $c>0$ such that the associated bilinear form
$B(u,u) + c\|u\|_2$ is coercive and then applies the Lax-Milgram Theorem to conclude
that there exists a unique weak solution $u\in H^1(\cM)$ to 
$$
Lu-cu = f \quad \text{for every $f\in L^2(\cM)$}.
$$  
Standard elliptic regularity theory implies that $u\in H^2(\cM)$ and 
the norm $\|\cdot\|_{2,2}$ makes $D(L)$ a Hilbert space.
An application of the
Open Mapping Theorem (Bounded Inverse Theorem) then implies that 
$$(L+cI)^{-1}:L^2(\cM) \to D(L),$$
is continuous.
This implies that the operator $L+cI$ is closed and that the operator
$(L+cI)-cI  = L$ is closed.
In addition, the operator
$$
K_c =(L+cI)^{-1} \in L(L^2(\cM),L^2(\cM)) \quad \text{is compact}
$$
given that the embedding $H^2(\cM) \subset L^2(\cM)$ is compact.
For $f\in L^2(\cM)$, we have the equivalence
\begin{align}\label{eq1:23july12}
&Lu = f, \quad u \in  H^2(\cM) \Leftrightarrow \\
&u-cK_cu = K_cf, \quad u \in L^2(\cM).
\end{align}
Riesz-Schauder theory implies that $(I-cK_c)$ is a Fredholm operator and the
equivalence \eqref{eq1:23july12} implies that $L$ is a Fredholm operator.

Because $L$ is a Fredholm operator of index zero, we have that
$R(L)$ is closed.
Therefore we may write 
$$
L^2(\cM) = R(L) \oplus Z_0,
$$
where $Z_0 = R(L)^{\perp}$ is the orthogonal complement with respect
to the $L^2$-inner product.
Because $D(L)$ is dense in $L^2(\cM)$ and $L$ is closed, may apply the Closed
Range Theorem to conclude that 
\begin{align}\label{eq2:23july12}
R(L) = \{f\in L^2(\cM)~|~\langle f,u\rangle = 0 \quad \text{for all}~~u \in  N(L^*)\}
\end{align}
and that $Z_0 = N(L^*)$, where $L^*:L^2(\cM) \to L^2(\cM)$ is induced by \eqref{eq4:23july12}.
Therefore
$$
L^2(\cM) = R(L)\oplus N(L^*),
$$
and if $D(L^*) = H^2(\cM)$, the above arguments imply that $L^*$ is Fredholm
operator.
So we have the following decomposition of the codomain of $L^*$:
\begin{align}\label{eq5:23july12}
L^2(\cM) = R(L^*)\oplus N(L).
\end{align}
Finally, given that $N(L) \subset D(L)= H^2(\cM)  \subset L^2(\cM)$, the decomposition
\eqref{eq5:23july12} allows us to obtain the following Liapunov-Schmidt decomposition
for the linear problem $L:H^2(\cM) \to L^2(\cM)$:
\begin{align}
H^2(\cM)&= N(L) \oplus (R(L^*)\cap H^2(\cM)), \label{eq6:23july12}\\
L^2(\cM) &= R(L)\oplus N(L^*)\label{eq1:27aug12}.
\end{align}

Now we observe that the Fredholm properties of linear elliptic operators
derived on Hilbert spaces hold for subspaces that are only Banach spaces.
We then use
these Fredholm properties to derive Liapunov-Schmidt decompositions for these Banach spaces.

Suppose that the Banach space $Z\subset L^2(\cM)$ is continuously embedded and that the domain
of definition $X \subset Z$ with a given norm is a Banach space that satisfies the following conditions:
\begin{align}\label{eq7:23july12}
&L:X \to Z \quad \text{ is continuous ,} \\
&Lu = f \quad \text{for $u\in D(L)= H^2(\cM), f\in Z \Rightarrow u \in X$.}\nonumber
\end{align}
Equation \eqref{eq7:23july12} is an elliptic regularity condition and is satisfied for a variety of spaces, most notably
$X = W^{2,p}(\cM), Z = L^p(\cM)$ and $X = C^{2,\alpha}(\cM), Z = C^{0,\alpha}(\cM)$ with the standard norms.
Then for $X$ and $Z$ satisfying \eqref{eq6:23july12} and \eqref{eq7:23july12} we have
that
\begin{align}\label{eq8:23july12}
&N(L) = N(L|_Z) \subset X, \quad \text{and}\\
&R(L) \cap Z = R(L|_Z) \quad \text{is closed in $Z$,}
\end{align}
given that $Z\subset L^2(\cM)$ is continuously embedded and $R(L)$ is closed in 
$L^2(\cM)$.
The ellipticity property \eqref{eq7:23july12} also holds for the 
adjoint $L^*$ and implies that 
$$N(L^*) \subset X, \quad \text{where $D(L^*) =D(L) = X$}.$$
Applying the decomposition \eqref{eq1:27aug12}, we may write any $z \in Z$ as
\begin{align}\label{eq9:23july12}
&z = Lu+u^*,\quad \text{where $u\in D(L), u^* \in N(L^*)$},\\
&Lu = z-u^* \in Z \Rightarrow u \in X, \quad \text{therefore} \nonumber\\
&Z = R(L|_Z)\oplus N(L^*). \nonumber
\end{align}
Finally, we have that $\text{dim}N(L|_Z) = \text{dim}N(L) = \text{dim}N(L^*)$
and that 
\begin{align}
L:X \to Z, ~~X = D(L|_Z), ~~\text{is a Fredholm operator of index zero.}
\end{align}
 The decomposition \eqref{eq6:23july12} then implies that
\begin{align}\label{eq10:23july12}
X = N(L|_Z)\oplus (R(L^*)\cap X),
\end{align}
and so \eqref{eq9:23july12} and \eqref{eq10:23july12} constitute a Liapunov-Schmidt
decomposition of the spaces $X$ and $Z$ with respect to a given linear, elliptic operator
$L$.
\begin{remark}
As noted in \cite{HK04}, we may regard the spaces $W^{2,p}(\cM) \subset L^p(\cM) \subset L^2(\cM)$ for $p>2$, and we can then apply the above discussion to conclude that a linear elliptic operator $L:W^{2,p}(\cM) \to L^p(\cM)$ is
Fredholm and use this fact to obtain a Liapunov-Schmidt decomposition of $X= W^{2,p}(\cM)$ and
$Z = L^p(\cM)$.
Similarly, $C^{2,\alpha}(\cM) \subset C^{0,\alpha}(\cM)\subset L^2(\cM)$ for $\alpha\in (0,1)$, so 
$L:C^{2,\alpha}(\cM)\to C^{0,\alpha}(\cM)$ is Fredholm and we may also obtain a Liapunov-Schmidt decomposition of $X=C^{2,\alpha}(\cM)$ and $Z= C^{0,\alpha}(\cM)$
using \eqref{eq9:23july12} and \eqref{eq10:23july12}.
\end{remark}

\section*{Acknowledgments}\label{sec:app}

The authors wish to thank Niall O'Murchadha for a number
of helpful comments and key insights regarding the manuscript, 
as well as his overall encouragement and enthusiasm for this work.


\bibliographystyle{abbrv}
\bibliography{Mjh,Caleb,Caleb2,Caleb3}

\vspace*{0.5cm}

\end{document}